\documentclass[11pt]{article}
\pdfoutput=1

\usepackage{authblk}
\usepackage[margin=1in]{geometry}
\usepackage[T1]{fontenc}
\usepackage{mathpazo}
\usepackage{eucal}
\usepackage{dsfont}
\usepackage{amsmath}
\usepackage{amssymb}
\usepackage{amsthm}
\usepackage{framed}
\usepackage{mathtools}
\usepackage{authblk}
\usepackage{tikz}
\usetikzlibrary{arrows.meta,calc,shapes,positioning}
\usepackage{xspace}
\usepackage{longfbox}
\usepackage{enumitem}
\usepackage{ytableau}

\usepackage{pifont}
\usepackage{pdfpages}
\usepackage{physics}
\usepackage{hyperref}
\hypersetup{
  pdfpagemode=UseNone,
  colorlinks=true,
  citecolor=blue,
  linkcolor=blue,
  urlcolor=blue
}

\usepackage{qcircuit}
\usepackage{ytableau}
\definecolor{lightgray}{gray}{0.8}

\usepackage{cleveref}

\newtheorem{theorem}{Theorem}
\newtheorem*{theorem*}{Theorem}
\newtheorem{lemma}[theorem]{Lemma}
\newtheorem*{lemma*}{Lemma}
\newtheorem{claim}[theorem]{Claim}

\newtheorem{corollary}[theorem]{Corollary}
\newtheorem*{corollary*}{Corollary}

\theoremstyle{definition}
\newtheorem{definition}[theorem]{Definition}
\newtheorem{remark}[theorem]{Remark}

\newtheorem{algorithm}{Algorithm}

\newcommand{\smallket}[1]{| #1 \rangle}

\newcommand\C{\mathcal{C}}

\newcommand{\negl}{\mathrm{negl}}
                 
\definecolor{White}{rgb}{1,1,1}
\definecolor{Black}{rgb}{0,0,0}
\definecolor{LightGray}{rgb}{.81,.81,.81}
\colorlet{ChannelColor}{LightGray}
\colorlet{ChannelTextColor}{Black}
\colorlet{ReadoutColor}{White}

\newcommand{\Span}{\mathrm{Span}}
\newcommand{\QNC}{\mathsf{QNC}}

\usepackage[
    backend=biber,
    style=alphabetic,
    sorting=nyt,
    backref=true,
    maxbibnames=99
    ]{biblatex}
\DefineBibliographyStrings{english}{backrefpage = {page},backrefpages = {pages},mathesis = {MSc thesis},} \title{Anti-symmetrization is cheap and useful}

\author{Adam Bene Watts\thanks{University of Calgary. adam.benewatts1@ucalgary.ca}\qquad
John Bostanci\thanks{UC Berkeley \& Simons Institute. cbostanc@berkeley.edu}\qquad
Matthias C. Caro\thanks{University of Warwick. matthias.caro@warwick.ac.uk}\protect\\[0.6em]
Daniel Grier\thanks{University of California San Diego. dgrier@ucsd.edu}\qquad
Sihan Liu\thanks{University of California San Diego. sil046@ucsd.edu}\qquad
Gaurav Mahajan\thanks{Yale University and Toulouse School of Economics. gaurav.mahajan@yale.edu}}

\newcommand{\Uschur}{U_{\text{schur}}}

\date{}

\begin{document}
\maketitle
\vspace{-2em}

\begin{abstract}
We consider the computational task of anti-symmetrization of an unknown mixed quantum state: Given iid copies of a rank-$r$ qudit state $\rho=\sum_{i=1}^r \lambda_i \ket{\varphi_i}\bra{\varphi_i}$, prepare a copy of the state proportional to $\Pi_{\mathrm{anti}}^r (\ket{\varphi_1}\otimes \ldots\otimes\ket{\varphi_r})$, with $\Pi_{\mathrm{anti}}^r$ the anti-symmetric subspace projector.  
We first give a simple algorithm for anti-symmetrization that uses $\widetilde{O}(r / \lambda_{\min})$ copies of $\rho$, where $\lambda_{\min}$ is the minimum non-zero eigenvalue of $\rho$. This algorithm works even in the streaming setting with small working memory, and it uses only simple quantum operations like controlled-SWAPs and constantly many layers of single-qubit gates.
As an application of this streaming anti-symmetrization procedure, we give a reduction from gapped mixed-state one-way state generators to pure-state one-way state generators.
Then, using tools from Schur-Weyl duality, we give and analyze a second (non-streaming) anti-symmetrization procedure that uses $O(r / \lambda_{\min})$ copies of $\rho$, which then serves as a building block in new algorithms for weak Schur sampling and unitary Schur sampling, outperforming existing approaches in some regimes of number of copies $n$ and local dimension $d$, as well as in an algorithm for optimal purity amplification for qudit states.
\end{abstract}

\section{Introduction}

Mixedness is a fundamental property in quantum computation and information.  
In quantum learning theory, algorithms for tomography of mixed quantum states \cite{odonnell2016efficient, haah2017sample-optimal} have, until recently~\cite{pelecanos2025mixed}, taken on a form quite different from that of tomography algorithms for pure states \cite{bruss1999optimal}.  In quantum cryptography, various mixed-state versions of cryptographic primitives, like one-way state generators, seem to have fundamentally different hardness than their pure-state analogs~\cite{bostanci2025oracle,behera2024new,batra2024commitments}.  At the same time, at least conceptually, mixed quantum objects can almost always be purified into corresponding pure objects, for instance via purification for quantum states,  Stinespring's dilation theorem for quantum channels, or Naimark's dilation theorem for positive operator-valued measurements.  Informally, this idea has taken on the moniker ``the church of the larger Hilbert space''\footnote{This expression is attributed to John Smolin in \cite{gottesman2000quantum}.}.  The quest to understand the nature of mixedness in quantum computation has recently led to the discovery of the random purification channel~\cite{tang2026conjugate, ananth2026less, chen2026optimaltomographybosonicfermionic}, which combines the church of the larger Hilbert space with the church of the symmetric subspace \cite{harrow2013churchsymmetricsubspace}: many copies of a mixed state can be mapped to many copies of a \emph{random} purification of that mixed state.  

While the symmetric subspace has many interesting properties, its sibling, the anti-symmetric subspace, has received less attention and is sometimes overlooked. 
In this work, we present a different perspective on purifying mixed states, namely a sense in which quantum states can be canonically mapped into an \emph{anti-symmetric} pure state.  Switching from producing a symmetric state to an anti-symmetric state has certain (admittedly modest) advantages over the random purification channel, most notably that, when the number of anti-symmetrized registers reaches the rank of the mixed state, our algorithm deterministically produces a \emph{unique} pure state---the unique state in the anti-symmetric subspace of $r$ copies of an $r$-dimensional space.
The pure output state being unique and the same across all rank-$r$ states with the same support is what allows us to circumvent no-go results about universal purification \cite{liu2026universalpurificationquantummechanics}.

Our procedure also has a few tangible algorithmic benefits relative to works on random purification. For example, it can be implemented efficiently using simple tools like controlled-SWAP operations in quadratic time in the number of copies, as opposed to requiring complicated represen\-tation-theoretic transforms like the Schur transform. Additionally, one of our anti-symmetrization algorithms can be naturally implemented in the ``streaming'' model, where copies of the unknown mixed state are fed into the algorithm one copy at a time.  

\subsection{Results and discussion}

\paragraph{Anti-symmetrization algorithms.}
Our work revolves around the following anti-symmetrization problem: Given iid copies of an unknown rank-$r$ mixed qudit state $\rho$ with spectral decomposition $\rho = \sum_{i=1}^r \lambda_i \ket{\varphi_i}\bra{\varphi_i}$, prepare a copy of the pure state proportional to $\Pi_{\mathrm{anti}}^r (\ket{\varphi_1}\otimes \ldots\otimes\ket{\varphi_r})$. Here, $\Pi_{\mathrm{anti}}^r$ is the projector onto the anti-symmetric subspace of $(\mathbb{C}^d)^{\otimes r}$.
We refer to the pure state $\propto \Pi_{\mathrm{anti}}^r (\ket{\varphi_1}\otimes \ldots\otimes\ket{\varphi_r})$ as the anti-symmetric state over the support of $\rho$.

We present two algorithms for solving this anti-symmetrization problem.  
Both are based on a simple ``growth'' measurement---provided the first $k$ registers of our state are already in the anti-symmetric subspace, we measure the anti-symmetric subspace projector on the first $k+1$ registers---that allows us to iteratively increase the number of registers in the anti-symmetric subspace one register at a time.  
Clearly, if this measurement succeeds (accepts), the resulting state is then in the anti-symmetric subspace over $k+1$ registers.
To treat the reject case, we exploit a simple but crucial fact: even when the anti-symmetric subspace projector measurement on $k+1$ registers rejects (fails), we retain an anti-symmetric state in the initial $k$ registers. This allows us, in the failure case, to simply ignore the $(k+1)$st register and carry on growing the anti-symmetric state using fresh copies. 
The following theorem provides an informal statement of the guarantees for our anti-symmetric growth measurement.

\begin{theorem*}[Growing anti-symmetric states, informally]
    There is a measurement on $k+1$ registers that, if the first $k$ registers are in the anti-symmetric subspace, projects onto the support of $\Pi_{\mathrm{anti}}^{k+1}$ when it succeeds, and when it fails, outputs a state in the support of $\Pi_{\mathrm{anti}}^{k}$ on the first $k$ registers.  
    Furthermore, such a measurement can be implemented using two controlled-SWAP operations with a $\lceil\log(k+1)\rceil$-qubit control register, a single multi-controlled-$Z$ on $\lceil\log(k+1)\rceil$ qubits, and two layers of single-qubit gates on $\lceil\log(k+1)\rceil$ qubits.  
\end{theorem*}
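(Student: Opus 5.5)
The plan is to use the standard coset decomposition of the anti-symmetrizer. Write $S_{k+1}$ as the disjoint union of cosets $\tau_j S_k$, where $\tau_j = (j\;k{+}1)$ is a transposition for $j=1,\dots,k$ and $\tau_{k+1}=\mathrm{id}$. This gives
\begin{equation*}
\Pi_{\mathrm{anti}}^{k+1} = \frac{1}{k+1}\Bigl(\sum_{j=1}^{k+1} \mathrm{sgn}(\tau_j)\, \tau_j\Bigr)\,\Pi_{\mathrm{anti}}^{k}\otimes I .
\end{equation*}
Call $A = \frac{1}{k+1}\sum_j \mathrm{sgn}(\tau_j)\tau_j$, so that $A = I - \frac{1}{k+1}\sum_{j\le k}(I+\mathrm{SWAP}_{j,k+1})$. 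On states $\ket{\psi}$ in the support of $\Pi_{\mathrm{anti}}^k\otimes I$, we therefore get $\Pi_{\mathrm{anti}}^{k+1}\ket\psi = A\ket\psi$. Moreover $\Pi_{\mathrm{anti}}^k A \Pi_{\mathrm{anti}}^k = \Pi_{\mathrm{anti}}^{k+1}$ restricted there. So the goal reduces to implementing the two-outcome measurement $\{\Pi_{\mathrm{anti}}^{k+1},\ \Pi_{\mathrm{anti}}^k\otimes I - \Pi_{\mathrm{anti}}^{k+1}\}$ on that subspace.

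The circuit is a Hadamard-test / linear-combination-of-unitaries construction.
\begin{enumerate}
\item Prepare a control register of $m=\lceil\log(k+1)\rceil$ qubits in the uniform superposition $\frac{1}{\sqrt{k+1}}\sum_{j=1}^{k+1}\mathrm{sgn}(\tau_j)\ket{j}$, using one layer of single-qubit gates. If $k+1$ is not a power of two, this needs care; see below.
\item Apply the controlled permutation $\sum_j \ket j\bra j\otimes \tau_j$ (a controlled-SWAP of register $k{+}1$ with register $j$).
\item Apply the inverse preparation (the second layer of single-qubit gates) together with the multi-controlled-$Z$, which reflects about / postselects the control register on $\ket{0^m}$.
\item Measure the control: the outcome ``all zero'' is success.
\end{enumerate}
In the accept branch, the unnormalized post-measurement state is $\frac{1}{\sqrt{k+1}}\cdot\frac{1}{\sqrt{k+1}}\sum_j \mathrm{sgn}(\tau_j)\tau_j\ket\psi = A\ket\psi = \Pi^{k+1}_{\mathrm{anti}}\ket\psi$, as required.

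I actually expect the second controlled-SWAP to be used to uncompute. The cleaner variant is to apply the controlled-SWAP, do the multi-controlled-$Z$ (a reflection on the control) between Hadamard layers, and then undo the controlled-SWAP. This yields a projective measurement of the Hermitian operator $A$, provided $A$ acts as a projector on the relevant subspace. I will verify that $A$ restricted to $\mathrm{supp}(\Pi^k_{\mathrm{anti}}\otimes I)$ equals $\Pi^{k+1}_{\mathrm{anti}}$, which is Hermitian and idempotent. The measurement is then exactly projective with Kraus operators $\Pi^{k+1}_{\mathrm{anti}}$ and $\Pi^k_{\mathrm{anti}}\otimes I-\Pi^{k+1}_{\mathrm{anti}}$.

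The failure claim follows immediately. The reject outcome leaves $(\Pi^k_{\mathrm{anti}}\otimes I-\Pi^{k+1}_{\mathrm{anti}})\ket\psi$. This commutes with $\Pi^k_{\mathrm{anti}}\otimes I$, because $\Pi^{k+1}_{\mathrm{anti}}\le \Pi^k_{\mathrm{anti}}\otimes I$, so the first $k$ registers remain anti-symmetric. The control register is returned to $\ket{0^m}$ up to a known bit, since we uncomputed, so no junk entangles with the data.

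The main obstacle is the case when $k+1$ is not a power of two. A uniform superposition over $k+1$ indices then cannot be made by one layer of Hadamards. I would first pad to $2^m$ indices, mapping extra indices to the identity or to a discarded branch. The signs and weights then shift $A$ to $A'=\alpha I+\beta\sum\mathrm{sgn}(\tau_j)\tau_j$. I would check that this still has the right eigenspaces on $\mathrm{supp}(\Pi^k_{\mathrm{anti}}\otimes I)$, since the relevant operator has only eigenvalues corresponding to $\Pi^{k+1}$ and its complement. Amplitude bookkeeping then shows the success amplitude may be reduced. If so, I would use the multi-controlled-$Z$ as a reflection in a single round of exact amplitude amplification, which is where the specific gate count (two controlled-SWAPs, one multi-controlled-$Z$) would come from.
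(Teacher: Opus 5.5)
Your success branch is the paper's construction in \Cref{alg:anti-sym-growing}, an LCU over the coset representatives of \Cref{lem:anti_decomp}, apart from a sign slip. If you prepare the signed superposition and then apply the inverse of that same preparation, each sign appears twice and cancels. The $\ket{0}$ branch then carries $\frac{1}{k+1}\sum_j\tau_j$, which on the range of $\Pi^k_{\mathrm{anti}}$ equals $\frac{2}{k+1}\Pi^k_{\mathrm{anti}}-\Pi^{k+1}_{\mathrm{anti}}$. The paper avoids this by putting the sign into the controlled unitary, $U_A=\ketbra{0}{0}\otimes I-\sum_{i=1}^k\ketbra{i}{i}\otimes P_{(i,k+1)}$, and using the unsigned uniform superposition $W\ket{0}$.

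The genuine gap is the failure branch, which you assert rather than derive. In the plain LCU circuit the reject outcome leaves the data entangled with the ancilla. Already for $k=2$, the support of the reduced data state contains $\bigl(P_{(2,3)}-P_{(1,3)}\bigr)\ket{\psi}$, which is \emph{symmetric} under $(1\,2)$, so the first $k$ registers do not stay anti-symmetric. Your ``cleaner variant'' (controlled-SWAP, $W^\dagger$, multi-controlled-$Z$, $W$, inverse controlled-SWAP) does not fix this. It is the unitary $U_A\bigl((I-2W\ketbra{0}{0}W^\dagger)\otimes I\bigr)U_A$, which maps $W\ket{0}\otimes\ket{\psi}$ to $W\ket{0}\otimes(I-2\Pi^{k+1}_{\mathrm{anti}})\ket{\psi}$: a reflection that produces no measurement outcome at all.

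What is missing is step 6 of \Cref{alg:anti-sym-growing}. Measure $\{\ketbra{0}{0},I-\ketbra{0}{0}\}$ on the ancilla after $W^\dagger$, and on failure apply $W$ and then $U_A$ again. The failure operator is then $I-U_A(W\ketbra{0}{0}W^\dagger\otimes I)U_A$. The key identity is $U_A\bigl(W\ket{0}\otimes\Pi^{k+1}_{\mathrm{anti}}\ket{\psi}\bigr)=W\ket{0}\otimes\Pi^{k+1}_{\mathrm{anti}}\ket{\psi}$, which holds because each $-P_{(i,k+1)}$ acts as the identity on the $(k+1)$-register anti-symmetric subspace. It shows the failure state is $W\ket{0}\otimes(I-\Pi^{k+1}_{\mathrm{anti}})\ket{\psi}$, disentangled from the ancilla. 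Only at that point does your commutation argument finish the proof. Controlling your reflection on an extra qubit in $\ket{+}$ and reading that qubit out would also work.

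Separately, drop the padding plan. Padding with identities gives the success operator $\Pi^{k+1}_{\mathrm{anti}}+\frac{2^m-k-1}{2^m}\bigl(\Pi^k_{\mathrm{anti}}-\Pi^{k+1}_{\mathrm{anti}}\bigr)$ on the relevant subspace, which is not a projector. One round of amplitude amplification acts on these singular values as $x\mapsto 3x-4x^3$, which enlarges the spurious eigenvalue instead of killing it, and it costs extra controlled-SWAPs. The paper simply takes $W$ to be any unitary preparing the uniform superposition on $k+1$ levels, with no padding.
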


As an aside, we show the above algorithm can be implemented in constant depth (i.e., in the class $\mathsf{QNC}^{0}_{f}$), possibly making it more practical for near term quantum computers.

We use the growth measurement as the core building block in our two algorithms for anti-symmetrization. The first of our two algorithms uses a small quantum memory of $O( r\log(d))$ qubits, by simply discarding copies of the unknown state on which the growth measurement failed. This, however, is wasteful from a sample complexity perspective. Our second algorithm uses a larger quantum memory but saves a factor of $\log(r)$ in the sample complexity, by re-using copies on which the growth measurement failed in earlier rounds to build the largest-possible anti-symmetric state from the available copies. We first state the guarantee achieved by our low-memory procedure.  

\begin{theorem*}[Anti-symmetrization without copy re-use, informally]
    There is a ``streaming-like'' algorithm that, given copies of an unknown rank-$r$ state $\rho$ with smallest non-zero eigenvalue $\lambda_{\min}$, uses at most $O(r\log(d))$ qubits of working memory, requests a total of $n = O(r \log(r) / \lambda_{\min})$ copies of $\rho$, and prepares the unique anti-symmetric state over the support of $\rho$ with success probability $0.99$.  
\end{theorem*}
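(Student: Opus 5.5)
The algorithm keeps a $k$-register working state and grows it one register at a time using the growth measurement from the informal theorem on growing anti-symmetric states. Start with one fresh copy of $\rho$, so $k=1$; a single register is trivially anti-symmetric. At each step, request a fresh copy of $\rho$ into register $k+1$ and apply the growth measurement. On acceptance set $k\gets k+1$. On rejection, discard register $k+1$ and keep the first $k$ registers, which by that theorem still lie in the range of $\Pi_{\mathrm{anti}}^{k}$. Stop once $k=r$. The memory is $(r+1)\lceil\log d\rceil$ qubits for the data plus $O(\log r)$ control qubits, which is $O(r\log d)$.

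\textbf{Correctness.} Every register is drawn from $\rho$, so the whole state is supported on $\mathrm{supp}(\rho)^{\otimes k}$. The anti-symmetric subspace of $r$ copies of an $r$-dimensional space is one-dimensional. Hence once $k=r$ the state must equal the unique anti-symmetric state $\propto \Pi_{\mathrm{anti}}^r(\ket{\varphi_1}\otimes\cdots\otimes\ket{\varphi_r})$.

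\textbf{Success probability.} The key quantity is the acceptance probability at level $k$, and I would show it is at least roughly $(r-k)\lambda_{\min}/(k+1)$, uniformly over whatever anti-symmetric state $\sigma_k$ the first $k$ registers hold.

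1. The acceptance probability is $\operatorname{tr}\bigl[\Pi_{\mathrm{anti}}^{k+1}(\sigma_k\otimes\rho)\bigr]$. Write $\Pi_{\mathrm{anti}}^{k+1}=\frac{1}{k+1}\sum_j \mathrm{sgn}\cdot(\text{transposition } (j,k{+}1))\cdot(\Pi_{\mathrm{anti}}^{k}\otimes I)$. For $\sigma_k$ supported on $\mathrm{anti}^k$ this gives
\[
p=\tfrac{1}{k+1}\Bigl(\operatorname{tr}\rho-k\operatorname{tr}\bigl[(\sigma_k\otimes\rho)\,\mathrm{SWAP}_{k,k+1}\bigr]\Bigr)=\tfrac{1}{k+1}\bigl(1-k\operatorname{tr}[\sigma_k^{(1)}\rho]\bigr),
\]
where $\sigma_k^{(1)}$ is the one-register marginal of $\sigma_k$.

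2. An anti-symmetric state on $k$ registers has a one-body marginal with operator norm at most $1/k$; this is the Pauli principle, $k\sigma^{(1)}\le I$. Also $\sigma_k^{(1)}$ is supported on $\mathrm{supp}(\rho)$ and has trace $1$.

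3. Minimising $\operatorname{tr}[\sigma^{(1)}\rho]$ subject to these constraints, the mass is pushed onto the largest eigenvalues, so $k\operatorname{tr}[\sigma_k^{(1)}\rho]\le$ the sum of the $k$ largest $\lambda_i$. Therefore
\[
p\ge \frac{1}{k+1}\sum_{i>k}\lambda_i\ge \frac{(r-k)\lambda_{\min}}{k+1}\ge \frac{(r-k)\lambda_{\min}}{r}.
\]

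\textbf{Copy count.} Since the lower bound holds for every $\sigma_k$, the waiting time at level $k$ is stochastically dominated by a geometric random variable with mean at most $r/((r-k)\lambda_{\min})$. Summing over $k=1,\dots,r-1$, the expected number of copies is at most $\frac{r}{\lambda_{\min}}H_{r-1}=O(r\log r/\lambda_{\min})$. Markov's inequality with a factor of $100$ then gives success probability $0.99$ within $n=O(r\log r/\lambda_{\min})$ copies.

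\textbf{Main obstacle.} The delicate step is the uniform lower bound on $p$. It needs the Pauli-type bound $\sigma^{(1)}\le I/k$, which I would prove by writing $\sigma^{(1)}=\frac1k\sum_i(\text{marginals})$ and using that $\mathrm{anti}^k\subseteq\mathrm{anti}^{k-1}\otimes\cdot$. It also needs care that the post-rejection state is an arbitrary, adversarially shifted anti-symmetric state rather than the "ideal" one; this is handled because the bound in step 3 holds for every $\sigma_k$.
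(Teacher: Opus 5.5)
Your proof is correct. The algorithm is the paper's: a growth measurement on each fresh copy, discarding rejected ones, in the variant of the paper's remark that stops at $k=r$. Your analysis of the acceptance probability, however, takes a different route.

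The paper decomposes $\rho$ into eigenvectors and asserts that the register always holds a mixture of Slater determinants $\ket{\Phi^S}\propto\Pi^k_{\mathrm{anti}}\bigotimes_{s\in S}\ket{\varphi_s}$ with $S\subseteq[r]$. It then computes exactly that a fresh $\ket{\varphi_j}$ is accepted with probability $0$ if $j\in S$ and $1/(k+1)$ otherwise.

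You instead obtain a bound that is uniform over all anti-symmetric $\sigma_k$, using the swap trick, the Pauli bound $\sigma_k^{(1)}\le I/k$, and a Ky Fan maximisation. Your $\frac{1}{k+1}\sum_{i>k}\lambda_i$ is exactly the worst case over $S$ of the paper's $\frac{1}{k+1}\sum_{j\notin S}\lambda_j$. This frees you from the structural claim about the register. The price is that you must prove the Pauli bound. The cleanest way is to note that $N_v=\sum_{i=1}^k(\ketbra{v}{v})_i$ has eigenvalues in $\{0,1\}$ on the anti-symmetric subspace, so $k\bra{v}\sigma_k^{(1)}\ket{v}=\operatorname{tr}[N_v\sigma_k]\le 1$. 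In step 3 you mean \emph{maximising} $\operatorname{tr}[\sigma^{(1)}\rho]$, which is the worst case for $p$, not minimising.

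For the copy count, the paper dominates the process by a coupon collector and applies a union bound. This gives failure probability at most $re^{-n\lambda_{\min}/r}$, i.e.\ $O(\frac{r}{\lambda_{\min}}\log(r/\delta))$ copies for failure $\delta$. Your Markov argument gives $O(\frac{r\log r}{\lambda_{\min}\delta})$. Both suffice for $0.99$. The exponential tail is what the paper's one-way state generator application needs (negligible failure from polynomially many copies), and your per-step bound, holding conditionally on any history, supports the same coupon-collector tail equally well.

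Finally, stopping at $k=r$ presumes $r$ is known, and this is unnecessary. At $k=r$ every further growth measurement rejects with certainty and leaves the register unchanged, because $\Pi^{r+1}_{\mathrm{anti}}$ vanishes on $\mathrm{supp}(\rho)^{\otimes(r+1)}$. Memory also stays at $r+1$ registers.
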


The algorithm is ``streaming-like'' in the sense that it only ever acts on one fresh copy of $\rho$ at a time.
It keeps its working memory low by discarding registers that fail the growth measurement.
The analysis of this algorithm straightforwardly reduces to bounding the success probability of a coupon collector.  
However, it is in fact possible to ``re-use'' registers that failed the growth measurement in earlier rounds. This leads to our second algorithm for anti-symmetrization.

\begin{theorem*}[Anti-symmetrization with copy re-use, informally]
    There is an algorithm that, given copies of an unknown rank-$r$ state $\rho$ with smallest non-zero eigenvalue $\lambda_{\min}$, requests $n = O(r / \lambda_{\min})$ copies $\rho$ and prepares the unique anti-symmetric state over the support of $\rho$ with success probability $0.99$. 
\end{theorem*}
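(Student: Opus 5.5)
Throughout, take $n$ copies $\rho^{\otimes n}$ and write $\rho=\sum_i\lambda_i\ket{\varphi_i}\bra{\varphi_i}$. The first step is to reduce to a classical sampling problem. Expand each copy in the eigenbasis, so that $\rho^{\otimes n}$ is a mixture over strings $(i_1,\dots,i_n)\in[r]^n$, drawn iid from $\lambda$, of the product states $\ket{\varphi_{i_1}}\cdots\ket{\varphi_{i_n}}$.

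The algorithm is the greedy re-use procedure. We maintain a set $S$ of registers whose joint state lies in the anti-symmetric subspace. For each candidate register, we apply the growth measurement of the informal growth theorem to $S$ together with that candidate. On success, we add the candidate to $S$. On failure, we set the candidate aside, but keep it available for later attempts.

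Our goal is to argue that, conditioned on the classical string, the procedure behaves deterministically on the span of the product states. The key observation concerns a product state whose labels on $S\cup\{j\}$ are distinct. Its anti-symmetrization is nonzero, and the growth measurement then succeeds with some probability. However, success alone does not certify correctness for mixtures, so we need a cleaner invariant.

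A better route is to track the multiset of labels directly. The anti-symmetric projector kills any product whose labels repeat. Hence the state on $S$ after $k$ successes is a mixture over label sets $T\subseteq[r]$ with $|T|=k$, each term being the Slater determinant $\ket{\wedge_{i\in T}\varphi_i}$.

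Once $k=r$, only $T=[r]$ is possible, so the output is the unique target state. Correctness on success is therefore automatic. All the work lies in bounding the number of copies needed to reach $|S|=r$.

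To bound the copies, we analyze the growth probabilities. Suppose $S$ holds the Slater determinant on a label set $T$, and we adjoin a fresh copy of $\rho$. The probability of success is
\[
\mathrm{tr}\bigl[\Pi^{k+1}_{\mathrm{anti}}(\ket{\wedge_T}\bra{\wedge_T}\otimes\rho)\bigr]=\frac{1}{k+1}\sum_{i\notin T}\lambda_i .
\]
This follows because anti-symmetrizing $\ket{\wedge_T}\otimes\ket{\varphi_i}$ has norm squared $1/(k+1)$ when $i\notin T$, and is zero otherwise.

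The factor $1/(k+1)$ is the crux. Without re-use, this factor combined with the coupon collector bound gives the $O(r\log r/\lambda_{\min})$ bound of the streaming theorem. A naive version would even be worse, so the growth measurement must be normalized differently.

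The plan is to instead use the growth measurement as implemented by the paper. I expect it acts on $\ket{\wedge_T}\otimes\ket{\varphi_i}$ by succeeding with probability $1$ whenever $i\notin T$, with the $1/(k+1)$ absorbed into the controlled-SWAP construction.

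Under that hypothesis, the re-use procedure works as follows. A rejected register collapses to a component inside $\mathrm{span}\{\varphi_i:i\in T\}$. When a later success changes $T$ to a larger set $T'$, that old register still carries its label and is not in $T'$ only if its label is new. So after each success we retry every stored register; a stored register can only succeed if its label is outside the current set.

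The main obstacle is showing that this retrying gives a clean reduction. Specifically, we must show that retrying stored registers is equivalent to having sampled labels in advance. The precise claim is that success occurs if and only if the label set $\{i_1,\dots,i_n\}$ covers $[r]$, possibly up to a constant-factor loss in probability. I would attempt this via Schur–Weyl duality. The maximum anti-symmetric content extractable from $\rho^{\otimes n}$ is governed by weak Schur sampling: we need the Young diagram to have $r$ rows. Reaching $r$ nonzero rows needs only $n=O(r/\lambda_{\min})$, since the last row length concentrates around $n\lambda_{\min}$ minus fluctuations.

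Concretely, I would show that the greedy retry procedure extracts a full $r$-column anti-symmetric factor whenever the $r$th row of $\lambda$ is nonzero. Then I would bound $\Pr[\ell(\lambda)<r]\le 0.01$ for $n=Cr/\lambda_{\min}$ using known concentration for Schur–Weyl measures, namely that row $r$ stochastically dominates a quantity of order $n\lambda_{\min}-O(\sqrt{n r})$.
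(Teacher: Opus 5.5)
Your skeleton matches the paper's: show that re-use extracts an anti-symmetric state on $\lambda_1'$ qudits from the $\lambda$-isotypic component, then show that Schur--Weyl sampling of $\rho^{\otimes n}$ gives $\lambda_1'=r$ with high probability. However, neither half is actually supplied, and your route toward the first half rests on a false premise. The growth measurement accepts with probability exactly $\bra{\psi}\Pi^{k+1}_{\mathrm{anti}}\ket{\psi}$. On $\ket{\wedge_T}\ket{\varphi_i}$ with $i\notin T$ this is $1/(k+1)$, and nothing in the controlled-SWAP circuit absorbs that factor. Consequently a rejected register does not collapse into $\mathrm{span}\{\varphi_t : t\in T\}$. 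Up to signs, the post-failure state is $\frac{k}{k+1}\ket{\wedge_T}\ket{\varphi_i}+\frac{1}{k+1}\sum_{t\in T}\pm\ket{\wedge_{T\setminus\{t\}\cup\{i\}}}\ket{\varphi_t}$. This is a coherent superposition in which the stored register's label lies outside the current label set in every branch, yet retrying it immediately fails with certainty. So classical label bookkeeping cannot analyze re-use; under your hypothesis, re-use would even be useless. Likewise, ``success iff the labels cover $[r]$'' is the wrong event. For $n=r=2$, $\ket{\varphi_1}\ket{\varphi_2}$ covers both labels but has weight $1/2$ on the symmetric component, which no procedure acting through $\mathbb{C}[S_2]$ can anti-symmetrize. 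Re-use does not win by renormalizing the measurement. It wins because failed attempts cost time rather than copies, so the copy count enters only through the requirement $\lambda_1'=r$.

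The missing ingredient is \Cref{lem:b_growth}. Because the procedure acts only through $\mathbb{C}[S_n]$, it preserves isotypic components. Choosing the candidate uniformly at random makes the acceptance probability $\bra{\psi}B^S\ket{\psi}$ with $B^S=\frac{1}{k+1}I-\frac{1}{(n-k)(k+1)}\bigl(T([n])-T(S)-T([n]\setminus S)\bigr)$. On the $\lambda$-isotypic component, $T([n])$ acts as the content sum $c(\lambda)$ and $T(S)$ acts as $-k(k-1)/2$. By Pieri's rule, the complement is supported on $\nu$-components with $\nu$ obtained from $\lambda$ by deleting one box in each of $k$ distinct rows. Comparing contents gives $\bra{\psi}B^S\ket{\psi}\ge\frac{\lambda_1'-k}{(n-k)(k+1)}$ whenever $k<\lambda_1'$. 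A coupon-collector argument over $O(nd\log(d/\epsilon))$ growth attempts then reaches $k=\lambda_1'$ with probability $1-\epsilon$. Your deterministic ``retry every stored register after each success'' schedule comes with no per-step guarantee of this kind.

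Your second half also fails quantitatively as stated. At $n=Cr/\lambda_{\min}$ you have $n\lambda_{\min}=Cr$, while $\sqrt{nr}=r\sqrt{C/\lambda_{\min}}\ge r\sqrt{Cr}$ because $\lambda_{\min}\le 1/r$. So the lower bound $n\lambda_{\min}-O(\sqrt{nr})$ on the $r$th row is vacuous once $r\gtrsim C$, and it would only yield $n=O(r/\lambda_{\min}^2)$. You only need $\lambda_1'=r$, not a long $r$th row. For instance, by RSK and Greene's theorem this is the event that a random word with letters drawn from the spectrum contains a strictly decreasing subsequence through all $r$ letters. Its greedy waiting time is a sum of $r$ independent geometric variables with parameters at least $\lambda_{\min}$. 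That gives failure probability $\exp(-\Omega(n\lambda_{\min}))$ for $n\ge Cr/\lambda_{\min}$, which is the kind of geometric-sum and concentration argument the paper cites.
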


This second anti-symmetrization algorithm achieves its improved sample complexity by keeping the copies on which the growth measurement failed in a big pool of states and sometimes, instead of processing a fresh copy, re-applying the growth measurement on a register from that pool.
The analysis of the algorithm's success probability is more involved and requires tools from the representation theory of the symmetric group (specifically, Pieri's rule).
It also more clearly reveals the underlying advantage of the anti-symmetrization with re-use vs the one that discards failed copies: The former builds the largest representation-theoretically possible anti-symmetric state from the available $n$ copies, no matter the value of $n$; the latter does so only for sufficiently large $n$. This feature is crucial when applying our anti-symmetrization with copy re-use to develop Schur sampling routines.

\paragraph{From anti-symmetrization to Schur sampling and purity amplification.}
We apply anti-symme\-tri\-za\-tion with copy re-use to give alternative algorithms for weak and unitary Schur sampling as well as for purity amplification of qudit states.  
To describe Schur sampling, recall from Schur-Weyl duality that  
$\left(\mathbb{C}^d\right)^{\otimes n} \simeq \bigoplus_{\lambda \,\vdash_d \,n} V_\lambda \otimes W_\lambda$,
where each $W_\lambda$ is an irreducible module of $\operatorname{GL}_d$ and each $V_\lambda$ is an irreducible module of $S_n$. Here, the index $\lambda$ runs over partitions of $n$ into (at most) $d$ parts. Fixing a Schur transform $\Uschur: 
\left(\mathbb{C}^d\right)^{\otimes n} \mapsto \bigoplus_{\lambda \,\vdash_d \,n} V_\lambda \otimes W_\lambda
$, we call $\Uschur^\dagger(V_\lambda \otimes W_\lambda)$ the $\lambda$-isotypic component, and we denote the projector onto it by $\Pi_\lambda$. Then, $\delta$-approximate weak Schur sampling is the following task: Given an input state $\rho\in\mathcal{L}\left(\left(\mathbb{C}^d\right)^{\otimes n}\right)$, sample a partition $\lambda$ from a probability distribution $(p_\lambda)_{\lambda \,\vdash_d \,n}$ such that $\sum_{\lambda \, \vdash \,n} \left|p_\lambda - \text{Tr}[\Pi_\lambda \rho]\right| \leq \delta$.
In approximate unitary Schur sampling, in addition to sampling a partition $\lambda$ as in approximate weak Schur sampling, we require the algorithm to also output a state that approximately recovers the reduced state on the $W_\lambda$ module (see \Cref{subsec:schur-sampling-purification} for a detailed formulation). We use our anti-symmetrization with copy re-use to show:

\begin{corollary*}[Weak and unitary Schur sampling from anti-symmetrization, informally]
    Using anti-symmetrization, we can perform weak and unitary Schur sampling to accuracy $\delta$ with $\widetilde{O}(n^2 d\log(1/\delta))$ controlled-SWAP operations and $T$-count $\widetilde{O}(n^2 d\log^2(1/\delta))$.
\end{corollary*}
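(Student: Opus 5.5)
The plan is to reduce weak Schur sampling to a sequence of anti-symmetric projector measurements via the column structure of Young diagrams, and then to cost out each growth measurement using the Growing anti-symmetric states theorem. The guiding fact is that $\lambda$ is determined by its conjugate $\lambda' = (\lambda'_1, \lambda'_2, \dots)$, the sequence of column lengths. By Pieri's rule (the dual version, adding a vertical strip), the largest anti-symmetric register set that can be grown from the $n$ copies inside the $\lambda$-isotypic component has size exactly $\lambda'_1$. The anti-symmetrization with copy re-use theorem states that its algorithm builds the largest representation-theoretically possible anti-symmetric state from the available copies, for any $n$. Running it on the input therefore produces an anti-symmetric state on $\lambda'_1$ registers, and $\lambda'_1$ is distributed as under $\Pi_\lambda$ measurements. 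The leftover registers are then approximately in the isotypic component of the partition $\lambda$ with its first column removed.

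First I would make this precise. I would show that the procedure acts on $V_\lambda\otimes W_\lambda$ as an isometry that is $S_n$-covariant and commutes with the $\operatorname{GL}_d$ action. By Pieri, $W_\lambda \subseteq \wedge^{\lambda'_1}\otimes W_{\lambda-\text{col}}$, and because $\lambda'_1$ is the maximal column length, the only admissible branching occurs in the $\bar\lambda$ sector, where $\bar\lambda$ denotes $\lambda$ with its first column removed. The peeled registers are thus in the $\bar\lambda$-isotypic component, up to the failure probability $\delta'$ of the re-use procedure. I expect this step to be the main obstacle: the re-use algorithm has to saturate Pieri's bound \emph{within each isotypic block}, not merely on average. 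I would first try to prove, by induction on the growth steps using the branching rule, that a rejected growth measurement never loses rank-$\lambda'_1$ capacity. Making the failure probability $\delta'$ rather than a constant requires $O(\log(1/\delta'))$ repetitions or passes, which is where a $\log(1/\delta)$ factor enters.

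Next I would recurse: peel off a column, record its length, and repeat on the remaining $n-\lambda'_1$ registers. There are at most $\lambda_1\le n$ columns. Taking $\delta'=\delta/n$ and applying a union bound, the total variation error of the sampled $\lambda$ is at most $\delta$. For unitary Schur sampling, I would note that the peeled anti-symmetric states, taken together, carry the $W_\lambda$ data through the Pieri embedding, so outputting them (projected onto the highest-weight-compatible component) recovers the reduced state on $W_\lambda$ to the same accuracy.

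Finally, I would do the cost accounting. Each growth measurement on $k+1$ registers costs two controlled-SWAP layers, each of size $O(k)$ on $d$-dimensional registers. That is $\widetilde O(k\log d)$ qubit controlled-SWAPs, or $O(kd)$ in a qudit-by-qudit count. The copy re-use analysis bounds the number of growth attempts per column by $\widetilde O(n\log(1/\delta))$. Summing over columns with $\sum_j \lambda'_j=n$ gives $\widetilde O(n^2 d\log(1/\delta))$ controlled-SWAPs. For the $T$-count, the single-qubit layers prepare superpositions with amplitudes $1/\sqrt{k+1}$ over $\lceil\log(k+1)\rceil$ qubits. Synthesizing these to precision $\delta/\mathrm{poly}(n)$ by Ross–Selinger costs $O(\log(1/\delta))$ $T$ gates per rotation, and multiplying by the gate count gives the extra $\log(1/\delta)$ factor in the $T$-count.
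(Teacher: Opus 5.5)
Your architecture matches the paper's: peel off one column at a time with the copy re-use procedure at $\epsilon=\delta/2n$, use Pieri to place the leftover registers in the isotypic component of $\lambda$ minus its first column, recurse, and union bound over $\lambda_1\le n$ columns. The genuine gap is the cost accounting. Your final bound comes out right only because two errors cancel.

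First, a growth measurement on $k+1$ qudits uses $2k$ controlled-SWAPs of $d$-dimensional registers. That is $O(k\log d)$ qubit-level controlled-SWAPs; there is no meaningful count in which it costs $O(kd)$.

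Second, nothing in the re-use analysis gives $\widetilde O(n\log(1/\delta))$ growth attempts per column. The per-step guarantee is only $\Pr[k\to k+1]\ge\frac{\lambda_1'-k}{(n-k)(k+1)}\ge\frac{\lambda_1'-k}{n\lambda_1'}$. This bound is attained on some Pieri branches, e.g.\ when the removed vertical strip occupies the top $k$ rows and all remaining rows have length one. The coupon-collector argument therefore needs about $n\lambda_1'\log(\lambda_1'/\epsilon)$ attempts. Since $\lambda_1'$ is unknown and only $\lambda_1'\le d$ is available, the paper runs $\lceil nd\log(d/\epsilon)\rceil$ attempts per column. That, not the swap cost, is where the factor $d$ comes from. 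The correct count is $\sum_i 2\lambda_i'\lceil nd\log(2nd/\delta)\rceil=2n\lceil nd\log(2nd/\delta)\rceil$.

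For the $T$-count, each qudit controlled-SWAP contributes $O(\log d)$ $T$ gates. The single-qubit rotations must be synthesized to precision $\delta$ divided by the total number of growth steps; the paper takes $\eta^{-1}=\lceil nd\log(2nd/\delta)\rceil/\delta$, not $\delta/\mathrm{poly}(n)$.

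You flag per-column saturation as the main obstacle, but your proposed invariant does not address it. Staying in the $\lambda$-isotypic block is automatic, because every Kraus operator of the procedure lies in the image of $\mathbb{C}[S_n]$; note the procedure is an instrument, not an $S_n$-covariant isometry. What you actually need is a growth-probability lower bound that holds uniformly for every state reachable after arbitrary failures, plus the fact that growth is impossible at $k=\lambda_1'$.

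The paper gets both by choosing the candidate uniformly at random. The success probability is then $\langle\psi|B^S|\psi\rangle$, with $B^S=\frac{1}{k+1}I-\frac{1}{(n-k)(k+1)}\bigl(T([n])-T(S)-T([n]\setminus S)\bigr)$. These transposition class sums are evaluated as content sums on the Pieri decomposition of the complementary registers.

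For unitary Schur sampling, the missing fact is that the column anti-symmetrizer $D_\lambda(A(\lambda_1',\dots,\lambda_{\lambda_1}'))$ is a rank-one projector on $V_\lambda$, proved by induction on the number of columns via Pieri. This pins the $V_\lambda$ factor of the output to a fixed $\ket{0_\lambda}$, independent of the measurement record. Because every Kraus operator acts as $K_r\otimes I_{W_\lambda}$, the $W_\lambda$ factor is then exactly $\mathrm{Tr}_{V_\lambda}[\Pi_\lambda\rho^{\otimes n}\Pi_\lambda]$. No extra ``highest-weight-compatible'' projection is needed, and the Pieri embedding alone does not give this form.
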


Prior approaches to constant-accuracy weak and unitary Schur sampling for general $d$, either using the full Schur transform \cite{bacon2005quantum, bacon2006efficientquantumcircuitsschur, Krovi2019efficienthigh, burchardt2025highdimensionalquantumschurtransforms} or aiming directly at weak/unitary Schur sampling via Clebsch-Gordan transforms \cite{cervero2023weakschursamplinglogarithmic}, have gate complexities of the form $\widetilde{O}(n^a d^b)$ with integers $a\geq 1$ and $b\geq 0$ satisfying $a+b\geq 4$. Our procedure achieves $a=2$ and $b=1$, so $a+b=3$, thus improving the overall complexity in some regimes of $n$ and $d$. For $d=2$, \cite{brahmachari2025optimal}'s unitary Schur sampling algorithm using $\widetilde{O}(n)$ random SWAP tests outperforms ours.

Motivated by the need to combat noise in quantum systems, Li et al.~\cite{li2024optimal} introduced the task of purity amplification: Given $n$ iid copies of a mixed qudit state $\rho$, output a state $\rho'$ that is close to achieving the maximum overlap with the top eigenvector of $\rho$. We develop an anti-symmetrization-based procedure that serves as a simplification of the optimal quantum purity amplification protocol of \cite{li2024optimal} while producing the same output state.

\begin{corollary*}[Purity amplification from anti-symmetrization, informally]
    Using anti-symmetrization, we can achieve optimal purity amplification to accuracy $\delta$ with $\widetilde{O}(n^2 d\log(1/\delta))$ controlled-SWAP operations and $T$-count $\widetilde{O}(n^2 d\log^2(1/\delta))$.
\end{corollary*}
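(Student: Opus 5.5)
The plan is to reduce the statement to two earlier results. The first is anti-symmetrization with copy re-use. The second is a standard characterization of the state that optimal purity amplification should output. From \cite{li2024optimal}, the optimal protocol on $\rho^{\otimes n}$ performs weak Schur sampling to obtain $\lambda$. It then applies an equivariant channel on the $\lambda$-isotypic component that produces a single-qudit output. The single-qudit output is determined by the reduced state on $W_\lambda$, namely the partial trace of the highest-weight part. The first step is to re-express this output in terms of the unitary Schur sampling primitive from the preceding corollary. Concretely, I would show that the optimal output state equals a fixed equivariant post-processing of the state that unitary Schur sampling returns on $W_\lambda$. Since that post-processing depends only on $\lambda$, it can be chosen as a map that acts only on the recovered $W_\lambda$ register. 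That map is essentially a single-register extraction that is symmetric under $\mathrm{GL}_d$ covariance.

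Next I would make the pipeline concrete through anti-symmetrization. Anti-symmetrization with re-use builds anti-symmetric states on column blocks whose lengths are the column lengths of $\lambda$. This is the maximal construction guaranteed by Pieri's rule. The resulting collection of anti-symmetric blocks spans a copy of $W_\lambda$ sitting inside $\bigwedge^{\lambda'_1}\otimes\cdots\otimes\bigwedge^{\lambda'_{\lambda_1}}$. The purity-amplified output can then be obtained without a Schur transform, as follows:
\begin{enumerate}
\item Take the first-column (or first-row) structure of the blocks and keep one register.
\item Apply the same growth-measurement machinery, now run in the symmetric direction across blocks, to project onto the top-weight component.
\item Trace out all remaining registers.
\end{enumerate}
To check that this produces the same state as \cite{li2024optimal}, I would rely on equivariance under $U^{\otimes n}$ and $S_n$ together with Schur's lemma. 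Any two $\mathrm{GL}_d$-covariant channels from $W_\lambda$ to a single qudit that agree on the highest-weight vector must agree everywhere. So it suffices to compare the two channels on $\ket{\varphi_1}^{\otimes\lambda_1}\otimes\cdots$.

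The complexity claim then follows by accounting. Anti-symmetrization with re-use on $n$ copies uses $O(n)$ growth measurements, each on at most $n$ registers. By the growing-anti-symmetric-states theorem, each growth measurement costs controlled-SWAPs on $O(\log n)$ control qubits, and each controlled-SWAP decomposes into $O(d)$ elementary controlled-SWAPs on qubit-encoded qudits. This gives $\widetilde O(n^2 d)$ operations per run. To reach accuracy $\delta$, I would repeat or boost with $O(\log(1/\delta))$ overhead. The single-qubit rotations used in the growth measurement (a Fourier-like preparation over $k+1$ branches) must each be synthesized to precision $\mathrm{poly}(\delta/n)$. That synthesis contributes a further $\log(1/\delta)$ factor to the $T$-count.

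The main obstacle is the first step: certifying that the anti-symmetrization-based output coincides exactly with the optimal protocol's output. This is where the actual representation-theoretic work lies, and I would settle it by the highest-weight-vector comparison described above. A secondary difficulty is controlling how errors in the approximate weak Schur sampling distribution propagate to the final output. I would handle that by the triangle inequality, using the $\delta$ bound in total variation.
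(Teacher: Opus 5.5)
Your proposal has a genuine gap, and it sits at the step you yourself flag: matching the optimal output of \cite{li2024optimal}. The protocol you sketch would not reproduce that output.

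\textbf{Which qudit is returned, and why the extra step fails.} You never fix which qudit is returned. The extra step of running growth measurements ``in the symmetric direction across blocks to project onto the top-weight component'' is either ill-defined or harmful.
\begin{itemize}
\item A highest-weight space relative to a fixed basis is not a covariant object, since the eigenbasis of $\rho$ is unknown.
\item If you mean the Cartan component $W_\lambda$ of $\bigwedge^{\lambda_1'}\otimes\cdots\otimes\bigwedge^{\lambda_{\lambda_1}'}$, the state already lives there, because \Cref{alg:sample-efficient-antisym} never leaves the $\lambda$-isotypic component.
\item A row-symmetrizing measurement across blocks succeeds only probabilistically. It generally replaces the fixed vector in $V_\lambda$ by a different one, which changes the one-qudit marginals.
\item Keeping a qudit of the \emph{first} block would be the worst choice. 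For $n\gtrsim r/\lambda_{\min}$ that block is the anti-symmetric state over the support of $\rho$, whose one-qudit marginal is maximally mixed on the support.
\end{itemize}
Your covariance fact (covariant channels that agree on a highest-weight vector agree everywhere) is true. But it proves nothing here, because your channel is unspecified and you never compute either side on a highest-weight vector.

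\textbf{The missing idea.} Nothing beyond picking the right qudit is needed. By Young's orthogonal form, the column-ordered Young--Yamanouchi vector $\ket{\lambda^\diamond}$ of \cite{li2024optimal} is fixed by the column anti-symmetrizer $D_\lambda(A(\lambda_1',\ldots,\lambda_{\lambda_1}'))$. By Pieri's rule and induction on the number of columns, this anti-symmetrizer is a rank-one projector on $V_\lambda$ (\Cref{lem:one_dimensional,lem:YY_column_anti}). Hence the unitary Schur sampling output is already within $\delta$ of the state of \cite{li2024optimal} after they measure $(\lambda,s)$ and rotate to $\ket{\lambda^\diamond}$. They return the bottom box of the rightmost column, so you should return a qudit of the \emph{last} anti-symmetric block. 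Monotonicity of trace distance under partial trace then finishes the proof.

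\textbf{Resource count.} Your count rests on wrong ingredients, even though the factors happen to multiply to $n^2d$.
\begin{itemize}
\item A call to \Cref{alg:sample-efficient-antisym} uses $\lceil nd\log(d/\epsilon)\rceil$ growth measurements, not $O(n)$. The guaranteed per-trial success probability from \Cref{lem:b_growth} is only $(\lambda_1'-k)/((n-k)(k+1))$. The factor $d$ comes from this coupon-collector count with $\lambda_1'\le d$.
\item Each growth measurement on a block of size $k$ costs at most $2k$ controlled-SWAPs. Summing $2\lambda_i'$ over the columns gives $2n$, hence $2n\lceil nd\log(2nd/\delta)\rceil$ in total.
\item A controlled qudit SWAP costs $O(\log d)$ qubit gates, not $O(d)$. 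That is the source of the $\log d$ in the $T$-count.
\item The $\log(1/\delta)$ enters through setting $\epsilon=\delta/2n$ in the repetition count. You cannot ``repeat or boost'' the protocol, because its single input $\rho^{\otimes n}$ is consumed.
\end{itemize}
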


Prior work achieved quantum purity amplification with an overall gate complexity polynomial in $n$ and $\log (d)$, using generalized quantum phase estimation \cite{li2024optimal, li2026quantumpurityamplificationarbitrary}. While our dependence on the local dimension $d$ is worse, if $d$ is taken to be fixed, then we achieve an improved gate complexity with an arguably simpler implementation.  
We note that, as our purity amplification procedure builds on our anti-symmetrization with copy re-use, it does not apply to the streaming version of the task \cite{Childs2025streamingquantum, grier2025streamingquantumstatepurification}.

\paragraph{Anti-symmetrization for one-way state generators.}
Our second application is in quantum cryptography.  In their original work, \cite{morimae2024one} defined a one-way state generator (OWSG) to be a keyed family of states $\phi_{k}$ such that it is hard to learn the key $k$ from copies of $\phi_{k}$ but computationally easy to verify $\phi_k$ given $k$.  
If one requires $\phi_{k}$ to be pure, there are at least two natural verification algorithms: one may either perform repeated SWAP tests between copies of $\phi_k$ and the unknown state, or one may uncompute the state preparation circuit for $\phi_k$ on the unknown state and measure in the computational basis, checking for an all-zero outcome.
When $\phi_{k}$ is not pure, these strategies do not immediately determine whether an unknown state is equal to $\phi_{k}$. However, there may be other verification strategies for low-rank $\phi_k$. 
Prior work left open the question of whether pure-state output OWSGs were a different kind of hardness assumption than mixed-state output OWSGs, even when the rank of those mixed states is low.  We present a simple reduction from low-rank, gapped mixed-state output OWSGs to pure-state output OWSGs.  

\begin{corollary*}
    If there is a low-rank one-way state generator where every $\phi_{k}$ has a minimum eigenvalue that is inverse-polynomial in the security parameter $\lambda$, then pure-state one-way state generators exist.
\end{corollary*}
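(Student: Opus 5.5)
The plan is to build the pure-state generator by applying the streaming anti-symmetrization algorithm (the low-memory theorem above) to the mixed OWSG. Let $\{\phi_k\}$ be the low-rank mixed OWSG. Each $\phi_k$ has rank $r \le \mathrm{poly}(\lambda)$ and smallest nonzero eigenvalue $\lambda_{\min} \ge 1/\mathrm{poly}(\lambda)$. Define the pure-state generator as follows: on key $k$, prepare $n = O(r\log r/\lambda_{\min})$ copies of $\phi_k$ (amplified by a constant factor if needed to make the failure probability $\negl(\lambda)$), and run the growth-measurement procedure. The output is $\psi_k := \ket{A_k}\bra{A_k}$, where $\ket{A_k}$ is the unique anti-symmetric state over the support of $\phi_k$.

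\emph{Purity and efficiency.} I expect to handle purity by repeating the coupon-collector analysis with $O(\lambda \cdot r\log r/\lambda_{\min})$ copies, which drives the failure probability to $2^{-\Omega(\lambda)}$. The actual output state is then $\negl(\lambda)$-close in trace distance to the pure state $\psi_k$. I will either accept this approximately-pure generator (the standard OWSG security and correctness notions tolerate negligible error), or define $\psi_k$ ideally and note that it is efficiently preparable up to negligible error, which suffices for the definitions. Efficiency follows from the gate counts in the growth-measurement theorem, together with the polynomial bounds on $r$ and $1/\lambda_{\min}$.

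\emph{Correctness and verification.} Since $\psi_k$ is pure, the verifier given $k$ can generate fresh copies of $\psi_k$ and run SWAP tests against the candidate state. If the original notion of verification is phrased as ``accept iff the state is $\phi_k$,'' I would instead reduce to pure-state OWSG correctness, which only asks that honest copies pass with overwhelming probability. The SWAP test between two negligibly-close-to-pure $\psi_k$ copies accepts with probability $1-\negl$.

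\emph{Security.} This is the main step. Suppose an adversary $\mathcal{A}$, given $t$ copies of $\psi_k$, outputs $k'$ such that $\mathrm{Ver}(k',\psi_k)$ accepts with non-negligible probability. Build $\mathcal{B}$ for the mixed OWSG as follows: from $t\cdot n$ copies of $\phi_k$, produce $t$ copies of (approximately) $\psi_k$ by running the anti-symmetrization algorithm, feed them to $\mathcal{A}$, and output $k'$. The obstacle is translating ``$\psi_{k'}$ is close to $\psi_k$'' into the mixed generator's verification condition on $\phi_k$. The key observation is that $\psi_{k'}$ determines the support of $\phi_{k'}$: the span of $\ket{A_{k'}}$'s single-register marginal equals $\mathrm{supp}(\phi_{k'})$, since that marginal is proportional to the projector onto the support. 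Two such states are close only if the supports nearly coincide. If the mixed verifier is the natural one (for instance, accepting when a state projects onto $\mathrm{supp}(\phi_{k'})$ or passes a comparison with $\phi_{k'}$), this bounds acceptance of $\phi_k$ under $k'$. Where the mixed OWSG's verifier is arbitrary, I would first try to adopt the alternative OWSG security formulation, ``inverting means outputting $k'$ with $\phi_{k'}$ close to $\phi_k$.'' If that fails, I would restrict to defining pure OWSG security directly via the SWAP-test verifier and show that breaking it yields a $k'$ whose support matches. In that case the argument that support-matching suffices to break the gapped mixed OWSG is the step I expect to require care with the gap $\lambda_{\min}$.
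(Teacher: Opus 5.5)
\textbf{There is a genuine gap in the security step.} Your construction and reduction are the paper's own: anti-symmetrize $\phi_k$ with the streaming algorithm, and build a mixed-state inverter that anti-symmetrizes its copies and runs the pure-state inverter. The paper simply asserts that the security step is clear. Your proposal stops at that step. You correctly flag the crux, which is turning ``$\ket{A_{k'}}$ is close to $\ket{A_k}$'' into ``$\mathsf{Ver}(k',\phi_k)$ accepts,'' but you only list options and carry none of them out.

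The first option you would try cannot work. Switching to the notion ``inverting means $\phi_{k'}\approx\phi_k$'' fails because $\ket{A_k}$ depends only on $\mathrm{supp}(\phi_k)$. Keys whose states share a support but differ in spectrum, or in eigenbasis inside the support, give the same $\ket{A_k}$. So a pure-state inverter only ever gives you a $k'$ with nearly the right support. This is also why the gap is needed for security and not just for the copy count. For example, take $\phi_k=(1-\epsilon)\ket{\psi_k}\bra{\psi_k}+\epsilon\,\Pi_V/r$ with every $\ket{\psi_k}\in V$ and $\epsilon$ negligible. Then $\ket{A_k}$ is the same for every key.

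The missing argument uses the single-copy verifier and the gap twice. This needs the gap for every key, including the inverter's $k'$, which the paper's definition provides. Set up the notation as follows:
\begin{itemize}
\item write $\mathsf{Ver}(k',\cdot)$ as a POVM element $0\le M\le I$;
\item let $S=\mathrm{supp}(\phi_{k'})$ and $T=\mathrm{supp}(\phi_k)$;
\item let $\eta=1-\mathrm{Tr}[M\phi_{k'}]$.
\end{itemize}
The argument then runs in three steps.
\begin{enumerate}
\item From $\phi_{k'}\ge\lambda_{\min}\Pi_S$ you get $\mathrm{Tr}[(I-M)\Pi_S]\le\eta/\lambda_{\min}$, so $M$ is nearly the identity on $S$. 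A Cauchy--Schwarz step on the form $\langle\cdot|M|\cdot\rangle$ then gives $\mathrm{Tr}[M\Pi_T]\ge\mathrm{Tr}[\Pi_S\Pi_T]-2r\sqrt{\eta/\lambda_{\min}}$.
\item With principal angles $\theta_i$ between $S$ and $T$, $|\langle A_{k'}|A_k\rangle|^2=\prod_i\cos^2\theta_i\le\sum_i\cos^2\theta_i=\mathrm{Tr}[\Pi_S\Pi_T]$.
\item Finally $\phi_k\ge\lambda_{\min}\Pi_T$, so
\[
\Pr[\mathsf{Ver}(k',\phi_k)=\top]\;\ge\;\lambda_{\min}\Bigl(|\langle A_{k'}|A_k\rangle|^2-2r\sqrt{\eta/\lambda_{\min}}\Bigr).
\]
\end{enumerate}
This is non-negligible whenever the pure inverter succeeds and $\eta$ is small.

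Two bookkeeping fixes are also needed.
\begin{itemize}
\item Correctness of $\mathsf{Ver}$ holds only on average over honest keys, so the inverter's $k'$ could have large $\eta$. To handle this, have the pure verifier first estimate $\Pr[\mathsf{Ver}(k',\phi_{k'})=\top]$ from copies it generates itself, and reject if the estimate is below $1-c\,\lambda_{\min}/r^2$ for a small constant $c$.
\item A single SWAP test has soundness only $1/2$. Output $\ket{A_k}^{\otimes m}$ and run $m$ SWAP tests, so that non-negligible acceptance forces overlap close to $1$.
\end{itemize}
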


This reduction is achieved by applying either of our two anti-symmetrization algorithms to a low-rank, gapped mixed-state output OWSG, exploiting that, for each $k$, there is a unique pure state obtained as the output of anti-symmetrization. 
Furthermore, since our anti-symmetrization algorithms are efficient, computational one-wayness carries over to the anti-symmetrized state.  
We also show that there is likely no black-box reduction (that only uses copies of the state) from mixed-state OWSGs to pure-state OWSGs beyond $\mathrm{poly}(\lambda)$-rank mixed states by extending the oracle separation of \cite{bostanci2025oracle} to $\lambda^{\log^{1 + \epsilon} (\lambda)}$-rank OWSGs.  

\subsection{Open questions}

\paragraph{Optimal bounds for (streaming) anti-symmetrization.}
We achieved streaming anti-symmetriza\-tion from $O(r\log(r)/\lambda_{\mathrm{min}})$ copies. Without the streaming requirement we showed that $O(r/\lambda_{\mathrm{min}})$ copies suffice. This raises two natural questions: First, what is the optimal complexity of anti-symmetrization? 
Among protocols that only act through permutation operations, arguments from Schur-Weyl duality suggest that $\Theta(\sum_{i=1}^r \lambda_i^{-1})$ copies are necessary and sufficient. But among general protocols, only $\Omega(\lambda_{\mathrm{min}}^{-1})$ copies may be needed. Is there a protocol matching this bound? Answering this question will also help clarify whether a reduction from gapped low-rank mixed-state output OWSGs to their pure-state counterpart via anti-symmetrization fundamentally requires the gappedness assumption. Second, can the optimal complexity of anti-symmetrization be achieved by a streaming algorithm that uses a small (say, polynomial in $r$ and $\log(d)$) working memory, or are there trade-offs between the available quantum memory and the achievable sample complexity?

\paragraph{Anti-symmetrization for support learning and testing.} The random purification channel of \cite{tang2026conjugate} and its variants have proven useful in developing algorithms for learning mixed states via existing pure-state procedures, see e.g. \cite{pelecanos2025mixed, mele2026optimallearningquantumchannels, chen2026quantumchanneltomographyoptimal, chen2026optimaltomographybosonicfermionic, girardi2026randomstinespringsuperchannelconverting}. 
However, in the context of property testing, using the random purification channel is more challenging because any single mixed state admits a whole (local-unitary) orbit of possible purifications, from which it draws one at random. Thus, even if the initial testing problem is about certifying a single mixed target hypothesis state, under random purification it becomes a task of certifying a target orbit of pure states. Anti-symmetrization with its unique pure target state may help here. Note that anti-symmetrization applied to copies of an unknown mixed state as studied in this work intentionally loses much of the spectral information (except the number of non-zero eigenvalues), but it retains information about the state's support. It therefore may be used to reduce tasks of learning or testing properties of the support of a mixed state to corresponding pure anti-symmetric state problems.

\paragraph{Anti-symmetrization for quantum chemistry.}
In quantum chemistry, pure states in the anti-symmetric subspace are often referred to as \emph{Slater determinants}, which describe certain non-interact\-ing fermionic systems. They appear as central objects in many mean-field theories, such as Hartree-Fock theory and density functional theory \cite{cances2023density}, are important to our understanding of chemical processes~\cite{chan2024quantumchemistryclassicalheuristics}, and serve as a starting point for simulating fermionic interactions \cite{fomichev2024initial}.   
Whereas standard algorithms for preparing a Slater determinant start from a classical description, which comes either from a theoretical prediction or an experimental setup, in a world with quantum computers, quantum systems may instead be specified implicitly through copies of quantum states.
Does our streaming anti-symmetrization, which can prepare a pure Slater determinant from such a quantum description without having to learn a full description of the state, be used for applications in quantum chemistry?

\paragraph{Note added.} While we were preparing our manuscript, the closely related, independent work \cite{brahmachari2026unitaryschursamplingqudits} by Shrigyan Brahmachari, David Jakab, Henry D.~Pfister, and Iman Marvian appeared on the arXiv. An earlier work that involved many of the same authors~\cite{brahmachari2025optimal} was a major influence on~\Cref{sec:schur_and_purification} of our paper.
The main goal of \cite{brahmachari2026unitaryschursamplingqudits} is unitary Schur sampling, which that work achieves via the same growth routine as our \Cref{alg:anti-sym-growing} to iteratively extract the largest anti-symmetric subsystem. Thus, the relationship between Schur sampling and anti-symmetrization is common to our works, both conceptually and technically.  While we implement unitary Schur sampling and purity amplification for qudits with $\widetilde{O}(n^2d)$ controlled-SWAP operations, their implementation uses $\widetilde{O}(\min\{n^4, n d^3\})$ pairwise SWAP tests. Our proofs were completed before~\cite{brahmachari2026unitaryschursamplingqudits} appeared on the arXiv, but (as we acknowledge below) they were developed with the help of AI and we cannot rule out the possibility that this AI was influenced by an unpublished draft of their work. 

\section*{Acknowledgments and AI disclosure}
We thank Jack Spilecki for helpful discussions about representation theory and applications of the random purification channel.

AI made a substantive contribution to this work. The human authors wrote an early draft of this work, including algorithms for anti-symmetrization and symmetrization that involved substantially more representation theory.  When asked to review said draft, ChatGPT-6 Pro proposed simplifications to the earlier anti-symmetrization algorithm that the authors had developed, which led to the current presentation of \Cref{alg:anti-sym-growing,alg:sample-inefficient-antisym}. Claude Opus 5.5 and Fable 5.1 were used to aid with technical aspects of representation theory in developing \Cref{sec:schur_and_purification} and to help improve presentation of the mathematical arguments.
All final proofs were written and independently verified by the authors. \section{Preliminaries}
\label{sec:math_prelims}

We consider a system of $n$ qudits, described by the Hilbert space $\left(\mathbb{C}^d\right)^{\otimes n}$. We let $S_n$ denote the symmetric group on $n$ elements, and for any permutation $\sigma \in S_n$ we let $P_\sigma$ denote the natural action of $\sigma$ on this space, defined by
\begin{align}
    P_\sigma \left( \ket{\psi_1} \otimes \ket{\psi_2}\otimes ... \otimes \ket{\psi_n}\right) =  \smallket{\psi_{\sigma^{-1}(1)}} \otimes \smallket{\psi_{\sigma^{-1}(2)}} \otimes ... \otimes \smallket{\psi_{\sigma^{-1}(n)}}.
\end{align}
When a permutation $\tau$ acts on fewer than $n$ elements, we define  $P_\tau$ to act on the first registers of the space. So, for example, for a permutation $\tau \in S_{n-1}$,
\begin{align}
P_\tau \left( \ket{\psi_1} \otimes \ket{\psi_2}\otimes ... \otimes \ket{\psi_{n-1}}\otimes \ket{\psi_n}\right) =  \smallket{\psi_{\tau^{-1}(1)}} \otimes \smallket{\psi_{\tau^{-1}(2)}}\otimes ... \otimes \smallket{\psi_{\tau^{-1}(n-1)}}\otimes \smallket{\psi_n}.
\end{align}

The anti-symmetric subspace projector on the first $k$ qudits of this space is denoted by $\Pi_{\mathrm{anti}}^k$ and defined as
\begin{align}
    \Pi_{\mathrm{anti}}^k := \frac{1}{k!}\sum_{\sigma \in S_k} \text{sgn} (\sigma)P_\sigma.
\end{align}
Routine calculation shows that this is indeed a self-adjoint projector.

We now make a standard but useful observation: the anti-symmetric subspace projector on $k+1$ registers can be written in terms of the anti-symmetric subspace projector on $k$ registers multiplied by a sum of transpositions. We formalize this in the following lemma. 

\newcommand{\transposition}[2]{({#1}, {#2})}
\begin{lemma}
\label{lem:anti_decomp}
    We have 
\begin{align*}
    \left(1 - \sum_{i=1}^k P_{\transposition{i}{k+1}} \right)\Pi_{\mathrm{anti}}^k = \Pi_{\mathrm{anti}}^k  \left(1 - \sum_{i=1}^k P_{\transposition{i}{k+1}} \right) = (k+1)\Pi_{\mathrm{anti}}^{k+1} \, .
\end{align*}
\end{lemma}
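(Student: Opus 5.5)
The plan is the standard coset decomposition of $S_{k+1}$ with respect to $S_k$. Every $\pi \in S_{k+1}$ can be written uniquely as $\pi = \tau\sigma$ with $\sigma \in S_k$ (viewed as fixing $k+1$) and $\tau$ ranging over the $k+1$ coset representatives $\{e\} \cup \{\transposition{i}{k+1} : 1 \le i \le k\}$. Explicitly, if $\pi(k+1) = k+1$ we take $\tau = e$; otherwise we take $\tau = \transposition{\pi(k+1)}{k+1}$, and then $\tau^{-1}\pi$ fixes $k+1$. Similarly, each $\pi$ can be written uniquely as $\pi = \sigma\tau$ using right coset representatives, with $\tau$ determined by $\pi^{-1}(k+1)$.

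First, I will use that $\pi \mapsto P_\pi$ is a representation, so $P_{\tau\sigma} = P_\tau P_\sigma$. Second, I will use multiplicativity of the sign: $\operatorname{sgn}(\tau\sigma) = \operatorname{sgn}(\tau)\operatorname{sgn}(\sigma)$ with $\operatorname{sgn}(\transposition{i}{k+1}) = -1$. Combining these gives
\begin{align*}
(k+1)!\,\Pi_{\mathrm{anti}}^{k+1} = \sum_{\tau}\sum_{\sigma\in S_k}\operatorname{sgn}(\tau)\operatorname{sgn}(\sigma)P_\tau P_\sigma = \Big(1 - \sum_{i=1}^k P_{\transposition{i}{k+1}}\Big)\, k!\,\Pi_{\mathrm{anti}}^k .
\end{align*}
Dividing by $k!$ yields the first identity. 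The second identity, with $\Pi_{\mathrm{anti}}^k$ on the left, follows by the same computation with the right coset decomposition $\pi = \sigma\tau$. Alternatively, it follows by taking adjoints: $\Pi_{\mathrm{anti}}^k$ and $\Pi_{\mathrm{anti}}^{k+1}$ are self-adjoint, and each $P_{\transposition{i}{k+1}}$ is self-adjoint because it is a unitary involution.

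The only real point to check is the uniqueness of the factorization, i.e., that the map $(\tau,\sigma) \mapsto \tau\sigma$ is a bijection onto $S_{k+1}$. This holds by the explicit construction of $\tau$ from $\pi(k+1)$ given above, together with the cardinality count $(k+1)\cdot k! = (k+1)!$. With that established, the remaining steps are routine bookkeeping, and I do not expect any real obstacle.
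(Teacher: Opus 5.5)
Your proof is correct and follows essentially the paper's argument: it writes each element of $S_{k+1}$ uniquely as an element of $S_k$ left-multiplied by the identity or a transposition $(i,k+1)$, then uses multiplicativity of $P$ and of the sign to factor the sum. For the second identity, your adjoint argument (all operators involved are self-adjoint) is a slightly quicker alternative to the paper's ``similar proof'' with right multiplication, and it is equally valid.
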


\begin{proof}
    Multiplying out terms gives that 
    \begin{align*}
    \left(1 - \sum_{i=1}^k P_{\transposition{i}{k+1}} \right)\Pi_{\mathrm{anti}}^k &= \frac{1}{k!}\left(1 - \sum_{i=1}^k P_{\transposition{i}{k+1}} \right) \sum_{\tau \in S_k} \text{sgn}(\tau) P_\tau \\
    &= \frac{1}{k!} \left( \sum_{\sigma\in S_{k+1}}  \text{sgn}(\sigma) P_\sigma \right)\\
    &= (k+1)\Pi_{\mathrm{anti}}^{k+1}\, ,
    \end{align*}
    where the equality on the second line follows by noting that any element in $S_{k+1}$ can be written uniquely as an element in $S_k$ left multiplied by either a transposition touching $k+1$ or the identity. (To verify this, simply write a permutation $\sigma \in S_{k+1}$ as $\sigma = (k+1,  \sigma(k+1))^2 \circ \sigma$ and then note $(k+1,  \sigma(k+1)) \circ \sigma \in S_k$.) A similar proof yields the other equality claimed in the lemma, only now multiplying by transpositions on the right. 
\end{proof}

Motivated by~\Cref{lem:anti_decomp}, we define the operator 
\begin{align}\label{eq:def-average-swap-operator}
    A^{[k],k+1} := \frac{1}{k+1}\left(1 - \sum_{i=1}^k P_{\transposition{i}{k+1}}\right).
\end{align}
We also note, as a simple corollary of~\Cref{lem:anti_decomp}, that anti-symmetric subspace projectors commute. 
\begin{corollary}\label{cor:proj_commute}
    We have
    \begin{align}
    \Pi^{k+1}_{\mathrm{anti}}
    \Pi^k_{\mathrm{anti}} = 
    \Pi^k_{\mathrm{anti}}\Pi^{k+1}_{\mathrm{anti}}
    \, .
    \end{align}
\end{corollary}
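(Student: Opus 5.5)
The plan is to derive the commutation directly from \Cref{lem:anti_decomp}, using only that $\Pi^k_{\mathrm{anti}}$ is an idempotent. By the lemma, the operator $A^{[k],k+1}$ from \eqref{eq:def-average-swap-operator} satisfies
\begin{align*}
\Pi^{k+1}_{\mathrm{anti}} = A^{[k],k+1}\Pi^k_{\mathrm{anti}} = \Pi^k_{\mathrm{anti}} A^{[k],k+1}.
\end{align*}

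First I will multiply the first form on the right by $\Pi^k_{\mathrm{anti}}$. Since $(\Pi^k_{\mathrm{anti}})^2 = \Pi^k_{\mathrm{anti}}$, this gives
\begin{align*}
\Pi^{k+1}_{\mathrm{anti}}\Pi^k_{\mathrm{anti}} = A^{[k],k+1}(\Pi^k_{\mathrm{anti}})^2 = A^{[k],k+1}\Pi^k_{\mathrm{anti}} = \Pi^{k+1}_{\mathrm{anti}}.
\end{align*}

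Next I will multiply the second form on the left by $\Pi^k_{\mathrm{anti}}$, which gives
\begin{align*}
\Pi^k_{\mathrm{anti}}\Pi^{k+1}_{\mathrm{anti}} = (\Pi^k_{\mathrm{anti}})^2 A^{[k],k+1} = \Pi^k_{\mathrm{anti}}A^{[k],k+1} = \Pi^{k+1}_{\mathrm{anti}}.
\end{align*}

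Both products therefore equal $\Pi^{k+1}_{\mathrm{anti}}$, so they are equal to each other, which is the claim.

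The only ingredient beyond the lemma is idempotence of $\Pi^k_{\mathrm{anti}}$, which the preliminaries already note. If one wanted it checked explicitly: the map $\tau \mapsto \sigma\tau$ is a bijection of $S_k$ and $\mathrm{sgn}$ is multiplicative, so $\text{sgn}(\sigma)P_\sigma\Pi^k_{\mathrm{anti}} = \Pi^k_{\mathrm{anti}}$ for every $\sigma \in S_k$; averaging over $\sigma$ then gives $(\Pi^k_{\mathrm{anti}})^2 = \Pi^k_{\mathrm{anti}}$. I do not expect any real obstacle; the one point to get right is using both the left and right forms of \Cref{lem:anti_decomp}.
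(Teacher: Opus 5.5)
Your argument is correct and is essentially the paper's: both proofs just combine the left and right forms of \Cref{lem:anti_decomp}. The paper does it in one chain, $\Pi^k_{\mathrm{anti}}\Pi^{k+1}_{\mathrm{anti}} = \Pi^k_{\mathrm{anti}}A^{[k],k+1}\Pi^k_{\mathrm{anti}} = \Pi^{k+1}_{\mathrm{anti}}\Pi^k_{\mathrm{anti}}$, which does not need idempotence. Your use of idempotence buys the slightly stronger fact that both products equal $\Pi^{k+1}_{\mathrm{anti}}$.
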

\begin{proof}
\Cref{lem:anti_decomp} gives
\begin{align}
    \Pi^{k}_{\mathrm{anti}}\Pi^{k+1}_{\mathrm{anti}} = \Pi^k_{\mathrm{anti}}A^{[k], k+1}\Pi^k_{\mathrm{anti}} = \Pi^{k+1}_{\mathrm{anti}}\Pi^k_{\mathrm{anti}}
\end{align}
\end{proof}

We extend the notation developed so far to act on arbitrary subsets of $n$ qudits. For any $S \subseteq [n]$ let $\Pi^S_{\mathrm{anti}}$ be the anti-symmetric subspace projector acting on those qudits. Similarly, for every $S \subseteq [n]$ and $j \in [n] \backslash S$ define 
\begin{align}
A^{S, j} = \frac{1}{|S| + 1} \left(1 -\sum_{s \in S} P_{\transposition{s}{j}}\right).
\end{align}
Note that for any such $S$ and $j$ we have 
\begin{align}
A^{S, j} \Pi^S_{\mathrm{anti}} = \Pi^S_{\mathrm{anti}} A^{S, j}  = \Pi^{S\cup\{j\}}_{\mathrm{anti}}
\end{align}
and 
\begin{align}
\Pi^S_{\mathrm{anti}} \; \Pi^{S\cup\{j\}}_{\mathrm{anti}} = \Pi^{S\cup\{j\}}_{\mathrm{anti}} \Pi^S_{\mathrm{anti}}.
\end{align} \section{Two algorithms for anti-symmetrization}
\label{sec:antisym_sec}

This section gives a formal description and analysis of two algorithms for preparing the anti-symmetric state over the support of an unknown mixed state from access to a stream of copies. It begins with~\Cref{alg:anti-sym-growing}, which describes an efficient circuit that implements the two-outcome ``growth'' measurement
\begin{align}
    \left\{ A^{[k], k+1}, \left(1- A^{[k], k+1}\right)^{1/2}\right\}
\end{align}
on the subspace of all states $\ket{\psi}$ such that $\Pi^k_{\mathrm{anti}}\ket{\psi}=\ket{\psi}$.
We call the first outcome of this measurement the success outcome, and the second the failure outcome.
When this measurement is applied to an anti-symmetric state on the first $k$ registers and succeeds, \Cref{lem:anti_decomp} implies that it produces an anti-symmetric state on $k+1$ registers. When it fails, it still produces an anti-symmetric state on the first $k$ registers because of  \Cref{cor:proj_commute}.

\Cref{alg:sample-inefficient-antisym} repeatedly applies the measurement described in~\Cref{alg:anti-sym-growing} to the anti-symmetric state grown so far and a fresh copy taken from a stream of iid mixed-state copies to produce an anti-symmetric state over the support of the unknown state. When the growth measurement succeeds, the anti-symmetric state grows by one register. When it fails, the mixed-state register is discarded. In either case, the process repeats. 

\Cref{alg:sample-efficient-antisym} is a more sample-efficient but less time-efficient version of~\Cref{alg:sample-inefficient-antisym} in which, rather than being discarded, mixed states involved in a failure are saved and retested later. The representation-theoretic properties of this algorithm are discussed in~\Cref{sec:schur_and_purification}.

We now give a formal description of each of these algorithms and their analysis. For ease of presentation, we write \Cref{alg:anti-sym-growing} as processing a pure input state, but it can clearly also be invoked on mixed states, and the guarantees extend accordingly by linearity.

\begin{longfbox}[breakable=true, padding=1em, margin-top=1em, margin-bottom=1em]
\begin{algorithm}
Anti-symmetrization growth measurement
\label{alg:anti-sym-growing}
\end{algorithm}
\noindent \textbf{Input:} A state $\ket{\psi} \in \left(\mathbb{C}^d\right)^{\otimes k+1}$ satisfying $\Pi^k_{\mathrm{anti}}\ket{\psi} = \ket{\psi}$.

\noindent \textbf{Output:} Either: \textit{Success}, and a state $\ket{\phi} \in \left(\mathbb{C}^d\right)^{\otimes k+1}$ satisfying $\Pi^{k
+1}_{\mathrm{anti}}\ket{\phi} = \ket{\phi}$; or \textit{Failure}, and a state $\ket{\psi'} \in \left(\mathbb{C}^d\right)^{\otimes k+1}$ satisfying $\Pi^{k
}_{\mathrm{anti}}\ket{\psi'} = \ket{\psi'}$.
\begin{enumerate}
    \item Let $W$ be any unitary preparing the uniform superposition on a $(k+1)$-dimensional ancilla register: $W \ket{0} = \frac{1}{\sqrt{k+1}} \sum_{i=0}^k\ket{i}$. Prepare this state, and label it the \texttt{ancilla} register. Call the register holding $\ket{\psi}$ the \texttt{active} register.  
    \item For each $ 1 \leq i \leq k$, apply the (negated) swap operation $-P_{(i, k+1)}$ to the state $\ket{\psi}$ controlled on the ancilla being in state $\ket{i}$. That is, apply the unitary 
    \begin{align}
        U_{A} = \ketbra{0}{0} \otimes I - \sum_{i=1}^k \ketbra{i}{i} \otimes P_{(i, k+1)} 
    \end{align}
    to the \texttt{ancilla} and \texttt{active} registers.
    \item Apply $W^\dag$.
\item Apply the two-outcome projective measurement 
    \begin{align}
        \left\{ \ketbra{0}{0}, 1- \ketbra{0}{0} \right\}
    \end{align}
    to the \texttt{ancilla} register. 
\item If outcome $\ketbra{0}{0}$ is observed, report \textit{success} and return the state in the \texttt{active} register. 
    \item If outcome $1 - \ketbra{0}{0}$ is observed, report \textit{failure}, apply the operator $U_A (W \otimes I)$  to the active and ancilla registers, then return the state in the \texttt{active} register. 
\end{enumerate}
\end{longfbox}

\begin{theorem}
    \Cref{alg:anti-sym-growing} is correct, i.e., it outputs a state with the claimed anti-symmetry properties in both the \textit{success} and \textit{failure} cases. Furthermore, the algorithm outputs \textit{success} with probability $\bra{\psi}\Pi^{k+1}_{\mathrm{anti}} \ket{\psi}$.  
\end{theorem}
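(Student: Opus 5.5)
The plan is to track the joint state of the \texttt{ancilla} and \texttt{active} registers through the circuit, and then read off both the success branch and the failure branch.

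First I compute the state after steps 1--3. After step 1 the state is $\frac{1}{\sqrt{k+1}}\sum_{i=0}^k \ket{i}\otimes\ket{\psi}$. Step 2 applies $U_A$, which gives
\begin{align*}
\ket{\Phi} := \frac{1}{\sqrt{k+1}}\Bigl(\ket{0}\otimes\ket{\psi} - \sum_{i=1}^k \ket{i}\otimes P_{(i,k+1)}\ket{\psi}\Bigr).
\end{align*}
Step 3 applies $W^\dagger$. Since $\bra{0}W^\dagger = \frac{1}{\sqrt{k+1}}\sum_i \bra{i}$, the $\ket{0}$-component of the resulting state is
\begin{align*}
(\bra{0}\otimes I)(W^\dagger\otimes I)\ket{\Phi} = \frac{1}{k+1}\Bigl(1-\sum_{i=1}^k P_{(i,k+1)}\Bigr)\ket{\psi} = A^{[k],k+1}\ket{\psi}.
\end{align*}
Because $\ket{\psi}=\Pi^k_{\mathrm{anti}}\ket{\psi}$, \Cref{lem:anti_decomp} turns this into $\Pi^{k+1}_{\mathrm{anti}}\ket{\psi}$.

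For the success case, the post-measurement active state is therefore proportional to $\Pi^{k+1}_{\mathrm{anti}}\ket{\psi}$, which is anti-symmetric on $k+1$ registers. The success probability is its squared norm, $\bra{\psi}\Pi^{k+1}_{\mathrm{anti}}\ket{\psi}$, because $\Pi^{k+1}_{\mathrm{anti}}$ is a self-adjoint projector.

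The failure case is the main obstacle, since the ancilla is left entangled and the active register carries no a priori symmetry. I avoid computing it branch by branch. The post-failure joint state is proportional to $(W^\dagger\otimes I)\ket{\Phi} - \ket{0}\otimes\Pi^{k+1}_{\mathrm{anti}}\ket{\psi}$. Applying $U_A(W\otimes I)$ to it gives
\begin{align*}
\ket{\Omega} \propto U_A\ket{\Phi} - U_A(W\ket{0}\otimes \Pi^{k+1}_{\mathrm{anti}}\ket{\psi}).
\end{align*}

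I then simplify each term with $U_A^2 = I$. This holds because $(-P_{(i,k+1)})^2 = I$.
\begin{itemize}
\item The first term is $U_A\ket{\Phi} = W\ket{0}\otimes\ket{\psi}$.
\item For the second term, $P_{(i,k+1)}\Pi^{k+1}_{\mathrm{anti}} = -\Pi^{k+1}_{\mathrm{anti}}$, so each $-P_{(i,k+1)}$ acts as the identity on $\Pi^{k+1}_{\mathrm{anti}}\ket{\psi}$. Hence the second term is just $W\ket{0}\otimes\Pi^{k+1}_{\mathrm{anti}}\ket{\psi}$.
\end{itemize}
Combining the two terms, the joint state is the product state $W\ket{0}\otimes(1-\Pi^{k+1}_{\mathrm{anti}})\ket{\psi}$, up to normalization.

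The returned active state is therefore pure and proportional to $(1-\Pi^{k+1}_{\mathrm{anti}})\ket{\psi}$. By \Cref{cor:proj_commute} and $\Pi^k_{\mathrm{anti}}\ket{\psi}=\ket{\psi}$, we have $\Pi^k_{\mathrm{anti}}(1-\Pi^{k+1}_{\mathrm{anti}})\ket{\psi} = (1-\Pi^{k+1}_{\mathrm{anti}})\ket{\psi}$. So the failure output is anti-symmetric on the first $k$ registers, as claimed. For mixed inputs the same conclusions follow by linearity.
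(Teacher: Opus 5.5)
Your proof is correct and follows the paper's argument essentially step for step. In both, the $\ket{0}$-branch is $A^{[k],k+1}\ket{\psi}=\Pi^{k+1}_{\mathrm{anti}}\ket{\psi}$ via \Cref{lem:anti_decomp}; then $U_A^2=I$ and $-P_{(i,k+1)}\Pi^{k+1}_{\mathrm{anti}}=\Pi^{k+1}_{\mathrm{anti}}$ show the corrected failure state is $W\ket{0}\otimes(1-\Pi^{k+1}_{\mathrm{anti}})\ket{\psi}$, and \Cref{cor:proj_commute} finishes the argument. The only cosmetic difference is bookkeeping: the paper writes the failure branch as the operator $I - U_A\bigl((W\ketbra{0}{0}W^\dagger)\otimes I\bigr)U_A$, whereas you subtract the success component from the state before applying the correction.
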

\begin{proof}
    We analyze the \textit{success} case first. This can be understood as an example of the linear combination of unitaries technique. After the $\ketbra{0}{0}$ outcome is observed in step 5 of the algorithm, the (unnormalized) state in the \texttt{ancilla} and \texttt{active} registers is: 
    \begin{align}
        &\frac{1}{\sqrt{k+1}}\left(\ketbra{0}{0} \otimes I\right)(W^\dag \otimes I) U_A \left( \sum_{i=0}^k \ket{i} \otimes \ket{\psi} \right)\\ &\hspace{40pt}= \frac{1}{\sqrt{k+1}}\left(\ketbra{0}{0} \otimes I\right)(W^\dag \otimes I)\left( \ket{0}\ket{\psi} - \sum_{i=1}^k \ket{i} P_{(i, k+1)} \ket{\psi} \right) \\
        &\hspace{40pt}=\frac{1}{k+1} \left(\ket{0}\left(\sum_{i=0}^k  \bra{i}\right) \otimes I \right) \left( \ket{0}\ket{\psi} - \sum_{i=1}^k \ket{i} P_{(i, k+1)} \ket{\psi} \right)\\
        &\hspace{40pt}= \frac{1}{k+1} \ket{0} \otimes \left(1 - \sum_{i=1}^k  P_{(i,k+1)} \right) \ket{\psi} \\
        &\hspace{40pt}= \ket{0} \otimes A^{[k], k+1} \ket{\psi},
    \end{align}
    where we recall the definition of $A^{[k], k+1}$ given in~\Cref{eq:def-average-swap-operator}. By assumption of the algorithm, we have $\ket{\psi} = \Pi^k_{\mathrm{anti}}\ket{\psi}$ and therefore
    \begin{align}
        A^{[k], k+1} \ket{\psi} &= A^{[k], k+1} \Pi^k_{\mathrm{anti}}\ket{\psi} \\
        &= \Pi^{k+1}_{\mathrm{anti}}\ket{\psi} \, ,
    \end{align}
    which both gives the probability of the algorithm outputting \textit{success}, and proves correctness of the algorithm in this case.

    In the \textit{failure} case, the overall operator applied to the \texttt{active} and \texttt{ancilla} registers is 
\begin{align}
        &U_A (W \otimes I) \left((I - \ketbra{0}{0})\otimes I \right)(W^\dag \otimes I)U_A = I - U_A \left(\left(W \ketbra{0}{0} W^\dag \right) \otimes I\right) U_A.
    \end{align}
    Building on the analysis of the \textit{success} case gives 
    \begin{align}
        &\frac{1}{\sqrt{k+1}}U_A \left(\left(W \ketbra{0}{0} W^\dag \right) \otimes I\right) U_A \left( \sum_{i=0}^k \ket{i} \otimes \ket{\psi} \right) \\
        &\hspace{40pt}= U_A \left( W \ket{0} \otimes \Pi^{k+1}_{\mathrm{anti}} \ket{\psi}\right) \\
        &\hspace{40pt}= \frac{1}{\sqrt{k+1}}\left( \ket{0} \otimes I - \sum_{i=1}^k \ket{i} \otimes P_{(i, k+1)} \right) \Pi^{k+1}_{\mathrm{anti}} \ket{\psi} \\
        &\hspace{40pt}= W \ket{0} \otimes \Pi^{k+1}_{\mathrm{anti}} \ket{\psi} 
\end{align}
    from which it follows that the state in the \texttt{ancilla} and \texttt{active} registers after applying all operations in the \textit{failure} branch of the algorithm is
    \begin{align}
        &\left(I - U_A \left(\left(W \ketbra{0}{0} W^\dag \right) \otimes I\right) U_A\right)\left( W \ket 0 \otimes \ket{\psi}\right) 
        = W \ket 0 \otimes (I- \Pi^{k+1}_{\mathrm{anti}})\ket{\psi}\, ,
    \end{align}
    and the state returned is 
    \begin{align}
        \ket{\psi'} &= (I - \Pi^{k+1}_{\mathrm{anti}}) \ket{\psi} \\
        &= (I - \Pi^{k+1}_{\mathrm{anti}}) \Pi^{k}_{\mathrm{anti}}\ket{\psi} \\
        &= \Pi^{k}_{\mathrm{anti}}(I - \Pi^{k+1}_{\mathrm{anti}})\ket{\psi} = \Pi^{k}_{\mathrm{anti}}\ket{\psi'}\, ,
    \end{align}
    where we have used the assumption on $\ket{\psi}$ on the second line, and~\Cref{cor:proj_commute} on the third. 
\end{proof}

We briefly describe a low-depth implementation of \Cref{alg:anti-sym-growing} given large fanout gates:
\begin{lemma}
    \Cref{alg:anti-sym-growing} can be implemented in $\QNC^0_f$, that is, polynomial-size, constant-depth circuits with unbounded fanout gates.
\end{lemma}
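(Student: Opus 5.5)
We go through the steps of \Cref{alg:anti-sym-growing} and give a constant-depth, polynomial-size circuit with unbounded fanout for each; since a constant number of such stages composes to a constant-depth circuit, this proves the lemma. We encode the $(k+1)$-dimensional ancilla in \emph{one-hot} (unary) form on $k+1$ qubits: basis state $\ket{i}$ is the weight-one string $e_i$. We do not use a binary encoding on $\lceil\log(k+1)\rceil$ qubits. The reason is that each controlled operation in step 2 can then be conditioned on a single ancilla qubit rather than on a full decoding of $i$. The measurement in step 4 and the operations in step 6 are just these same circuits in a different order, so step 6 adds only constant depth.

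\emph{Preparing $W\ket 0$ (a $W$-state).} In the one-hot encoding, $W$ must prepare $\frac{1}{\sqrt{k+1}}\sum_i \ket{e_i}$. We do not need a unitary $W$ that acts correctly on every input. We only need some unitary that maps $\ket{0}$ to this state, and step 3 needs a unitary $V$ with $\bra{0}V^\dagger$ equal to the projection onto the $W$-state (up to relabelling). So we use the following construction. Prepare $\lceil\log(k+1)\rceil$ qubits in uniform superposition using one layer of Hadamards. (If $k+1$ is not a power of two, pad the register and treat the padding values as failure outcomes, or use exact amplitude amplification by a constant factor.) Then convert binary to one-hot. This conversion is an $\mathsf{AC}^0$-type Boolean function: each output bit is an AND of literals. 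Such functions are computable in $\QNC^0_f$, because fanout gives parity and parity gives generalized Toffoli in constant depth, by the results of H\o yer and \v{S}palek.

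The difficulty is uncomputing the binary register, since it must be disentangled. Here we use the fact that the binary value is a parity-style function of the one-hot string: bit $j$ of $i$ equals $\bigoplus_{i:\,\mathrm{bit}_j(i)=1} x_i$. This is computable with fanout and CNOTs in constant depth. So we compute the binary register, compute the one-hot register from it, and then XOR the parity expression back into the binary register, which clears it. The inverse $W^\dagger$ in step 3 is this circuit run in reverse. The measurement $\{\ketbra{0}{0}, 1-\ketbra{0}{0}\}$ becomes an OR over the binary register, which is again a generalized Toffoli.

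\emph{Step 2 (controlled swaps).} In the one-hot encoding, $U_A$ applies $-P_{(i,k+1)}$ controlled on ancilla qubit $i$, for each $i$. These controlled swaps all act on register $k+1$, so done naively they run sequentially. To parallelize, we first fan out register $k+1$ into copies: CNOT it into $k$ fresh blocks of $d$-dimensional qubit registers, or alternatively use the standard trick of computing the swaps via controlled-fanout. Because exactly one ancilla qubit is set, exactly one swap acts. We can therefore write $U_A$ as follows.
\begin{enumerate}
\item For each $i$, controlled on $x_i$, swap register $i$ into a staging register $T$. This is done with a fanned-out control.
\item Register $k+1$ is exchanged with $T$.
\item Undo the first step.
\end{enumerate}
Each sub-step consists of $k$ controlled swaps into the same target $T$. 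Since they are mutually exclusive, their combined action on $T$ is an XOR of $x_i\wedge(\text{reg}_i)$ terms. This is a parity computation, which is constant depth with fanout. The sign $-1$ for $i\ge1$ is a single $Z$ applied to $\bigoplus_{i\ge1}x_i$, which is also computable by parity.

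We expect the main obstacle to be this second part: showing that the family of mutually exclusive controlled swaps sharing the target register $k+1$ can be done in constant depth. We would carry this out explicitly with the XOR-into-staging-register argument above, checking carefully that every ancilla is returned to $\ket{0}$.
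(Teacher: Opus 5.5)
Your plan has the same shape as the paper's proof. Both use a unary ancilla, parallelize the mutually exclusive controlled swaps that share register $k+1$ by writing a swap as XOR rounds realized with fanout and parity, and get the signs from $Z$ gates. However, the step-2 circuit as you wrote it fails on the $i=0$ branch.

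Your middle sub-step exchanges register $k+1$ with $T$ unconditionally. When the ancilla is $e_0$, no control fires in the first or third sub-step. The net effect is then to move the contents of register $k+1$ into $T$ and leave register $k+1$ in $\ket{0}$. That is not the identity that $U_A$ must act as on this branch, and $T$ is not returned to $\ket{0}$. The exchange has to be controlled on $\bar{x}_0$: fan $\bar{x}_0$ out and apply Fredkin gates bitwise in parallel.

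Relatedly, ``the combined action on $T$ is an XOR of $x_i\wedge\mathrm{reg}_i$ terms'' covers only one XOR round. Moving register $i$ into $T$ also requires clearing it via $\mathrm{reg}_i\leftarrow\mathrm{reg}_i\oplus x_i T$, using fanned-out copies of $T$. The undo then needs both rounds in reverse order.

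The paper avoids a separate exchange altogether. It fans register $k+1$ out into $k$ copies and applies the three-CNOT swap pattern with Toffolis controlled on the ancilla bits. A parity collects $\mathrm{reg}_{k+1}\oplus\mathrm{reg}_i$ and a fanout broadcasts it to the copies. Since that parity is $0$ when no control is set, the $i=0$ branch is automatically the identity.

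Second, for $k+1$ not a power of two, treating padding values as failures does not give a unitary $W$ with the required $W\ket{0}$. It either changes the LCU success operator away from $A^{[k],k+1}$, which invalidates the failure-branch analysis, or introduces a heralded failure that must be repeated, which is not constant depth.

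Your amplitude-amplification option does work, but it is the real content of this step and should be spelled out:
\begin{itemize}
\item Tensor $H^{\otimes m}\ket{0}$ with a qubit rotated so that the probability of ``$b\le k$ and that qubit is $1$'' is exactly $1/4$.
\item One Grover iteration then yields exactly the uniform superposition over $\{0,\dots,k\}$.
\item Its oracle is the $\mathsf{AC}^0$ comparator $[b\le k]$, and its reflection uses an OR on $m+1$ qubits.
\item It needs arbitrary single-qubit rotations.
\end{itemize}
The paper instead imports a constant-depth $W$-state preparation. It then uses one generalized Toffoli to turn the $(k+1)$-qubit $W$-state into the $k$-qubit encoding in which $\ket{0}$ is $\ket{0^k}$.

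Finally, exact constant-depth OR (generalized Toffoli) with fanout is due to Takahashi and Tani, not H\o yer and \v{S}palek. The latter obtain it only approximately, or exactly in $O(\log^* n)$ depth. The paper's proof relies on the same fact.
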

\begin{proof}
    We describe constant-depth implementations for each of the first three steps of the algorithm over qubits, though the construction would be similar for other local dimensions:

    \vspace{1em}
    \noindent\textit{Step $1$ \& $3$:} It will be easiest if we prepare the states in their ``unary'' encoding, with $\ket{0}$ represented as $\ket{0^k}$ and $\ket{i}$ represented by $\ket{0, \ldots, 1, \ldots 0}$ where the $1$ is in the $i$th position. To do this, we can first create the $W$-state on $k+1$ qubits \cite{grier2026mathsfqac0containsmathsftc0with}, flip the $(k+1)$st qubit if the first $k$ are in the all-zero state (with one generalized Toffoli gate), and then drop the last qubit. 

    \vspace{1em}
    \noindent\textit{Step $2$:} The key barrier to parallelizing this step is that all the (controlled) $P_{(i,k+1)}$ terms touch the $(k+1)$st qudit of $\ket{\psi}$. To circumvent this issue, we will use a variant of a technique from \cite{hoyer2005quantum}, which parallelizes mutually commuting operations. While the $P_{(i,k+1)}$ transpositions do not commute, the controlled versions do. Below is the sequence of operations in $\QNC^0_f$ we use to implement this scheme. It uses, at its heart, the fact that a SWAP gate can be built from three CNOT gates. We track how the classical basis states evolve under each operation. For simplicity, first assume the local dimension is $d=2$. The registers below are (in order): $k$-bit ancilla register; first $k$ bits of $\ket{\psi}$; the $(k+1)$st bit of $\ket{\psi}$ appended with $(k-1)$ zero bits; $k$ zero bits; $1$ zero bit. We have (when $i \neq 0$):
    \begin{align*}
        & \ket{i} \ket{x_1, \ldots, x_i, \ldots, x_k} \ket{y, 0, \ldots, 0} \ket{0, \ldots, 0, \ldots, 0}\ket{0} \\
        \xrightarrow{\mathsf{Fanout}} \;\;& \ket{i} \ket{x_1, \ldots, x_i, \ldots, x_k} \ket{y, \ldots, y, \ldots y} \ket{0, \ldots, 0, \ldots, 0} \ket{0} \\
        \xrightarrow{\mathsf{Toffolis}} \;\;& \ket{i} \ket{x_1, \ldots, y \oplus x_i, \ldots, x_k} \ket{y, \ldots, y, \ldots y} \ket{0, \ldots, y \oplus x_i, \ldots, 0} \ket{0} \\
        \xrightarrow{\mathsf{Parity}} \;\;& \ket{i} \ket{x_1, \ldots, y \oplus  x_i, \ldots, x_k} \ket{y, \ldots, y, \ldots y} \ket{0, \ldots, y \oplus x_i, \ldots, 0} \ket{y \oplus x_i} \\        
        \xrightarrow{\mathsf{Fanout}} \;\;& \ket{i} \ket{x_1, \ldots, y \oplus x_i, \ldots, x_k} \ket{x_i, \ldots, x_i, \ldots x_i} \ket{0, \ldots, y \oplus x_i, \ldots, 0} \ket{y \oplus x_i} \\
        \xrightarrow{\mathsf{Parity}} \;\;& \ket{i} \ket{x_1, \ldots, y \oplus x_i, \ldots, x_k} \ket{x_i, \ldots, x_i, \ldots x_i} \ket{0, \ldots, y \oplus x_i, \ldots, 0} \ket{0} \\
        \xrightarrow{\mathsf{Toffolis}} \;\;& \ket{i} \ket{x_1, \ldots, y , \ldots, x_k} \ket{x_i, \ldots, x_i, \ldots x_i} \ket{0, \ldots, 0, \ldots, 0} \ket{0} \\
        \xrightarrow{\mathsf{Fanout}} \;\;& \ket{i} \ket{x_1, \ldots, y , \ldots, x_k} \ket{x_i, \ldots, 0, \ldots 0} \ket{0, \ldots, 0, \ldots, 0} \ket{0}
    \end{align*}
    When $i = 0$, the circuit construction above does nothing as desired. To obtain the minus signs, we only need to apply a layer of $Z$ gates to our ancilla register since at most one location will be a $1$. To generalize to multiple qubits (i.e., $d = 2^m$), we simply run the entire circuit above in parallel for each bit. We use a Fanout gate to parallelize the controls even further; that is, the ancilla state $\ket{i}$ becomes $\ket{0^m, \ldots, 1^m, \ldots 0^m}$ where the $1$'s are in the $i$th block of $n$ bits.
\end{proof}

Next, we describe and analyze our first anti-symmetrization algorithm, which repeatedly applies~\Cref{alg:anti-sym-growing} to prepare an anti-symmetric state while discarding failed copies.
 
\begin{longfbox}[breakable=true, padding=1em, margin-top=1em, margin-bottom=1em]
\begin{algorithm}
Anti-symmetrization without copy re-use
\label{alg:sample-inefficient-antisym}
\end{algorithm}
\noindent \textbf{Input:} $n$ copies of some unknown mixed state $\rho \in \mathcal{L}(\mathbb{C}^d)$. \\
\noindent \textbf{Output:} An anti-symmetric state on $k$ qudits for some $1 \leq k \leq d$. 
\begin{enumerate}
    \item Label three registers: \texttt{anti-symmetric}, \texttt{input}, and \texttt{discard}. 
    \item Set $k=1$ and add one copy of $\rho$ to the \texttt{anti-symmetric} register, and all others to the \texttt{input} register. 
    \item While the \texttt{input} register is non-empty and $k < d$:
    \begin{enumerate}
        \item Take the state from the \texttt{anti-symmetric} register and the first qudit in the \texttt{input} register, then apply~\Cref{alg:anti-sym-growing} to them. 
        \item If the run of \Cref{alg:anti-sym-growing} from (a) returns \textit{success}, increase $k$ by 1, and add the state output by the test to the \texttt{anti-symmetric} register. 
        \item If the run of \Cref{alg:anti-sym-growing} from (a) returns \textit{failure}, add the first $k$ qudits of the state returned to the \texttt{antisym\allowbreak metric} register and the last qudit to the \texttt{discard} register. 
    \end{enumerate}
    \item Return the state in the \texttt{anti-symmetric} register. 
\end{enumerate}
\end{longfbox}

\begin{theorem}
    \label{thm:sample-inefficient-analysis}
    \Cref{alg:sample-inefficient-antisym} is correct, that is, it always returns an anti-symmetric state. Furthermore if the input state $\rho$ has rank $r$ and minimum non-zero eigenvalue $\lambda_{\mathrm{min}}$ then \Cref{alg:sample-inefficient-antisym} returns the unique anti-symmetric state over the support of $\rho$ with probability at least $1-r\left(1-\frac{\lambda_{\mathrm{min}}}{r} \right)^{n} \geq 1- r \exp(-n\lambda_{\mathrm{min}}/r)$. 
    
    Explicitly, if
    \begin{align}
        \rho = \sum_{i=1}^r \lambda_i \ketbra{\varphi_i}{\varphi_i}
    \end{align}
    then~\Cref{alg:sample-inefficient-antisym} returns a state proportional to
    \begin{align}
        \Pi_{\mathrm{anti}}^r \left(\ket{\varphi_1}\otimes\ket{\varphi_2} \otimes ... \otimes \ket{\varphi_r}\right)
    \end{align}
    with probability as bounded above. 
\end{theorem}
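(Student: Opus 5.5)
Correctness follows by induction from the preceding theorem on \Cref{alg:anti-sym-growing}. Initially the \texttt{anti-symmetric} register holds a single qudit, which is trivially in the range of $\Pi^1_{\mathrm{anti}}=I$. Each iteration then returns either a state in the range of $\Pi^{k+1}_{\mathrm{anti}}$ (success, and $k$ increments) or a state whose first $k$ qudits lie in the range of $\Pi^k_{\mathrm{anti}}$ (failure, last qudit discarded). By linearity this extends to mixed inputs and to tracing out the discarded qudit, so the invariant ``the \texttt{anti-symmetric} register is supported on the anti-symmetric subspace of its $k$ qudits'' holds throughout.

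For the probabilistic statement I will work in the eigenbasis of $\rho$ and view each fresh copy as the classical mixture $\sum_i \lambda_i \ketbra{\varphi_i}{\varphi_i}$. Since all operations are linear, the output is the corresponding mixture over sequences of labels $(j_1,j_2,\dots)$ drawn iid from $(\lambda_i)$. Fix such a sequence. The key claim is that the register state always lies in the range of $\Pi^k_{\mathrm{anti}}$ restricted to $\Span\{\ket{\varphi_j}: j \in J\}^{\otimes k}$, where $J$ is the set of labels seen so far. It then equals, up to normalization, $\Pi^k_{\mathrm{anti}}$ applied to a product of $k$ \emph{distinct} $\ket{\varphi_j}$'s, namely the ``accepted'' labels.

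I will prove this by induction on the accepted set $T$ with $|T|=k$. Write the current state as $\ket{\psi_T}\propto\Pi^k_{\mathrm{anti}}\bigotimes_{t\in T}\ket{\varphi_t}$ and append a fresh $\ket{\varphi_j}$. There are two cases.
\begin{itemize}
\item If $j\in T$, then $\ket{\psi_T}\otimes\ket{\varphi_j}$ is annihilated by $\Pi^{k+1}_{\mathrm{anti}}$, because it is a wedge with a repeated vector. So the growth measurement fails with certainty, and the failure output is $(I-\Pi^{k+1}_{\mathrm{anti}})$ applied to the input, which leaves it unchanged. Discarding the last qudit therefore returns $\ket{\psi_T}$.
\item If $j\notin T$, then $\Pi^{k+1}_{\mathrm{anti}}(\ket{\psi_T}\otimes\ket{\varphi_j})$ is the normalized wedge over $T\cup\{j\}$ scaled by $1/\sqrt{k+1}$. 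This holds because the vectors are orthonormal and the wedge norms follow from the standard computation $\|\Pi^k_{\mathrm{anti}}\ket{v_1\cdots v_k}\|^2 = 1/k!$. So success occurs with probability $1/(k+1)$ and produces the wedge over $T\cup\{j\}$.
\end{itemize}
The failure case for $j\notin T$ is the main obstacle. The output $(I-\Pi^{k+1}_{\mathrm{anti}})(\ket{\psi_T}\otimes\ket{\varphi_j})$ is not a product state of register and discard, so after tracing out the last qudit the register is in a mixed state. I will compute it explicitly. Using \Cref{lem:anti_decomp}, $\Pi^{k+1}_{\mathrm{anti}}(\ket{\psi_T}\otimes\ket{\varphi_j})=\frac{1}{k+1}(1-\sum_i P_{(i,k+1)})(\ket{\psi_T}\ket{\varphi_j})$. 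The swapped terms have the last qudit in some $\ket{\varphi_t}$ with $t\in T$ and the first $k$ qudits in the wedge over $T\setminus\{t\}\cup\{j\}$. Tracing out the last qudit therefore gives a mixture of wedge states over sets $T'$ of size $k$, each of the form $T\setminus\{t\}\cup\{j\}$ or $T$ itself. The invariant is thus preserved as a mixture: the accepted set has size $k$ and is contained in the labels seen.

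For the bound, it then suffices to show that whenever the current accepted set $T$ is not the whole support $[r]$, each fresh copy succeeds with probability at least $\Pr[j\notin T]\cdot\frac{1}{k+1}\ge \lambda_{\min}/r$. Note also that a success can never happen once $k=r$, since only support vectors appear. Hence $k$ reaches $r$ unless fewer than $r-1$ successes occur among $n-1$ trials, each succeeding with probability at least $\lambda_{\min}/r$ conditionally on the past. I will bound this by a union bound over the $r$ stages: stage $k$ stalls for all remaining trials with probability at most $(1-\lambda_{\min}/r)^{n}$, giving $r(1-\lambda_{\min}/r)^n\le r\exp(-n\lambda_{\min}/r)$. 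Finally, once $k=r$, the register holds the unique (up to phase) vector in the anti-symmetric subspace of $\Span\{\ket{\varphi_i}\}^{\otimes r}$, which is $\propto\Pi^r_{\mathrm{anti}}(\ket{\varphi_1}\otimes\cdots\otimes\ket{\varphi_r})$.
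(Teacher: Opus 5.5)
The first part of your proposal follows the paper's proof: correctness, the eigen-decomposition into wedge states over an accepted set $T$, and success probability $0$ for $j\in T$ and $1/(k+1)$ for $j\notin T$. Your explicit failure-branch computation for $j\notin T$ (the traced-out register is a mixture of wedges over $T$ and over $T\setminus\{t\}\cup\{j\}$) fills in a step the paper only asserts.

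The gap is in the final probabilistic step. You weaken the per-trial success probability to the stage-independent bound $\Pr[j\notin T]/(k+1)\ge\lambda_{\min}/r$. You then take a union bound over the $r$ stages, each one ``stalling for all remaining trials'' with probability at most $(1-\lambda_{\min}/r)^n$. But the event that $k$ never reaches $r$ is not contained in the union of events ``some single stage absorbs all $n$ trials.'' Every stage can take a moderate number of trials, and together they can exhaust the budget. Concretely, consider a process whose success probability is exactly $\lambda_{\min}/r$ at every stage; it satisfies everything you retained. Its number of successes in $n-1$ trials is $\mathrm{Bin}(n-1,\lambda_{\min}/r)$, so it needs $n\gtrsim r^2/\lambda_{\min}$ to reach $k=r$. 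Meanwhile $r(1-\lambda_{\min}/r)^n\le r e^{-n\lambda_{\min}/r}$ is already at most $1/r$ at $n\approx 2r\ln r/\lambda_{\min}$. So the claimed bound cannot follow from a uniform $\lambda_{\min}/r$ lower bound.

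The fix is to keep the stage dependence. Since $\Pr[j\notin T]=\sum_{j\notin T}\lambda_j\ge (r-k)\lambda_{\min}$ and $k+1\le r$, the success probability at stage $k$ is at least $(r-k)\lambda_{\min}/r$, conditionally on the past. This is exactly the probability that a coupon collector with $r$ coupons, each appearing with probability $\lambda_{\min}/r$ per round, draws a new coupon while holding $k$ of them. Couple the stage durations: the actual process is stochastically faster, and it even gets its first register for free. The failure probability is then at most the probability that some coupon is never drawn in $n$ rounds. The union bound is now over coupons, each missed with probability $(1-\lambda_{\min}/r)^n$, which gives $r(1-\lambda_{\min}/r)^n$. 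This is the argument the paper uses.
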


\begin{proof}
    Correctness of \Cref{alg:sample-inefficient-antisym} is immediate from correctness of~\Cref{alg:anti-sym-growing}. It remains to bound the probability that this algorithm outputs the anti-symmetric state over the support of $\rho$. 
    
    Rather than analyzing the algorithm over mixed states directly, we interpret each $\rho$ as a probabilistic mixture of orthogonal pure (eigen)states $\{\ket{\varphi_1}, \ket{\varphi_2},..., \ket{\varphi_r}\}$ and analyze the algorithm on each pure state in the resulting mixture. As the algorithm only acts on these states with controlled-SWAP operations and since \Cref{alg:anti-sym-growing} always outputs an anti-symmetric state, any state stored in the \texttt{anti-symmetric} register during the run of the algorithm must be equivalent to a mixture of states produced by applying the anti-symmetric subspace projector applied to some subset of states from $\{\ket{\varphi_1}, \ket{\varphi_2},..., \ket{\varphi_r}\}$. Consider the case where this subset has size $k$, and describe one arbitrary state in this mixture as 
    \begin{align}
        \ket{\Phi^{S}} \propto \Pi^k_{\mathrm{anti}} \left(\bigotimes_{s \in S}\ket{\varphi_s} \right)\, ,
    \end{align}
    where $S \subseteq [r]$, $|S| = k$. Now straightforward calculation gives 
    \begin{align}
        \bra{\Phi^{(S)}} \bra{\varphi_j}\Pi^{k+1}_{\mathrm{anti}} \ket{\Phi^{(S)}} \ket{\varphi_j} = 0
    \end{align}
    if $j \in S$, and 
    \begin{align}
        \bra{\Phi^{(S)}} \bra{\varphi_j}\Pi^{k+1}_{\mathrm{anti}} \ket{\Phi^{(S)}} \ket{\varphi_j} &= \bra{\Phi^{(S)}} \bra{\varphi_j}A^{[k], k+1}\Pi^{k}_{\mathrm{anti}} \ket{\Phi^{(S)}} \ket{\varphi_j} = \frac{1}{k+1}
    \end{align}
    if $j \notin S$. A fresh state drawn from the \texttt{input} register is in the latter $j\not\in S$ case with probability 
    \begin{align}
        \sum_{j \notin S} \lambda_j \geq (r-k)\lambda_{\mathrm{min}}.
    \end{align}
    So we conclude that every time a state is removed from the \texttt{input} register, the next measurement (i.e., the next run of \Cref{alg:anti-sym-growing}) increases the size of the state in the \texttt{anti-symmetric} register with probability at least 
    \begin{align}
        \frac{\lambda_{\mathrm{min}}(r-k)}{(k+1)} \geq \frac{\lambda_{\mathrm{min}}(r-k)}{r}\, ,
    \end{align}
    since the size of the state in the anti-symmetric register is strictly less than $r$ while that register is incomplete. The latter probability is exactly the probability of incrementing the number of collected coupons if there are $r$ coupons overall and each coupon appears with probability $\lambda_{\mathrm{min}}/r$.
    So, standard union bound arguments give that the probability of not collecting all coupons after $n$ rounds is upper bounded by
    \begin{align}
        \sum_{i \in [r]} \text{Pr[Coupon $i$ is not collected]} =  r\left(1-\frac{\lambda_{\mathrm{min}}}{r} \right)^{n} \leq r e^{-n\lambda_{\mathrm{min}}/r} \, .
    \end{align}
    This is the claimed bound on the probability of~\Cref{alg:sample-inefficient-antisym} returning the desired anti-symmetric state. 
\end{proof}

\begin{remark}
    The proof of \Cref{thm:sample-inefficient-analysis} also shows the following: If we are promised in advance that the unknown state $\rho$ has rank $r$, then we can change the WHILE loop in Step 3 of \Cref{alg:sample-inefficient-antisym} to be over $k<r$. If $\rho$ has rank $r$, then the algorithm can never increment the counter $k$ beyond $r$.
\end{remark}

In \Cref{alg:sample-inefficient-antisym}, states moved to the \texttt{discard} register are never used again. Thus, they can be traced out without affecting the algorithm's performance. This makes it well suited for a ``streaming'' model in which new states are added to the \texttt{input} register one at a time, and only states in the \texttt{anti-symmetric} register are stored. 

However, it is possible to improve the sample complexity of~\Cref{alg:sample-inefficient-antisym} by abandoning this streaming-like behavior. The following key observation underlies this improvement: operators applied to states in the \textit{Failure} mode case of~\Cref{alg:anti-sym-growing} across different rounds do not commute with one another, so it is possible that a state which is initially rejected by the algorithm can later be used to grow the anti-symmetric state (after a different state is rejected).
\Cref{alg:sample-efficient-antisym} describes this process formally. This algorithm also plays a key role in the quantum purification techniques described in~\Cref{sec:schur_and_purification}. Because this algorithm is used in these two different contexts, we allow its input to be an arbitrary state; the subsequent analysis in~\Cref{thm:sample-efficient-analysis} then describes the algorithm's performance in the special case that its input is $\rho^{\otimes n}$.

\begin{longfbox}[breakable=false, padding=1em, margin-top=1em, margin-bottom=1em]
\begin{algorithm}
Anti-symmetrization with copy re-use
\label{alg:sample-efficient-antisym}
\end{algorithm}
\noindent \textbf{Input:} An unknown state $\rho \in \mathcal{L}\left(\left(\mathbb{C}^d\right)^{\otimes n}\right)$ and a confidence parameter $\epsilon$. \\
\noindent \textbf{Output:} An anti-symmetric state on $k$ qudits for some $1 \leq k \leq d$. 
\begin{enumerate}
    \item Label two registers (sets of qudits): \texttt{anti-symmetric} and \texttt{candidates}. 
    \item Set $k=1$ and the first qudit of the input state to the \texttt{anti-symmetric} register, and all other qudits to the \texttt{candidates} register. 
    \item Repeat $\lceil n d\log(d/\epsilon)\rceil$ times:
    \begin{enumerate}
        \item Take the state from the \texttt{anti-symmetric} register and a randomly selected qudit from the \texttt{candidates} register, then apply~\Cref{alg:anti-sym-growing} to them. 
        \item If the run of \Cref{alg:anti-sym-growing} from (a) returns \textit{success}, increase $k$ by 1, and add the returned state to the \texttt{anti-symmetric} register. If $k = d$ or \texttt{candidates} is empty, go to step 4.
        \item If the run of \Cref{alg:anti-sym-growing} from (a) returns \textit{failure}, add the first $k$ qudits of the returned state to the \texttt{antisym\allowbreak metric} register and the last qudit of the returned state to the \texttt{candidates} register. 
    \end{enumerate}
    \item Return the state in the \texttt{anti-symmetric} register. 
\end{enumerate}
\end{longfbox}

\begin{theorem}
    \label{thm:sample-efficient-analysis}
    \Cref{alg:sample-efficient-antisym} is correct, that is, it always returns an anti-symmetric state. Furthermore if the input state is of the form $\rho^{\otimes n}$, where $\rho$ has rank $r$ and minimum non-zero eigenvalue $\lambda_{\mathrm{min}}$, then the algorithm returns the unique anti-symmetric state over the support of $\rho$ with probability at least $1 - \left(\epsilon+\exp(-\Omega(n\lambda_{\mathrm{min}}))\right)$ provided $n \geq C r/\lambda_{\text{min}}$ for a sufficiently large constant $C>0$.
    
    Explicitly, if 
    \begin{align}
        \rho = \sum_{i=1}^r \lambda_i \ketbra{\varphi_i}{\varphi_i}
    \end{align}
    then~\Cref{alg:sample-efficient-antisym} returns a state proportional to
    \begin{align}
        \Pi_{\mathrm{anti}}^r \left(\ket{\varphi_1}\otimes\ket{\varphi_2} \otimes ... \otimes \ket{\varphi_r}\right)
    \end{align}
    with probability as bounded above. 
\end{theorem}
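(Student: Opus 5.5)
Correctness follows exactly as for \Cref{alg:sample-inefficient-antisym}: every step applies \Cref{alg:anti-sym-growing} to a state whose first $k$ registers lie in the range of $\Pi^k_{\mathrm{anti}}$. In both outcomes the output is anti-symmetric on the appropriate registers. So the whole work is the success-probability bound for input $\rho^{\otimes n}$.

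\textbf{Reduction to a pure ensemble.} As in \Cref{thm:sample-inefficient-analysis}, I expand $\rho^{\otimes n}$ as a mixture over strings $x\in[r]^n$, where $x$ has probability $\prod_t\lambda_{x_t}$ and corresponds to the product state $\ket{\varphi_{x_1}}\otimes\cdots\otimes\ket{\varphi_{x_n}}$. Let $m_j(x)$ be the number of occurrences of $j$ in $x$. A Chernoff bound plus a union bound over $j\in[r]$ gives $m_j\ge n\lambda_j/2\ge n\lambda_{\min}/2$ for all $j$, except with probability $r\exp(-\Omega(n\lambda_{\min}))$. Since $n\ge Cr/\lambda_{\min}$, we have $r\le n\lambda_{\min}/C$, so this failure probability is $\exp(-\Omega(n\lambda_{\min}))$.

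From here I fix such an $x$. The algorithm acts only by permutations on the span of $\{\ket{\varphi_j}\}$, so every intermediate state lives in the type class of $x$, i.e.\ the span of permutations of $\ket{\varphi_x}$. The $S_n$ action on this space is the permutation module $M^\mu$, with $\mu$ the sorted multiplicities.

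\textbf{Structure of the state.} I then track the decomposition into isotypic components. The anti-symmetric register of size $k$ together with the candidate pool of size $n-k$ carries the $S_k\times S_{n-k}$ action, with sign on the first factor. By Pieri's rule, growth is possible exactly when the current component admits adding a box to the first column. The claim to establish is that the maximal achievable anti-symmetric size equals the number of distinct letters present, namely $r$, since all $m_j\ge1$.

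\textbf{Per-step success probability.} Conditioned on a current anti-symmetric state of size $k<r$, I need a uniform lower bound $p_k$ on the probability that a random candidate makes the growth measurement succeed. Heuristically, a fraction of candidates carrying a missing letter $j\notin S$ is at least $\sum_{j\notin S} m_j/n\gtrsim (r-k)\lambda_{\min}/2$, and each such candidate succeeds with probability about $1/(k+1)$. This is the same estimate as in the coupon-collector analysis.

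The difficulty is that after failures the pool is entangled with the anti-symmetric register and is no longer a product state. I would handle this with a commutation/averaging argument. The reduced state of the pool, symmetrized by the uniform random choice of candidate, is invariant under $S_{n-k}$. The success probability then equals $\mathrm{Tr}[\Pi^{k+1}_{\mathrm{anti}}\,\cdot\,]$, averaged over positions, and this can be computed via Pieri-rule branching ratios: a ratio of dimensions of the form (hook/content factor). I expect this to give $p_k\ge c\,(r-k)/(n(k+1))\cdot\min_j m_j\ge c'(r-k)\lambda_{\min}/r$, or a bound at least $\Omega(1/(nd))$ per step whenever growth is still possible. This step is the main obstacle: proving the failure branch never traps the state in a component where growth is impossible, and bounding the branching ratio from below.

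\textbf{Concluding the count.} If I can only secure the crude bound $p_k\ge 1/(n(k+1))\ge 1/(nd)$, then each of the at most $d$ growth stages succeeds within $nd\log(d/\epsilon)/d\cdot d$ trials except with probability $\epsilon/d$. This uses $(1-1/(nd))^{T}\le e^{-T/(nd)}$ with the budget $T=\lceil nd\log(d/\epsilon)\rceil$ split across stages. A union bound over the $\le d$ stages then gives failure probability $\epsilon$. Adding the typicality failure yields the claimed $\epsilon+\exp(-\Omega(n\lambda_{\min}))$. Once size $r$ is reached, uniqueness of the anti-symmetric state on $\mathrm{span}\{\varphi_j\}^{\otimes r}$ identifies the output as $\propto\Pi^r_{\mathrm{anti}}(\ket{\varphi_1}\otimes\cdots\otimes\ket{\varphi_r})$, and growth beyond $r$ is impossible.
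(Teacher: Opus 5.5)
\textbf{Gap 1: the failure branch can trap the state.} Your plan relies on the claim that, once every letter occurs, the achievable anti-symmetric size is $r$ and the failure branch never traps the state. That claim is false, and it is exactly where the paper's $\exp(-\Omega(n\lambda_{\min}))$ term comes from.

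The algorithm acts only through $\mathbb{C}[S_n]$, so the conserved quantity is the isotypic label $\lambda$ of the whole $n$-qudit state, not the type class. By \Cref{lem:b_growth}, the growth probability is exactly $0$ once $k=\lambda_1'$. A permutation module $M^\mu$ contains every Specht module $V_\lambda$ with $\lambda$ dominating $\mu$, including the trivial one. So every type class carries positive weight on components with $\lambda_1'<r$, and that weight can never reach size $r$.

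A concrete case: for $n=2$ and input $\ket{\varphi_1}\ket{\varphi_2}$, the growth measurement fails with probability $1/2$. It then leaves the symmetric state, from which growth is impossible forever. So conditioning on typical multiplicities does not give success probability $1-\epsilon$. You must additionally show that the weight on $\{\lambda:\lambda_1'<r\}$ is $\exp(-\Omega(n\lambda_{\min}))$. The paper gets this from concentration of the Schur--Weyl distribution of $\rho^{\otimes n}$. In your type-class language it would need, e.g., an RSK/Greene-type argument that a uniformly random rearrangement of $x$ has a strictly decreasing subsequence of length $r$.

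\textbf{Gap 2: the per-step bound and the trial count.} The per-step bound you call the main obstacle is handled in the paper without tracking the register--pool entanglement. Write $B^S=\frac{1}{k+1}I-\frac{1}{(n-k)(k+1)}\bigl(T([n])-T(S)-T([n]\setminus S)\bigr)$, where $T(R)$ is the sum of transpositions inside $R$. Each of these acts as a content sum on isotypic components: $c(\lambda)$, $-k(k-1)/2$, and $c(\nu)$ for the pool's components. \Cref{lem:Pieri} restricts $\nu$ to partitions obtained by deleting one box from $k$ distinct rows of $\lambda$. This gives $\bra{\psi}B^S\ket{\psi}\ge\frac{\lambda_1'-k}{(n-k)(k+1)}$ uniformly over all states in the $\lambda$-component with $\Pi^S_{\mathrm{anti}}\ket{\psi}=\ket{\psi}$. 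No dimension or branching ratios are needed.

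Your fallback count also does not close. With only $p_k\ge 1/(nd)$ you can guarantee completion only after order $nd^2$ trials, which exceeds the fixed budget $\lceil nd\log(d/\epsilon)\rceil$. Splitting that budget evenly over $d$ stages only bounds each stage's failure by about $(\epsilon/d)^{1/d}$, not $\epsilon/d$. It is the factor $\lambda_1'-k$ in the per-step bound that gives coupon-collector scaling $n\lambda_1'\log(\lambda_1'/\epsilon)\le nd\log(d/\epsilon)$.
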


\begin{proof}
    Correctness follows immediately from correctness of~\Cref{alg:anti-sym-growing}. The analysis of the state returned by the algorithm is most cleanly expressed using techniques from Schur-Weyl duality, and so is deferred to~\Cref{sec:schur_and_purification}. 
\end{proof}

\begin{remark}
    The proof of \Cref{thm:sample-efficient-analysis} also shows the following: If we are promised in advance that the unknown state $\rho$ has rank $r$, then it suffices to take $\lceil n r\log(r/\epsilon)\rceil$ repetitions in Step 3 of \Cref{alg:sample-efficient-antisym}.
\end{remark} \section{Connections to Schur-Weyl duality}
\label{sec:schur_and_purification}

In this section we  discuss the anti-symmetrization algorithm developed in~\Cref{sec:antisym_sec} from the perspective of Schur-Weyl duality. This perspective gives two new insights: an analysis of the sample-efficient anti-symmetrization algorithm (\Cref{alg:sample-efficient-antisym}) and an extension of that algorithm to give a new algorithm for optimal quantum purity amplification~\cite{li2024optimal}. The latter algorithm gives a qudit generalization of the swap-test-based optimal quantum purity amplification protocol for qubits developed in~\cite{brahmachari2025optimal}, answering an open question from that work. We begin this section with a review of some necessary concepts from Schur-Weyl duality, then present the analysis of~\Cref{alg:sample-efficient-antisym} and its extension. 

\subsection{Schur-Weyl background}

This subsection reviews some facts from the theory of Schur-Weyl duality. We review only what is required for the proofs in this paper, and so many foundational details of the theory are omitted. We refer the reader to~\cite[Chapter 2]{wright2016learn} for a more complete introduction to the topic and its relationship with quantum information. 

The key insight we require is that $\left(\mathbb{C}^d\right)^{\otimes n}$ is isomorphic to a direct sum of tensor products of irreducible modules of the symmetric group $S_n$ and the general linear group $\operatorname{GL}_d$. We can write this correspondence as
\begin{align}
\label{eq:Schur-Weyl}
\left(\mathbb{C}^d\right)^{\otimes n} \simeq \bigoplus_{\lambda \,\vdash_d \,n} V_\lambda \otimes W_\lambda\, ,
\end{align}
where each $W_\lambda$ is an irreducible module of $\operatorname{GL}_d$ and each $V_\lambda$ is an irreducible module of $S_n$.
It is worth noting that the isomorphism mapping from $\left(\mathbb{C}^d\right)^{\otimes n}$ to $\bigoplus_{\lambda \,\vdash_d \,n} V_\lambda \otimes W_\lambda$, which we commonly refer to as the full Schur transform, is in general not unique. In particular, to define a concrete Schur transform, one needs to specify a proper basis for both $W_\lambda$ and $V_\lambda$, e.g., the Gelfand-Tsetlin basis for $W_{\lambda}$ and the Young-Yamanouchi basis for $V_{\lambda}$. 
The particular choice of basis will not be important for our application. Nonetheless, we will arbitrarily fix a basis and denote the corresponding Schur transform unitary operator by $\Uschur: 
\left(\mathbb{C}^d\right)^{\otimes n} \mapsto \bigoplus_{\lambda \,\vdash_d \,n} V_\lambda \otimes W_\lambda
$ for ease of discussion.

The irreducible modules $V_\lambda$ and $W_\lambda$ are indexed by \textit{partitions} of $n$ into at most $d$ parts, denoted $\lambda \vdash_d n$. Each such $\lambda$ is a non-increasing sequence of at most $d$ positive integers that sum to $n$. We represent these partitions using Young diagrams: sequences of boxes with lengths of the rows corresponding to the integers in the partition. For example, 
\begin{align}
    \lambda = (4,2,1)= \ydiagram{4,2,1} 
\end{align}
is a partition of 7 into three parts. We use the notation $\lambda_i$ for the length of the $i$-th row of the Young diagram, and $\lambda'_i$ for the length of the $i$-th column.

For any given box (or cell) of a Young tableau, its content is defined to be its column index minus its row index. For example, the shaded box in the diagram 
\begin{align}
    \ydiagram{4,2,1}*[*(lightgray)]{3+1} 
\end{align}
has content $4 - 1= 3$. For a given box $\,\square\,$ we denote its content by $c(\square)$. The content sum of a Young diagram, denoted $c(\lambda)$ is defined to be the sum of the contents of all its boxes. For our running example: 
\begin{align}
    c(\lambda) = \sum_{\square \, \in\, \lambda}c(\square) = 0 -1-2+1+0+2+3 =3.
\end{align}

We say a vector $\ket{\psi} \in \left(\mathbb{C}^d\right)^{\otimes n}$ is in the $\lambda$-isotypic component of $\left(\mathbb{C}^d\right)^{\otimes n}$ if $\Uschur \ket{\psi}$ lies in a $V_\lambda \otimes W_\lambda$ module \footnote{As we have said, $\Uschur$ is not unique. However, a standard fact from representation theory is that the isotypic component itself is unique regardless of the choice of $\Uschur$.}. We also denote the projector onto this isotypic component by $\Pi_\lambda$.

For any $\sigma \in S_n$, the permutation operator $P_{\sigma}$ can be block diagonalized into an operator that acts trivially on the $W_{\lambda}$ modules under the Schur transform, i.e., 
\begin{align}
\Uschur P_\sigma \Uschur^\dagger = \bigoplus_{\lambda \,\vdash \, n} D_\lambda(\sigma)\otimes I_{W_\lambda} \, ,
\end{align} 
where $D_\lambda(\sigma)$ is the representation of $\sigma$ under the irrep $V_{\lambda}$, and $I_{W_\lambda}$ is the identity under the irrep $W_{\lambda}$.

This also means it is sometimes possible to characterize the action of some collection of permutations on a state just by knowing the isotypic component the state lies in. We give one example next. 

\begin{lemma}
    \label{lem:transposition_success}
    Let $\ket{\psi}$ be a state in the $\lambda$-isotypic component of $\left(\mathbb{C}^d\right)^{\otimes n}$ and define 
    \begin{align*}
        T([n]) = \sum_{i,j \in [n], \ i < j}P_{(i,j)}.
    \end{align*}
    Then we have 
    \begin{align}
        T([n]) \ket{\psi} = c(\lambda) \ket{\psi} \, .
    \end{align}
\end{lemma}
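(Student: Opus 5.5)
The argument has three steps: reduce to a single $S_n$-irrep, identify $T([n])$ as a central element acting by a scalar, and compute that scalar through Jucys--Murphy elements.

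\emph{Reduction to the irrep.} Under the Schur transform, $\Uschur T([n]) \Uschur^\dagger = \bigoplus_{\mu} D_\mu(T)\otimes I_{W_\mu}$, where $D_\mu(T) := \sum_{i<j} D_\mu((i,j))$. Since $\ket{\psi}$ lies in the $\lambda$-isotypic component, it suffices to show that $D_\lambda(T)$ equals the scalar $c(\lambda)$ on the irrep $V_\lambda$.

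\emph{Scalar action.} The element $T([n])$ is the sum over the conjugacy class of transpositions, so it commutes with every $P_\sigma$. By Schur's lemma, $D_\lambda(T)$ is therefore a scalar on $V_\lambda$. The remaining task is to identify that scalar.

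\emph{Identifying the scalar.} Write $T([n]) = \sum_{j=2}^n X_j$ with $X_j = \sum_{i<j} P_{(i,j)}$, the Jucys--Murphy elements. I will use the standard fact, from Okounkov--Vershik or the Young seminormal form, that in the Young--Yamanouchi basis of $V_\lambda$ the vector $v_t$ indexed by a standard Young tableau $t$ satisfies $X_j v_t = c(\text{box of } t \text{ containing } j)\, v_t$. Summing over $j$ gives $T v_t = \sum_{\square\in\lambda} c(\square)\, v_t = c(\lambda) v_t$, because the box containing $1$ has content $0$ and every box of $\lambda$ appears exactly once in $t$.

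\emph{Main obstacle and alternative.} The only nontrivial input is the Jucys--Murphy eigenvalue fact. If I want to avoid citing it, I would instead compute the scalar by a character identity: the scalar equals $\binom{n}{2}\chi_\lambda((12))/\dim V_\lambda$, and Frobenius's formula gives $\binom{n}{2}\chi_\lambda((12))/\dim V_\lambda = \sum_i \left[\binom{\lambda_i}{2} - \binom{\lambda'_i}{2}\right]$. This sum equals $c(\lambda)$ because, summing contents row by row, row $i$ contributes $\sum_{j=1}^{\lambda_i} (j-i)$; collecting terms in $j$ and $i$ separately gives $\sum_i\binom{\lambda_i}{2}-\sum_j\binom{\lambda'_j}{2}$. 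Either route works, but citing Jucys--Murphy is the cleanest.
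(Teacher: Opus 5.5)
Your proposal is correct and follows the same route as the paper: $T([n])$ is the class sum of transpositions, hence central, so by Schur's lemma it acts as a scalar on each irreducible copy of $V_\lambda$ inside the $\lambda$-isotypic component. The only difference is that you explicitly identify this scalar as $c(\lambda)$, via the Jucys--Murphy eigenvalues on the Young--Yamanouchi basis or via Frobenius's formula for $\chi_\lambda((12))/\dim V_\lambda$; the paper's sketch does not carry out this step and instead leaves it to its citations of Jucys, Murphy, and Okounkov--Vershik.
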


\begin{proof}
    This is, or at least is equivalent to, a standard result in the representation theory of the symmetric group~\cite{okounkov1996new,jucys1974symmetric,murphy1981new}. We briefly sketch the main ideas here. The starting point is to observe that the sum of transpositions $\sum_{i,j \in [n], i<j} (i,j)$ is central in the group algebra of the symmetric group, meaning it commutes with every other element in the algebra. Applying Schur's lemma then gives that for any vector $\ket{x}$ that lies in some irreducible representation of type $\lambda$ it holds 
    $T([n]) \ket{x} = c(\lambda) \ket{x}$ for some constant $c(\lambda)$.
    For a vector $\ket{\psi}$ within the $\lambda$-isotypic component, we can decompose $\ket{\psi}$ into its projections $\{ \ket{x_i} \}$ within each irreducible representations of type $\lambda$ within the isotypic component. We thus immediately have that $T([n]) \ket{\psi}= \sum_i T([n]) \ket{x_i} = c(\lambda) \sum_i \ket{x_i}  = c(\lambda) \ket{\psi}$.
    This concludes the proof of \Cref{lem:transposition_success}.
    \end{proof}

    Another important concept in our work is that of \textit{restrictions} of isotypic components. Consider an $n$-qudit state $\ket{\psi}$ in a $\lambda$-isotypic component. Assume further that a subset of $k$ qudits of that state are in a $\mu$-isotypic component, meaning that if we project those qudits onto the $\mu$-isotypic component, it leaves the state $\ket{\psi}$ unchanged (or, equivalently, if we trace out all but those $k$ qudits, the resulting state is supported on the $\mu$-isotypic component). The Littlewood-Richardson rule then describes the possible isotypic components which the complementary subset of $n-k$ qudits of $\ket{\psi}$ can be supported on. We will be interested in a special case of this rule, called Pieri's rule, which considers the same situation when the $k$ qudits are supported on the $\mu = (1,1,1,..,1)$-isotypic subspace. We state this next. 

    \begin{lemma}[Pieri's Rule, e.g., see Theorem 2.7.1 of \cite{grinberg2020hopf}]
    \label{lem:Pieri}
    Let $\ket{\psi}$ be an $n$-qudit state in a $\lambda$-isotypic component. Assume that a subset of $k$ qudits of $\ket{\psi}$ are in the $\mu = (1,1,...,1)$-isotypic component. Then the $n-k$ complementary qubits of $\ket{\psi}$ can only have support overlapping $\nu$-isotypic components of partitions $\nu$ that can be obtained by removing exactly one box from $k$ distinct rows of $\lambda$.
    \end{lemma}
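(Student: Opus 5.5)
The plan is to translate both the hypothesis and the conclusion into statements about the group algebra $\mathbb{C}[S_n]$, and then read off the answer from the classical Pieri rule for symmetric functions.

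\textbf{Step 1 (reduction to the first $k$ qudits).} Since every $P_\sigma$ commutes with every isotypic projector $\Pi_\lambda$ on $(\mathbb{C}^d)^{\otimes n}$, I conjugate by a permutation sending the given subset of $k$ qudits to $\{1,\dots,k\}$. This does not change which $\lambda$-isotypic component $\ket{\psi}$ lies in. So assume $\Pi^k_{\mathrm{anti}}\ket{\psi}=\ket{\psi}$ and $\Pi_\lambda\ket{\psi}=\ket{\psi}$. Note that the $(1,\dots,1)$-isotypic component on $k$ qudits is exactly the range of $\Pi^k_{\mathrm{anti}}$.

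\textbf{Step 2 (isotypic projectors as central idempotents).} By Schur--Weyl duality, each isotypic projector is the image of a central idempotent of the relevant group algebra under the permutation representation:
\begin{align*}
\Pi_\mu = \frac{\dim V_\mu}{m!}\sum_{\sigma\in S_m}\chi_\mu(\sigma)P_\sigma .
\end{align*}
This applies to $\Pi_\lambda$ for $S_n$, to $\Pi^k_{\mathrm{anti}}$ for $S_k$ (the sign character), and to $\Pi_\nu$ on the last $n-k$ qudits for $S_{n-k}$. It holds because on $V_\mu\otimes W_\mu$ the operator $P_\sigma$ acts as $D_\mu(\sigma)\otimes I_{W_\mu}$, and the central idempotent $e_\mu$ acts as the identity on $V_\mu$ and as zero on the other irreps.

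Decompose $\ket{\psi}=\sum_\nu (I\otimes\Pi_\nu)\ket{\psi}$ over the last $n-k$ qudits. Each summand equals $\Pi_\lambda\,(\Pi^k_{\mathrm{anti}}\otimes\Pi_\nu)\ket{\psi}$, because all these operators commute. It therefore suffices to show that
\begin{align*}
\Pi_\lambda\,(\Pi^k_{\mathrm{anti}}\otimes \Pi_\nu)=0
\end{align*}
whenever $\lambda/\nu$ is not a vertical strip, i.e.\ whenever $\nu$ is not obtained from $\lambda$ by removing one box from each of $k$ distinct rows.

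\textbf{Step 3 (branching).} The operator above is the image of the group-algebra element $e_\lambda\,(e_{(1^k)}\otimes e_\nu)$. This element acts on any irrep $V_\mu$ of $S_n$ as zero unless $\mu=\lambda$. On $V_\lambda$ it is the projection onto the $\mathrm{sgn}\otimes V_\nu$-isotypic part of $\mathrm{Res}^{S_n}_{S_k\times S_{n-k}}V_\lambda$. So it vanishes identically unless that restriction contains $\mathrm{sgn}\otimes V_\nu$.

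By Frobenius reciprocity, the multiplicity of $\mathrm{sgn}\otimes V_\nu$ in this restriction equals the multiplicity of $V_\lambda$ in $\mathrm{Ind}_{S_k\times S_{n-k}}^{S_n}(\mathrm{sgn}\otimes V_\nu)$. Under the Frobenius characteristic map, this induced module corresponds to $e_k s_\nu$. The Pieri rule for symmetric functions states that $e_k s_\nu=\sum s_\lambda$, summed over those $\lambda$ with $\lambda/\nu$ a vertical $k$-strip, which is exactly the cited Theorem 2.7.1 of \cite{grinberg2020hopf}. Hence the multiplicity is $0$ unless $\lambda$ arises from $\nu$ by adding $k$ boxes in distinct rows, equivalently $\nu$ arises from $\lambda$ by removing one box from $k$ distinct rows.

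\textbf{Expected obstacle.} The main care point is Step 2/3: justifying that the representation $\sigma\mapsto P_\sigma$ of $\mathbb{C}[S_n]$ faithfully transports the vanishing of $e_\lambda(e_{(1^k)}\otimes e_\nu)$ on every $V_\mu$ to the vanishing of the operator on $(\mathbb{C}^d)^{\otimes n}$. This follows from the block form $\Uschur P_\sigma\Uschur^\dagger=\bigoplus_\mu D_\mu(\sigma)\otimes I_{W_\mu}$. The remaining ingredients are standard facts that I would cite rather than reprove.
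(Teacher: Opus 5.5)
Your proposal is correct and is essentially the paper's approach. The paper gives no argument beyond citing the symmetric-function Pieri rule (Theorem 2.7.1 of \cite{grinberg2020hopf}) and, in the appendix, restating it as the branching rule $\mathrm{Res}^{S_n}_{S_R\times S_{R^c}}V_\lambda\cong\bigoplus c^\lambda_{\mu,\nu}\,V_\mu\otimes V_\nu$, with $c^\lambda_{(1^k),\nu}\in\{0,1\}$ determined by vertical strips; your central-idempotent reduction, Frobenius reciprocity and characteristic map spell out the bridge from that theorem to the state-level statement, which the paper leaves implicit (one small wording fix: $\sigma\mapsto P_\sigma$ is not faithful when $d<n$, but you only need the block form $\Uschur P_\sigma\Uschur^\dagger=\bigoplus_\mu D_\mu(\sigma)\otimes I_{W_\mu}$, which is what you actually use).
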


    As an illustrative example, assume a 6-qudit state $\ket{\psi}$ is in the $\lambda$-isotypic component with
    \begin{align}
        \lambda = (2,2,2) = \ydiagram{2,2,2}\, .
    \end{align}
    If the first two qudits of $\ket{\psi}$ are in the $(1,1)$-isotypic component, then the mixed state $\text{Tr}_{12}\left[\ketbra{\psi}{\psi}\right]$ may be supported on the $(2,1,1)$-isotypic component, since it can be obtained by deleting one box from the second row of $\lambda$ and one from the third, as indicated:
    \begin{align}
        \ydiagram{2,2,2}*[*(lightgray)]{0,1+1,1+1}  \rightarrow_{\text{allowed deletion}} \ydiagram{2,1,1}\,.
    \end{align}
    However, this mixed state may not be supported on the $(2,2)$-isotypic component (despite it having the correct number of overall boxes) because this transition would require deleting two boxes from the same row:
        \begin{align}
        \ydiagram{2,2,2}*[*(lightgray)]{0,0,2}  \rightarrow_{\text{forbidden deletion}} \ydiagram{2,2}\,.
    \end{align}

    There is a close relation between Pieri's rule and the anti-symmetric subspace, coming from the following fact. 

    \begin{lemma}
        \label{lem:anti_isotypic}
        For any state $\ket{\psi}$ on $n$ qudits and for any subset of qudits $S$, the following are equivalent:
        \begin{enumerate}[label = (\roman*)]
            \item $\Pi^S_{\mathrm{anti}} \ket{\psi} = \ket{\psi}$. 
            \item The qudits of $\ket{\psi}$ in $S$ are in the $(1,1,...,1)$-isotypic component. 
        \end{enumerate}        
    \end{lemma}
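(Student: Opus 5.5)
We will prove \Cref{lem:anti_isotypic} by relabelling to put $S$ in the first $|S|=k$ positions. Both conditions only concern the action of permutations of $S$ on the tensor factor $(\mathbb{C}^d)^{\otimes k}$ indexed by $S$, with the remaining factor as a spectator. Write $\mathcal{H}=(\mathbb{C}^d)^{\otimes k}\otimes(\mathbb{C}^d)^{\otimes (n-k)}$. We interpret (ii) as $(\Pi_{(1^k)}\otimes I)\ket{\psi}=\ket{\psi}$, where $\Pi_{(1^k)}$ is the isotypic projector for $(\mathbb{C}^d)^{\otimes k}$; this matches the parenthetical reformulation in the paper, since a reduced state on $S$ being supported on a subspace is equivalent to the global state being fixed by that projector tensored with the identity. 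The claim then reduces to the operator identity $\Pi^{k}_{\mathrm{anti}}=\Pi_{(1^k)}$ on $(\mathbb{C}^d)^{\otimes k}$, after which (i)$\Leftrightarrow$(ii) is immediate.

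To prove this identity, we conjugate by $\Uschur$ on $k$ qudits. By the displayed block decomposition of $P_\sigma$, we get
\begin{align*}
\Uschur\Pi^k_{\mathrm{anti}}\Uschur^\dagger=\bigoplus_{\lambda\vdash_d k}\Big(\tfrac{1}{k!}\textstyle\sum_{\sigma}\mathrm{sgn}(\sigma)D_\lambda(\sigma)\Big)\otimes I_{W_\lambda}.
\end{align*}
The operator $E_\lambda:=\frac{1}{k!}\sum_\sigma \mathrm{sgn}(\sigma)D_\lambda(\sigma)$ is the projector onto the sign-isotypic part of the irrep $V_\lambda$. This is the standard character projection formula; concretely, $D_\lambda(\tau)E_\lambda=\mathrm{sgn}(\tau)E_\lambda$, so its range consists of vectors on which $S_k$ acts by the sign. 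Since $V_\lambda$ is irreducible, that range is either zero or all of $V_\lambda$. It is all of $V_\lambda$ exactly when $V_\lambda$ is the one-dimensional sign representation, which is $V_{(1^k)}$. Therefore $E_\lambda=\delta_{\lambda,(1^k)}I$, and $\Pi^k_{\mathrm{anti}}=\Uschur^\dagger(I_{V_{(1^k)}\otimes W_{(1^k)}}\oplus 0)\Uschur=\Pi_{(1^k)}$.

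If the identification of $V_{(1^k)}$ with the sign representation is not taken as given, we can use \Cref{lem:transposition_success} as an alternative. Every transposition acts as $-1$ on the range of $\Pi^k_{\mathrm{anti}}$, so $T([k])$ acts there as $-\binom{k}{2}$. The content sum $c(\lambda)=-\binom{k}{2}$ holds only for the single column $\lambda=(1^k)$, since each box contributes at least minus its row index minus one, with equality only for column-one boxes. Hence the range of $\Pi^k_{\mathrm{anti}}$ lies inside the $(1^k)$-isotypic component. For the reverse inclusion, the $(1^k)$-isotypic component is $V_{(1^k)}\otimes W_{(1^k)}$ with $V_{(1^k)}$ one-dimensional, and its eigenvalue for $T$ forces each transposition to act by $-1$. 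This is because transpositions are conjugate, so in a one-dimensional representation they all act by the same sign $\pm1$.

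The main obstacle is only this representation-theoretic bookkeeping, namely pinning down that $V_{(1^k)}$ is the sign irrep (or running the content-sum argument above) and noting that the isotypic component is basis-independent. The remaining steps are immediate.
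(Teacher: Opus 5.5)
Your proof is correct, but it takes a different route from the paper. The paper's sketch works on the $\operatorname{GL}_d$/Young-symmetrizer side. It invokes the standard fact that the $(1,\dots,1)$-isotypic component is the span of $U^{\otimes n}P_\pi\Pi^S_{\mathrm{anti}}\ket{w}$ over $\pi\in S_n$ and $U\in\operatorname{GL}_d$, i.e.\ that it is generated by the image of the single-column Young symmetrizer. It then asserts that this span is exactly the anti-symmetric subspace.

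You instead stay entirely on the $S_n$ side. You use only the block form $\Uschur P_\sigma\Uschur^\dagger=\bigoplus_\lambda D_\lambda(\sigma)\otimes I_{W_\lambda}$ already displayed in the paper. You observe that the range of each block $E_\lambda$ of the anti-symmetrizer is a subrepresentation of $V_\lambda$ on which $S_k$ acts by the sign. Irreducibility then forces $E_\lambda=\delta_{\lambda,(1^k)}I$.

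Your route gives the explicit operator identity $\Pi^k_{\mathrm{anti}}=\Pi_{(1^k)}$. This covers the degenerate case $k>d$, where both sides vanish, and the equivalence with a spectator register follows at once. The only external input is that the sign representation is the Specht module $V_{(1^k)}$, which the paper itself uses in the proof of Lemma~\ref{lem:one_dimensional}. The paper's version relies on the heavier Young-symmetrizer description of isotypic components and leaves the final identification as ``not hard to see''. On the other hand, that viewpoint extends more directly to other shapes.

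Your content-sum alternative has two soft spots, though your main argument does not depend on it. First, the per-box bound $c(\square)\ge 1-i(\square)$ alone does not give $c(\lambda)>-\binom{k}{2}$ for non-column shapes. You also need $\sum_{\square}(i(\square)-1)\le\binom{k}{2}$, which holds because the $m$-th box in row order lies in row at most $m$. Second, its reverse inclusion still uses $\dim V_{(1^k)}=1$, so it does not fully avoid the identification it was meant to bypass.
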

    \begin{proof}
        This follows quickly from some standard results in representation theory. For completeness, we sketch the argument here, although some background is omitted. 
        In particular, the isotypic subspace of type $(1, 1, \cdots, 1)$ is of form $ \Span \{  \; U^{\otimes n} P_\pi  \Pi^S_{\mathrm{anti}} \ket{w} \forall  \pi \in S_n, U \in GL_d\}$, where $\ket{w} \in (\C^d)^{\otimes n}$ is some state that makes the subspace non-vanishing.
        It is then not hard to see that this is exactly the anti-symmetric subspace.
    \end{proof}

    \subsection{\texorpdfstring{Analysis of~\Cref{alg:sample-efficient-antisym}}{Analysis of sample-efficient anti-symmetrization}}

    We now use the machinery developed in the previous section to analyze the sample-efficient anti-symmetrization algorithm given in~\Cref{alg:sample-efficient-antisym}. The key tool used in this analysis is the following lemma, which we will use to lower bound the probability of a measurement successfully growing the size of the \texttt{anti-symmetric} register during step 3 of the algorithm.

    \begin{lemma}
    \label{lem:b_growth}
        Let $\ket{\psi}$ be an $n$-qudit state in the $\lambda$-isotypic component and recall that $\lambda_1'$ is the length of the first column of $\lambda$. Assume there is a subset of qudits $S \subseteq [n]$ with $|S| = k$ and 
        \begin{align}
            \Pi^S_{\mathrm{anti}} \ket{\psi} = \ket{\psi}. 
        \end{align}
        Then define the operator 
        \begin{align}
            B^S = \frac{1}{n-k} \left(\sum_{j \in [n]\backslash S} A^{S,j}\right).
        \end{align}
        We have 
        \begin{align}
            \bra{\psi}B^S \ket{\psi} = 0 
        \end{align}
        whenever $k = \lambda_1'$ and 
        \begin{align}
            \bra{\psi}B^S \ket{\psi} \geq \frac{\lambda_1' - k}{(n-k)(k+1)}
        \end{align}
        otherwise. 
    \end{lemma}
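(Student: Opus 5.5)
The plan is to compute $\bra{\psi}B^S\ket{\psi}$ exactly in terms of Casimir-type eigenvalues (\Cref{lem:transposition_success}), and then bound the result using Pieri's rule (\Cref{lem:Pieri}).

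\textbf{Reducing to transposition sums.} Since $\Pi^S_{\mathrm{anti}}\ket{\psi}=\ket{\psi}$, every transposition inside $S$ acts as $-1$ on $\ket{\psi}$. Summing the definition of $A^{S,j}$ over $j\notin S$ gives
\begin{align*}
(n-k)(k+1)\bra{\psi}B^S\ket{\psi} = (n-k) - \bra{\psi}X\ket{\psi}, \qquad X := \sum_{s\in S,\, j\notin S} P_{(s,j)}.
\end{align*}
Write $X = T([n]) - T(S) - T([n]\setminus S)$, where $T(\cdot)$ is the transposition sum restricted to a subset. I will evaluate the three pieces separately.
\begin{itemize}
\item By \Cref{lem:transposition_success}, $T([n])\ket{\psi}=c(\lambda)\ket{\psi}$.
\item By anti-symmetry on $S$, $T(S)\ket{\psi} = -\tfrac{k(k-1)}{2}\ket{\psi}$.
\item For $T([n]\setminus S)$, decompose $\ket{\psi}=\sum_\nu \ket{\psi_\nu}$, where $\ket{\psi_\nu}$ is the projection of the complementary $n-k$ qudits onto the $\nu$-isotypic component. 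These projectors commute with everything relevant, and \Cref{lem:transposition_success}, applied on the complement, gives $T([n]\setminus S)\ket{\psi_\nu}=c(\nu)\ket{\psi_\nu}$.
\end{itemize}
Combining the three, with $p_\nu=\|\psi_\nu\|^2$,
\begin{align*}
\bra{\psi}X\ket{\psi} = \sum_\nu p_\nu\Bigl(c(\lambda)-c(\nu)+\tfrac{k(k-1)}{2}\Bigr).
\end{align*}

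\textbf{Applying Pieri's rule.} By \Cref{lem:anti_isotypic}, the qudits in $S$ lie in the $(1,\dots,1)$-isotypic component. \Cref{lem:Pieri} then says that only those $\nu$ obtained from $\lambda$ by deleting the last box of each row in some set $R$ of $k$ distinct rows appear. For such $\nu$,
\begin{align*}
c(\lambda)-c(\nu)=\sum_{i\in R}(\lambda_i-i).
\end{align*}
It therefore suffices to show, for each admissible $R$, that
\begin{align*}
(n-k) - \sum_{i\in R}(\lambda_i - i) - \tfrac{k(k-1)}{2} \;\ge\; \lambda_1'-k .
\end{align*}
I split this into two elementary estimates.
\begin{itemize}
\item $\sum_{i\in R} i \ge 1+\dots+k = \tfrac{k(k+1)}{2}$, so $\sum_{i\in R} i - \tfrac{k(k-1)}{2}\ge k$.
\item $n-\sum_{i\in R}\lambda_i$ is the total length of the $\lambda_1'-k$ rows not in $R$, each of length at least $1$, so it is at least $\lambda_1'-k$.
\end{itemize}
Adding the two gives $n-\sum_{i\in R}\lambda_i+\sum_{i\in R} i-\tfrac{k(k-1)}{2}\ge \lambda_1'$. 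This is exactly the displayed inequality after moving $k$ across. Averaging over $\nu$ with weights $p_\nu$ and dividing by $(n-k)(k+1)$ yields the claimed lower bound. Note that Pieri's rule needs $k\le\lambda_1'$ for any $R$ to exist, which is consistent with the hypothesis $\Pi^S_{\mathrm{anti}}\ket{\psi}=\ket{\psi}$.

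\textbf{The case $k=\lambda_1'$.} Here I would argue directly rather than through the inequality. Each term satisfies $\bra{\psi}A^{S,j}\ket{\psi}=\bra{\psi}\Pi^{S\cup\{j\}}_{\mathrm{anti}}\ket{\psi}$, and this is the squared norm of a vector anti-symmetric on $k+1=\lambda_1'+1$ qudits. Such a vector lies in the $\lambda$-isotypic component, because the projector $\Pi^{S\cup\{j\}}_{\mathrm{anti}}$ lies in the group algebra and hence commutes with $\Pi_\lambda$. By \Cref{lem:anti_isotypic} and \Cref{lem:Pieri}, a nonzero such vector would require deleting boxes from $\lambda_1'+1$ distinct rows of $\lambda$, which is impossible. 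So every term vanishes and $\bra{\psi}B^S\ket{\psi}=0$.

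\textbf{Main obstacle.} The step needing the most care is justifying that $T([n]\setminus S)$ acts as $c(\nu)$ on the $\nu$-component of the complement, and that these components are orthogonal and sum to $\ket{\psi}$. This means applying \Cref{lem:transposition_success} to the subsystem $[n]\setminus S$ while $\ket{\psi}$ is entangled with $S$. I would handle it via Schur–Weyl duality on the complementary tensor factor: the isotypic projectors of $[n]\setminus S$ act on that factor alone and commute with $T([n]\setminus S)$, so the lemma applies componentwise.
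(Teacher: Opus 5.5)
Your proposal is correct and follows the paper's proof essentially step for step: the same rewriting of $B^S$ via $T([n]) - T(S) - T([n]\setminus S)$, the same content formula $c(\lambda)-c(\nu)=\sum_{i\in R}(\lambda_i-i)$ from Pieri's rule, and the same two estimates $\sum_{i\in R} i\ge k(k+1)/2$ and $n-\sum_{i\in R}\lambda_i\ge \lambda_1'-k$. The only minor deviation is the case $k=\lambda_1'$, which you handle differently but validly. You show directly that each $\Pi^{S\cup\{j\}}_{\mathrm{anti}}\ket{\psi}$ vanishes, because the $\lambda$-isotypic component cannot support anti-symmetry on $\lambda_1'+1$ qudits. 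The paper instead plugs the unique admissible $\nu$ into the same formula and computes $c(\lambda)-c(\nu)=n-k(k+1)/2$ to get exactly zero.
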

    \begin{proof}
        We first extend the notation of~\Cref{lem:transposition_success} and, for any subset $R \subseteq [n]$ define 
        \begin{align}
            T(R) = \sum_{i, j \in R, \ i < j } P_{\transposition{i}{j}}.
        \end{align}
        Then we rewrite the operator $B^S$ as 
        \begin{align}
            B^S &= \frac{1}{n-k} \left(\sum_{j \in [n]\backslash S} A^{S,j}\right) \\
            &= \frac{1}{(n-k)(k+1)}\left(\sum_{j \in [n]\backslash S}\left( I - \sum_{i \in S}P_{\transposition{i}{j}} \right)\right) \\
            &=\left(\frac{1}{k+1}\right) I - \frac{1}{(n-k)(k+1)}\sum_{ i \in S} \sum_{ j \in [n] \backslash S} P_{(i,j)}\\
            &= \left(\frac{1}{k+1}\right) I - \frac{1}{(n-k)(k+1)}\left(T([n]) - T(S) - T([n]\backslash S) \right)\, .
            \label{eq:b_lem_op_decomp}
        \end{align}
        Next, by~\Cref{lem:Pieri,lem:anti_isotypic} we can write 
        \begin{align}
            \ket{\psi} = \sum_{\nu \,\in \,\mathcal{V}} \alpha _\nu \ket{\psi_\nu}
            \label{eq:b_lem_state_decomp}
        \end{align}
        where we have 
        \begin{align}
            \left(\Pi^{S}_{anti} \otimes \Pi^{S^c}_\nu\right) \ket{\psi_\nu} = \ket{\psi_\nu}
        \end{align}
        where $\Pi^{S^c}_\nu$ denotes the projector onto the $\nu$ isotypical component applied to the qudits in $S^c$ and the sum runs over a set $\mathcal{V}$ of all partitions which can be obtained from $\lambda$ by deleting exactly one box from $k$ different rows. We also fix $\ket{\psi_\nu}$ to be properly normalized, so $\sum_\nu \alpha_\nu^2 = 1$. Combining~\Cref{eq:b_lem_state_decomp,eq:b_lem_op_decomp} and using~\Cref{lem:transposition_success} gives
        \begin{align}
            \bra{\psi} B^S \ket{\psi} &= \frac{1}{k+1}- \frac{1}{(n-k)(k+1)}\left(c(\lambda) -c(1,1,...,1) - \sum_{\nu \in \mathcal{V}} \alpha_\nu^2 c(\nu) \right) \\
            &= \frac{1}{k+1}- \frac{1}{(n-k)(k+1)}\left(c(\lambda) + \frac{k(k-1)}{2} - \sum_{\nu \in \mathcal{V}} \alpha_\nu^2 c(\nu) \right) \\
            &= \frac{1}{k+1} -  \frac{1}{(n-k)(k+1)}\sum_{\nu \in \mathcal{V}} \alpha_\nu^2 \left(\frac{k(k-1)}{2} +  c(\lambda) - c(\nu) \right)\, ,\label{eq:b_lem_pre_simplified}
        \end{align}
        where explicitly computed the content of the $(1,1,...,1)$ partition on the second line above, and rearranged terms on the third. 
        
        Next, we analyze the expression above when $k = \lambda_1'$. In this case $\mathcal{V}$ contains a single irrep $\nu$ obtained from $\lambda$ by deleting one box from each row of $\lambda$. The difference $c(\lambda) - c(\nu)$ is equal to the content of each of the deleted boxes, so we have
        \begin{align}
            c(\lambda)- c(\nu) =\sum_{i \in [\lambda_1']} (\lambda_i - i) =  n - \frac{k(k+1)}{2} \, ,
        \end{align}
        and then~\Cref{eq:b_lem_pre_simplified} simplifies to
        \begin{align}
            \bra{\psi} B^S \ket{\psi} &= \frac{1}{k+1} -  \frac{1}{(n-k)(k+1)}\left(\frac{k(k-1)}{2} + n - \frac{k(k+1)}{2}  \right) \\
            &= \frac{1}{k+1} -\frac{1}{k+1} = 0.
        \end{align}

        Now, if $k < \lambda_1'$, then $\mathcal{V}$ contains all irreps obtained by deleting boxes from $k$ different rows of $\lambda$. Let $\nu$ be one such irrep, and let $R \subset [\lambda_1']$ label the rows from which boxes are deleted. Then 
        \begin{align}
            c(\lambda)- c(\nu) &=\sum_{i \in R} (\lambda_i - i) \\
            &\leq \left(\sum_{i \in R} \lambda_i \right)- \frac{k(k+1)}{2} \\
            &\leq (n - (\lambda_1' - k))- \frac{k(k+1)}{2} \, ,
        \end{align}
        since each of the $\lambda_1' - k$ rows not in $R$ contains at least one box. Plugging this in to~\Cref{eq:b_lem_pre_simplified} for each $\nu \in \mathcal{V}$ gives
        \begin{align}
            \bra{\psi} B^S \ket{\psi} &\geq \frac{1}{k+1} -  \frac{1}{(n-k)(k+1)}   \left(\frac{k(k-1)}{2} +n - \frac{k(k+1)}{2} - (\lambda_1' - k) \right) \\
            &= \frac{\lambda_1' - k}{(n-k)(k+1)}\,,
        \end{align}
        as desired. 
    \end{proof}

    To relate~\Cref{lem:b_growth} and~\Cref{alg:sample-efficient-antisym} we will view the input $\rho^{\otimes n}$ to the algorithm as a probabilistic mixture of pure states, and then analyze the evolution of these pure states under~\Cref{alg:sample-efficient-antisym} depending on the $\lambda$-isotypic component they are contained in. The last step is a simple corollary of~\Cref{lem:b_growth}, stated next. 
    
    \begin{corollary}
        \label{cor:sample_efficient_left_column}
        Let $\ket{\psi}$ be a state in the $\lambda$-isotypic component of $(\mathbb{C}^d)^{\otimes n}$. If~\Cref{alg:sample-efficient-antisym} is applied to $\ket{\psi}$, it returns an anti-symmetric state on $k = \lambda_1'$ qudits with probability at least $1 - \epsilon$.
    \end{corollary}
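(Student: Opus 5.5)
Throughout the run the state stays in the $\lambda$-isotypic component. The algorithm only applies controlled permutations and measurements whose operators are built from permutations, and these commute with $U^{\otimes n}$. Concretely, the growth measurement of \Cref{alg:anti-sym-growing} on registers $S\cup\{j\}$ has Kraus operators $A^{S,j}$ and $(I-A^{S,j})^{1/2}$, restricted to states with $\Pi^S_{\mathrm{anti}}\ket{\psi}=\ket{\psi}$. On that subspace $A^{S,j}$ acts as the projector $\Pi^{S\cup\{j\}}_{\mathrm{anti}}$, and the failure branch acts as $I-\Pi^{S\cup\{j\}}_{\mathrm{anti}}$. Both lie in the span of permutation operators, so they commute with $\Pi_\lambda$. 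Hence after every step the post-measurement state is in the $\lambda$-isotypic component, and the \texttt{anti-symmetric} register $S$ satisfies $\Pi^S_{\mathrm{anti}}\ket{\psi}=\ket{\psi}$.

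By \Cref{lem:anti_isotypic} and \Cref{lem:Pieri}, $k=|S|\le\lambda_1'$ always. If $k=\lambda_1'$, \Cref{lem:b_growth} gives $\bra{\psi}B^S\ket{\psi}=0$, so no further growth can occur. Since $\lambda_1'\le d$, the stopping conditions $k=d$ and an empty \texttt{candidates} register are consistent with this: when $k=d$ we have $k=\lambda_1'$ automatically, and when \texttt{candidates} is empty we have $k=n\ge\lambda_1'$, forcing $k=\lambda_1'$. So it suffices to show that with probability at least $1-\epsilon$ the register reaches size $\lambda_1'$ within $\lceil nd\log(d/\epsilon)\rceil$ rounds.

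Fix a round whose current state $\ket{\psi}$ has $|S|=k<\lambda_1'$. The candidate $j$ is uniform over $[n]\setminus S$, so the success probability, conditioned on the history, is
\[
\frac{1}{n-k}\sum_{j}\bra{\psi}\Pi^{S\cup\{j\}}_{\mathrm{anti}}\ket{\psi}=\bra{\psi}B^S\ket{\psi}\ \ge\ \frac{\lambda_1'-k}{(n-k)(k+1)}\ \ge\ \frac{1}{nd},
\]
using $\lambda_1'-k\ge1$, $n-k\le n$ and $k+1\le d$. A failure leaves $k$ unchanged and keeps the invariants above, so the same bound applies in the next round. This holds for every branch of the mixture, including after conditioning on outcomes.

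The number of rounds spent at each level $k$ is therefore stochastically dominated by a geometric variable with success parameter $1/(nd)$. The probability that level $k$ is not completed within $m=\lceil nd\log(d/\epsilon)\rceil$ rounds is at most $(1-\tfrac{1}{nd})^m\le\epsilon/d$. A union bound over the at most $d$ levels gives the claim, provided each level is allotted its own budget of $m$ rounds.

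The main obstacle is this bookkeeping: the algorithm has one total budget of $m$ rounds, not $m$ per level. To handle it, I would use the sharper bound $\frac{\lambda_1'-k}{(n-k)(k+1)}$ coupon-collector style, so that the expected total time is roughly $\sum_k \frac{n(k+1)}{\lambda_1'-k}$, and then apply a concentration bound for a sum of independent geometric variables. Alternatively, I would accept a constant-factor loss absorbed into the round count, or interpret the budget as a per-level one. This is the step whose constants I expect to need care.
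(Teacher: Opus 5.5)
Up to the per-round bound your argument matches the paper's proof. The state stays in the $\lambda$-isotypic component. The \texttt{anti-symmetric} register $S$ always satisfies $\Pi^S_{\mathrm{anti}}\ket{\psi}=\ket{\psi}$ with $|S|=k\le\lambda_1'$. Conditioned on any history, a round succeeds with probability $\bra{\psi}B^S\ket{\psi}\ge\frac{\lambda_1'-k}{(n-k)(k+1)}$ by \Cref{lem:b_growth}.

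\textbf{The gap.} The final counting step is left unresolved, and with the bound you actually use it cannot be closed. Weakening to $1/(nd)$ discards exactly what makes the stated budget suffice. A per-level success probability of $1/(nd)$ only certifies about $(\lambda_1'-1)nd$ rounds to climb from $k=1$ to $k=\lambda_1'$. For $\lambda_1'$ close to $d$ this exceeds $\lceil nd\log(d/\epsilon)\rceil$ by a factor of order $d/\log(d/\epsilon)$. So:
\begin{itemize}
\item your union bound over levels needs roughly $\lambda_1'$ times the allotted rounds;
\item the ``constant-factor loss'' option fails, because the loss is not a constant;
\item reading the budget as per-level changes the algorithm rather than analysing it.
\end{itemize}
Your first suggested fix (the sharper bound) points the right way but is only sketched.

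\textbf{How to finish.} Keep the numerator $\lambda_1'-k$. Since $k<\lambda_1'$, you have $(n-k)(k+1)\le n\lambda_1'$. Hence, given the past, each round succeeds with probability at least $(\lambda_1'-k)/(n\lambda_1')$. This is exactly the chance of a new coupon in a coupon collector with $\lambda_1'$ coupons, each appearing with probability $1/(n\lambda_1')$ per round and nothing appearing otherwise. A step-by-step coupling through a shared uniform random variable shows the register size dominates the number of distinct coupons seen.

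Now take the union bound over coupons rather than over levels. The probability that some coupon is still missing after $t$ rounds is at most $\lambda_1'(1-\tfrac{1}{n\lambda_1'})^t\le\lambda_1'e^{-t/(n\lambda_1')}$. For $t=\lceil nd\log(d/\epsilon)\rceil$ and $\lambda_1'\le d$, this is at most $\lambda_1'(\epsilon/d)^{d/\lambda_1'}\le\lambda_1'\epsilon/d\le\epsilon$.

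This is the paper's argument. It handles the single global budget directly, with no concentration inequality for sums of geometric variables. Your formulation would also need such a coupling, because the actual waiting times at different levels are not independent.
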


    Note that, by convexity, if a mixed state $\rho$ is supported on the $\lambda$-isotypic component, then the same conclusion as in \Cref{cor:sample_efficient_left_column} holds.

    \begin{proof}
        Let $\ket{\tilde{\psi}}$ describe the state stored in the \texttt{anti-symmetric} and \texttt{candidate} registers at some arbitrary intermediate stage of the algorithm, and let $S \subseteq [n]$ denote the qudits of $\ket{\tilde{\psi}}$ in the \texttt{anti-symmetric} register.~\Cref{alg:anti-sym-growing} only acts on $\ket{\psi}$ by representations of the symmetric group algebra, which are block diagonal with respect to the isotypic components, so $\ket{\tilde{\psi}}$ is also in the $\lambda$-isotypic component. By construction,  step 3(a) of~\Cref{alg:sample-efficient-antisym} returns success with probability
        \begin{align}
            \bra{\tilde{\psi}} \frac{1}{n-k}\sum_{j \in [n] \backslash S} A^{S,j} \ket{\tilde{\psi}} = \bra{\tilde{\psi}} B^{S} \ket{\tilde{\psi}}.
        \end{align}
        But then~\Cref{lem:b_growth} gives that each iteration of step 3 grows the size of the \texttt{anti-symmetric} register from $k$ to $k+1$ with probability 
        \begin{align}
            \frac{\lambda_1' - k}{(n-k)(k+1)} \geq \frac{\lambda_1' - k}{n \lambda_1'}.
        \end{align}
        Relating this to the probability of a coupon collector with $\lambda'_1$ coupons and a probability $(n\lambda'_1)^{-1}$ of picking up any coupon (as in the proof of~\Cref{thm:sample-inefficient-analysis}) shows that after $t$ repetitions, the probability that the $\texttt{anti-symmetric}$ register does not contain $\lambda'_1$ qudits is upper bounded by 
        \begin{align}
            \lambda_1'\left(1-\frac{1}{n\lambda_1'}\right)^t\leq \lambda'_1e^{-t/n\lambda_1'}.
        \end{align}
        Setting $t = \lceil n d(\log(d) + \log(1/\epsilon))\rceil$ and noting $d \geq \lambda_1'$ proves the result. 
    \end{proof}

    We now have all the tools required to complete the proof of~\Cref{thm:sample-efficient-analysis} and the analysis of~\Cref{alg:sample-efficient-antisym}. We present this next. 

    \begin{proof}[Proof (\Cref{thm:sample-efficient-analysis}, continued).] First note that~\Cref{alg:sample-efficient-antisym} acts on the input state only through representations of the symmetric group algebra and so never mixes isotypic components. \Cref{cor:sample_efficient_left_column} and a union bound then give: when ~\Cref{alg:sample-efficient-antisym} is applied to any pure state that has overlap $1-\delta$ with $\lambda$-isotypic components satisfying $\lambda_1'=r$, it returns an anti-symmetric state on $r$ qudits with probability at least $1 -\delta - \epsilon$. This observation extends immediately to mixed states. But now standard arguments (for example, Remark 2.25 of~\cite{o2021quantum} plus a geometric sum argument and concentration inequality)
    give that if a mixed state $\rho$ has rank $r$ and minimum non-zero eigenvalue $\lambda_{\text{min}}$, sampling from the Schur-Weyl distribution of $\rho^{\otimes n}$ yields a partition $\lambda$ for which $\lambda_1'=r$ with probability $1 - \exp(-\Omega( n \lambda_{\text{min}}))$ provided $n \geq Cr/\lambda_{\text{min}}$ for a sufficiently large constant $C>0$. 
    The result follows. 
    \end{proof}

    \subsection{Algorithms for Schur sampling and purity amplification}\label{subsec:schur-sampling-purification}

    We begin with some observations about~\Cref{alg:sample-efficient-antisym}. For any partition $\lambda \vdash n$, \Cref{cor:sample_efficient_left_column} above shows that if this algorithm is run on a pure state $\ket{\psi}$ in the $\lambda$-isotypic component of $\left(\mathbb{C}^d\right)^{\otimes n}$, it returns an anti-symmetric state on $\lambda_1'$ qudits with probability $1-\epsilon$. In this case, Pieri (\Cref{lem:Pieri}) then tells us the system in the \texttt{candidates} register must be in the $\nu$-isotypic component, where $\nu \vdash (n-\lambda_1')$ is the unique partition obtained by deleting one box from each row of $\lambda$. If we then run~\Cref{alg:sample-efficient-antisym} again on the \texttt{candidates} register, it will output an anti-symmetric state on $\nu_1' = \lambda_2'$ qudits. Continuing in this way we can learn the lengths $\lambda_1', \lambda_2', ... \lambda_{\lambda_1}'$ of each column in $\lambda$. 
    
    If this same algorithm is run on a state $\ket{\psi}$ that is in a superposition of different isotypic components, the algorithm will output some $\lambda_1', ...,\lambda_{\lambda_1}'$ with probability corresponding to the overlap of $\ket{\psi}$ with the corresponding $\lambda$-isotypic component (and some error depending on $\epsilon$).
    This is exactly the task of \textit{weak Schur sampling}. We present this algorithm next.

\begin{longfbox}[breakable=false, padding=1em, margin-top=1em, margin-bottom=1em]
\begin{algorithm}
Weak Schur Sampling
\label{alg:weak-Schur-Sampling}
\end{algorithm}
\noindent \textbf{Input:} An unknown state $\rho \in \mathcal{L}\left(\left(\mathbb{C}^d\right)^{\otimes n}\right)$ and an accuracy parameter $\delta$. \\
\noindent \textbf{Output:} A partition $\lambda$ drawn from a probability distribution $p_\lambda$ satisfying
\begin{align}
    \sum_{\lambda \, \vdash \,n} \left|p_\lambda - \text{Tr}[\Pi_\lambda \rho]\right| \leq \delta\, .
\end{align}
\begin{enumerate}
    \item Label a register \texttt{unsampled} and add $\rho$ to it. Initialize a parameter $j = 1$.
    \item While \texttt{unsampled} is not empty:
    \begin{enumerate}
        \item Take the state from the \texttt{unsampled} register and apply~\Cref{alg:sample-efficient-antisym} to it with parameter $\epsilon = \delta/2n$. Add the state in the \texttt{candidates} register of this algorithm to the \texttt{unsampled} register. 
        \item Set $\lambda_j'$ equal to the length of the anti-symmetric state returned by~\Cref{alg:sample-efficient-antisym}. Increment $j$.
    \end{enumerate}
    \item Return the partition $\lambda$ with column lengths $\lambda_1', \lambda_2', ..., \lambda_{j-1}'$.
\end{enumerate}
\end{longfbox}

\begin{theorem}
    \label{thm:correctness_weak_schur}
    \Cref{alg:weak-Schur-Sampling} is correct, that is, it returns a partition $\lambda$ drawn from a probability distribution of the promised form. 
    It requires at most $2n\lceil nd\log(2nd/\delta)\rceil$ controlled-SWAP operations and has $T$-count $O\left(n^2d\log(d)\log^2(nd/\delta)\right)$. 
\end{theorem}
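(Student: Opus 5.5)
The proof has two parts: correctness of the output distribution, and a resource count.

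For correctness, I would first treat the ideal case. Every operation in Algorithm~\ref{alg:weak-Schur-Sampling} (through Algorithm~\ref{alg:sample-efficient-antisym} and Algorithm~\ref{alg:anti-sym-growing}) acts on the input only through the symmetric group algebra, and measurement outcomes are classical. So, as in the proof of Theorem~\ref{thm:sample-efficient-analysis}, the algorithm never mixes isotypic components. It therefore suffices, by linearity and convexity, to analyze an input supported on a single $\lambda$-isotypic component, with weight $\mathrm{Tr}[\Pi_\lambda\rho]$.

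For such an input I argue by induction on the number of rounds. Corollary~\ref{cor:sample_efficient_left_column} says that the first call returns an anti-symmetric register of size $\lambda_1'$ except with probability $\epsilon=\delta/2n$. On success, Lemma~\ref{lem:anti_isotypic} and Pieri's rule (Lemma~\ref{lem:Pieri}) with $k=\lambda_1'$ force the \texttt{candidates} register into the $\nu$-isotypic component, where $\nu$ is $\lambda$ with one box removed from every row. Then $\nu_j'=\lambda_{j+1}'$. Iterating, the recorded column lengths equal those of $\lambda$ unless some round fails.

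There are at most $\lambda_1\le n$ rounds, since each round removes at least one qudit. A union bound gives total failure probability at most $n\epsilon=\delta/2$ for each $\lambda$. Writing $q_\lambda=\mathrm{Tr}[\Pi_\lambda\rho]$, the output distribution is $p=\sum_\lambda q_\lambda p^{(\lambda)}$ with $p^{(\lambda)}(\lambda)\ge 1-\delta/2$. Hence
\[
\sum_\mu |p_\mu-q_\mu|\le \sum_\lambda q_\lambda\,\|p^{(\lambda)}-e_\lambda\|_1\le \delta .
\]
One subtlety I would address explicitly is that a failed round might still stop cleanly after outputting a shorter column. This only produces a wrong partition, which is already charged to the failure probability, so it is harmless.

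For the resource count, each call to Algorithm~\ref{alg:sample-efficient-antisym} with $\epsilon=\delta/2n$ runs at most $\lceil m d\log(d/\epsilon)\rceil\le\lceil nd\log(2nd/\delta)\rceil$ iterations, where $m\le n$ is the current register size. Each iteration invokes Algorithm~\ref{alg:anti-sym-growing}, which uses $U_A$ twice in the failure branch. I would count each $U_A$ as one multiplexed controlled-SWAP, which gives at most two per iteration. With at most $n$ rounds the total is $2n\lceil nd\log(2nd/\delta)\rceil$.

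The step I expect to be the main obstacle is the $T$-count, because it depends on implementation choices the paper only sketches. Per iteration, the non-Clifford cost comes from three sources:
\begin{enumerate}
\item preparing $W$ and $W^\dagger$ on $\lceil\log(k+1)\rceil$ qubits for non-power-of-two $k+1$, which needs approximate synthesis to precision about $\delta/(n^2d\log(nd/\delta))$ and so costs $O(\log(nd/\delta))$ $T$ gates;
\item the multi-controlled $Z$ and the controlled-SWAP, which cost $O(\log d)$ each;
\item randomly selecting a candidate qudit, which is classical randomness and so free.
\end{enumerate}
Multiplying the resulting $O(\log(d)\log(nd/\delta))$ per-iteration cost by $O(n^2d\log(nd/\delta))$ iterations gives the claimed $O\!\left(n^2d\log(d)\log^2(nd/\delta)\right)$. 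Synthesis errors add at most $\delta/2$ in total by the triangle inequality, which fits inside the accuracy budget after halving $\epsilon$.
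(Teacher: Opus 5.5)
Your correctness argument is essentially the paper's. You decompose into isotypic blocks, since only symmetric-group-algebra elements are applied. You then induct with \Cref{cor:sample_efficient_left_column} and Pieri's rule to peel off one column per round, and take a union bound over at most $\lambda_1\le n$ rounds. Your finish via $p=\sum_\lambda q_\lambda p^{(\lambda)}$ and $\sum_\lambda q_\lambda\|p^{(\lambda)}-e_\lambda\|_1\le\delta$ is a slightly more careful version of the paper's final step $p=(1-\delta/2)q+(\delta/2)r$.

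The gap is in the resource count, where you use two incompatible units. For the controlled-SWAP bound you count each $U_A$ as one multiplexed controlled-SWAP and multiply by at most $n$ rounds. That count is legitimate in that unit, and it matches the informal statement in the introduction. For the $T$-count, however, you charge this same multiplexed gate only $O(\log d)$ $T$ gates, and that is not justified. The gate $U_A=\ketbra{0}{0}\otimes I-\sum_{i=1}^k\ketbra{i}{i}\otimes P_{(i,k+1)}$ must conditionally move any of $k$ different qudits. Standard constructions (one index-controlled swap per $i$, unary iteration, or a swap network) use on the order of $k\log d$ $T$ gates. If you charge it correctly but keep your coarse round count ($\le n$ rounds, $\lceil nd\log(2nd/\delta)\rceil$ iterations each, $k\le d$), the swap contribution is only bounded by $O(n^2d^2\log d\log(nd/\delta))$. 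That overshoots the claim.

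The missing ingredient is the paper's accounting by columns. Let $\lambda_i'$ be the returned column lengths; they always sum to $n$, because the loop ends only when \texttt{unsampled} is empty. In the round producing column $i$, each growth measurement uses at most $2\lambda_i'$ ordinary index-controlled qudit swaps, each costing $O(\log d)$ $T$ gates. This gives the $2n\lceil nd\log(2nd/\delta)\rceil$ bound in the fine-grained unit that the $T$-conversion actually requires. The swap contribution is then $O(n^2d\log d\log(nd/\delta))$. The dominant term is the rotation synthesis: $O(\log d\cdot\log(nd/\delta))$ $T$ gates per growth measurement, over at most $n\lceil nd\log(2nd/\delta)\rceil$ growth measurements, giving $O(n^2d\log(d)\log^2(nd/\delta))$.
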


\begin{proof}

    First consider the case where the algorithm is run on a pure state $\rho = \ketbra{\psi}{\psi}$ in the $\lambda$-isotypic component of $\left(\mathbb{C}^d\right)^{\otimes n}$ for some arbitrary $\lambda \vdash n$. Let $\rho_i$ be the state stored in the \texttt{unsampled} register after step 2 of the algorithm has repeated $i$ times (so $\rho_0 = \rho$). For every positive integer $i \leq \lambda_1$ let $\nu(i)$ be the partition obtained by deleting the $i$ leftmost columns of $\lambda$. Assume for induction that $\rho_i$ is in the $\nu(i)$-isotypic component. Then Pieri (\Cref{lem:Pieri}) gives that $\rho_{i+1}$ is in the $\nu(i+1)$-isotypic component provided the $(i+1)$-th repetition of step 2 of~\Cref{alg:weak-Schur-Sampling} returns an anti-symmetric state on ${\nu(i)}'_1$ qudits. \Cref{cor:sample_efficient_left_column} gives that this happens with probability $1-\epsilon = 1-\delta /2n$. Then a union bound argument gives that~\Cref{alg:weak-Schur-Sampling} returns the correct partition $\lambda$ with probability at least
    $
        1 - \lambda_1(\delta/2n) \geq 1- \delta/2. 
    $

    Now when the input is an arbitrary mixed state $\rho$, as \Cref{alg:weak-Schur-Sampling} does not mix isotypic components, the analysis above gives that with probability at least $1 - \delta/2$ the algorithm returns a partition $\lambda$ drawn exactly from the distribution 
    \begin{align}
        q_\lambda = \Tr[\Pi_\lambda \rho].
    \end{align}
    Letting $r_\lambda$ be the distribution output by the algorithm a $\delta/2$-fraction of the time which includes any time the algorithm does not output from the $q_\lambda$ distribution, we have
    \begin{align}
        p_\lambda = (1-\delta/2)q_\lambda + (\delta/2) r_\lambda \, ,
    \end{align}
    and so we see that, for any $r_\lambda$, we have
    \begin{align}
        \lVert p_\lambda-q_\lambda \rVert _1 \leq \delta,
    \end{align}
    as desired. 

    To bound the number of controlled-SWAP operations we note each application of~\Cref{alg:sample-efficient-antisym} to build an antisymmetric state on $\lambda_i'$ qudits requires at most $ \lceil nd\log(d/\epsilon)\rceil = \lceil nd\log(2nd/\delta)\rceil$ applications of~\Cref{alg:anti-sym-growing}, each of which requires at most $2\lambda_i'$ controlled-SWAP gates. So the total number of controlled-SWAP operations is upper bounded by
    \begin{align}
        \lceil nd\log(2nd/\delta)\rceil \sum_{i = 1}^{\lambda_1} 2\lambda_{i}' = 2n\lceil nd\log(2nd/\delta)\rceil
    \end{align}
    To convert to a $T$-count note each controlled-SWAP operation requires $O(\log(d))$ $T$ gates. The each application of~\Cref{alg:anti-sym-growing} also requires $\mathcal{O}(\log(k)) \leq \mathcal{O}(\log(d))$ single qubit rotations, each of which which can be implemented up to error $\eta$ using $\mathcal{O}(\log(\eta^{-1}))$ $T$ gates. Setting 
    \begin{align}
        \eta^{-1} = \lceil nd\log(2nd/\delta)\rceil/\delta
    \end{align}
    to keep total error bounded gives the result.

\end{proof}

In \Cref{alg:weak-Schur-Sampling} all quantum registers are discarded and the only output is a classical description of the partition $\lambda$. If the algorithm is altered so that quantum information is retained and included in the output, this modified algorithm performs the task of \textit{Unitary Schur Sampling}. The name refers to the fact that the algorithm destroys any superposition the input state has in each irreducible module of $S_n$ (by projecting onto a fixed vector within that module) but leaves superpositions within the irreducible modules of $\operatorname{GL}_d$ untouched. Informally: the algorithm destroys permutation information about the input state while preserving unitary information. 

\begin{longfbox}[breakable=true, padding=1em, margin-top=1em, margin-bottom=1em]
\begin{algorithm}
Unitary Schur Sampling
\label{alg:unitary-Schur-Sampling}
\end{algorithm}
\noindent \textbf{Input:} A state $\rho^{\otimes n}$ for unknown $\rho \in \mathcal{L}\left(\mathbb{C}^d\right)^{\otimes n}$ and an accuracy parameter $\delta$.  \\
\noindent \textbf{Output:} A state $\sigma$ satisfying 
\begin{align}
    \lVert \sigma - \sum_{\lambda \,\vdash \, n} \rho^{[\lambda]}\otimes \ketbra{\lambda}{\lambda} \rVert_1\leq \delta
\end{align}
where $\ket{\lambda}$ is a classical description of the partition $\lambda$ and $\rho^{[\lambda]}$ is the unnormalized mixed state defined  (in terms of the $V_\lambda$ and $W_\lambda$ modules, see~\Cref{eq:Schur-Weyl}) by
\begin{align}
    \Uschur \rho^{[\lambda]} \Uschur^\dagger = \ketbra{0_\lambda}{0_\lambda} \otimes \text{Tr}_{V_\lambda}[ \Uschur \Pi_\lambda \rho^{\otimes n} \Pi_\lambda \Uschur^\dagger ] \, ,
\end{align}
where $\ket{0_\lambda}$ is some reference state \footnote{The exact value of $\ket{0_\lambda}$ depends also on the choice of the basis of $\Uschur$, which is not important for our application.} in the $V_\lambda$ module fixed by the algorithm.
\begin{enumerate}
    \item Label registers \texttt{unsampled} and \texttt{sampled}. Add the state $\rho^{\otimes n}$ to the \texttt{unsampled} register and initialize a parameter $j = 1$.
    \item While \texttt{unsampled} is not empty:
    \begin{enumerate}
        \item Take the state in the \texttt{unsampled} register and apply~\Cref{alg:sample-efficient-antisym} to it with parameter $\epsilon = \delta/2n$. Add the state in the \texttt{antisymmetric} register of this algorithm to the \texttt{sampled} register and the state in the \texttt{candidates} register of this algorithm to the \texttt{unsampled} register. 
        \item Set $\lambda_j'$ equal to the length of the anti-symmetric state returned by~\Cref{alg:sample-efficient-antisym}. Increment $j$.
    \end{enumerate}
    \item Return the state in the \texttt{sampled} register along with $\ket{\lambda}$ where $\lambda$ is the partition with column lengths $\lambda_1', \lambda_2', ..., \lambda_{j-1}'$.
\end{enumerate}
\end{longfbox}

\begin{theorem}
    \label{thm:unitary-Schur-Sampling}
    \Cref{alg:unitary-Schur-Sampling} is correct, that is, it returns a state $\sigma$ with the promised form. It requires at most $2n\lceil nd\log(2nd/\delta)\rceil$ controlled-SWAP operations and has $T$-count $O\left(n^2d\log(d)\log^2(nd/\delta)\right)$.
\end{theorem}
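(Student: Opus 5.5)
The resource counts carry over verbatim from \Cref{thm:correctness_weak_schur}, since \Cref{alg:unitary-Schur-Sampling} performs exactly the same sequence of quantum operations as \Cref{alg:weak-Schur-Sampling}; the only change is that the \texttt{antisymmetric} registers are kept rather than discarded. So the work lies in showing that the retained quantum state has the promised form.

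The plan is to reduce to a single isotypic component and then to show that the map implemented on each component is, on the permutation side, a fixed rank-one operation. First, every operation of the algorithm is an element of the group algebra $\mathbb{C}[S_n]$, possibly with classical control on the random choices and measurement outcomes. Such operators act as $\bigoplus_\lambda D_\lambda(\cdot)\otimes I_{W_\lambda}$ after $\Uschur$. The overall instrument therefore commutes with $U^{\otimes n}$ and never mixes isotypic components.

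Now fix $\lambda$ and restrict to the branch in which every call to \Cref{alg:sample-efficient-antisym} returns the full column. By \Cref{cor:sample_efficient_left_column}, Pieri (\Cref{lem:Pieri}) and a union bound exactly as in the proof of \Cref{thm:correctness_weak_schur}, this branch has probability at least $1-\delta/2$ on any state in the $\lambda$-isotypic component. On this branch the output records the correct $\lambda$.

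The Kraus operators along this branch are of the form $K=P_\pi\,\Pi_{\mathrm{anti}}^{S_1}\cdots$, where $\pi$ is a relabeling of qudit positions into the \texttt{sampled} register and $S_1,S_2,\dots$ are the column sets. Restricted to $V_\lambda\otimes W_\lambda$, each such $K$ equals $D_\lambda(K)\otimes I_{W_\lambda}$.

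The key claim is that the final state lies in the subspace of vectors that are anti-symmetric on each consecutive block of $\lambda'_1,\lambda'_2,\dots$ sampled positions. This holds because the $j$-th anti-symmetric block is untouched by later rounds. Within the $\lambda$-isotypic component, this subspace corresponds to a one-dimensional subspace $\mathbb{C}\ket{0_\lambda}\subset V_\lambda$, tensored with $W_\lambda$. To justify this, I would invoke iterated Pieri: the multiplicity of $\lambda$ in $\mathrm{Ind}$ of $\mathrm{sgn}^{\lambda'_1}\otimes\mathrm{sgn}^{\lambda'_2}\otimes\cdots$ is the Kostka number $K_{\lambda',\lambda'}=1$, and Frobenius reciprocity then gives a one-dimensional space of such vectors in $V_\lambda$.

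Hence every branch Kraus operator restricts to $\ket{0_\lambda}\bra{v_K}\otimes I_{W_\lambda}$ for some vector $v_K\in V_\lambda$. Summing over branches gives the channel $X\mapsto \ketbra{0_\lambda}{0_\lambda}\otimes \mathrm{Tr}_{V_\lambda}[(M\otimes I)X]$, where $M=\sum_K\ketbra{v_K}{v_K}$ is the branch-success POVM element on $V_\lambda$.

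The main obstacle is controlling $M$. The input $\Uschur\Pi_\lambda\rho^{\otimes n}\Pi_\lambda\Uschur^\dagger$ equals $\frac{I_{V_\lambda}}{\dim V_\lambda}\otimes \tau_\lambda$, because $\rho^{\otimes n}$ commutes with the permutations. Hence $\mathrm{Tr}_{V_\lambda}[(M\otimes I)X]=\frac{\mathrm{Tr}M}{\dim V_\lambda}\tau_\lambda=p\,\tau_\lambda$, where $p\ge 1-\delta/2$ is the branch success probability, which is uniform by \Cref{cor:sample_efficient_left_column}. The branch contribution is therefore $p\,\rho^{[\lambda]}$. The remaining failure mass has weight at most $\delta/2$, and the deficit $(1-p)\rho^{[\lambda]}$ has trace at most $\delta/2$, so the total trace distance is at most $\delta$. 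This is the same mixture argument as in \Cref{thm:correctness_weak_schur}.
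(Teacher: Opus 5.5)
Your proposal is correct and follows essentially the paper's route. You restrict to one isotypic component, get the $1-\delta/2$ good event from \Cref{cor:sample_efficient_left_column}, Pieri and a union bound, and observe that the output is anti-symmetric on each column block, which pins its $V_\lambda$ part to the unique vector $\ket{0_\lambda}$ while the group-algebra operations leave $W_\lambda$ untouched. The differences are minor: you get one-dimensionality from $K_{\lambda',\lambda'}=1$ and Frobenius reciprocity rather than the paper's column-by-column Pieri induction in \Cref{lem:one_dimensional}, and you spell out, using the maximally mixed $V_\lambda$-marginal of $\rho^{\otimes n}$, why conditioning on the good branch yields exactly $p\,\mathrm{Tr}_{V_\lambda}[\cdot]$ on $W_\lambda$, a step the paper compresses into a proportionality.
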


\begin{proof}
    Because our input is block diagonal over isotypical components, i.e. 
    \begin{align}
        \sum_\lambda \Pi_\lambda\rho^{\otimes n}\Pi_\lambda = \rho^{\otimes n}
    \end{align}
    it suffices to analyze the algorithm over each isotypical component. For one such $\lambda$, let  $G_\lambda$ be the event that each application of~\Cref{alg:sample-efficient-antisym} works as intended, i.e. that the $i$-th call to the algorithm returns an antisymmetric state on $\lambda_i'$ qudits for all $i \in [\lambda_1]$. By the same analysis as in~\Cref{thm:correctness_weak_schur} we have 
    \begin{align}
        \text{Pr}[G_\lambda] \geq 1- \delta/2
    \end{align} for all $\lambda$. 
    
    Let $\sigma_\lambda$ be the (normalized) state in the \texttt{sampled} register conditioned on event $G_\lambda$ when the algorithm returns partition label $\lambda$. The same arguments as in ~\Cref{thm:correctness_weak_schur} shows that $\sigma_\lambda$ is in the $\lambda$-isotypic component with probability at least $1 - \delta/2$, that is, 
    \begin{align}
        \text{Tr}[\Pi_\lambda\sigma_\lambda] \geq 1-\delta/2.
    \end{align}
    Correctness of~\Cref{alg:sample-efficient-antisym} also gives that 
    \begin{align}
        \text{Tr}\left[\left(\Pi^{\lambda_1'}_{\mathrm{anti}} \otimes \Pi^{\lambda_2'}_{\mathrm{anti}} \otimes ... \otimes \Pi^{\lambda_{\lambda_1}'}_{\mathrm{anti}}\right) \sigma_\lambda\right] = 1.
    \end{align}
    For any $S \subseteq [n]$ let 
    \begin{align}
        A(S) = \frac{1}{|S|!}\sum_{\substack{\sigma \in S_n  \\  \sigma(i)  = i\, \forall \,i \notin S}} \text{sign}(\sigma)\sigma
    \end{align}
    be the element of the symmetric group algebra corresponding to an antisymmetrizer on the elements of $S$ and let 
    \begin{align}
        A(\lambda_1',\lambda_2',...,\lambda_{\lambda_1}') = \prod_{i =1}^{\lambda_1} A\left(\left\{\sum_{j=1}^{i-1} \lambda_j' + 1, \sum_{j=1}^{i-1} \lambda_j'+2, ...,\sum_{j=1}^i \lambda_j' \right\}\right)
    \end{align}
    so $\Pi^{\lambda_1'}_{\mathrm{anti}} \otimes \Pi^{\lambda_2'}_{\mathrm{anti}} \otimes ... \otimes \Pi^{\lambda_{\lambda_1}'}_{\mathrm{anti}}$ is exactly the natural representation of $A(\lambda_1',\lambda_2',...,\lambda_{\lambda_1}')$. 
    Standard representation theory techniques (see~\Cref{lem:one_dimensional} in~\Cref{app:Additional_rep_theory_proofs} for a proof) give that there is a unique state $\ket{0_\lambda} \in V_\lambda$ satisfying 
    \begin{align}
        D_\lambda \left(A(\lambda_1',\lambda_2',...,\lambda_{\lambda_1}')\right) \ket{0_\lambda} = \ket{0_\lambda} \, .
    \end{align}
    For example, this state is exactly the column ordered Young-Yamanouchi vector of~\cite{li2024optimal}. This shows that $\sigma_\lambda$ must overlap the $V_\lambda$ module on exactly the state $\ketbra{0_\lambda}{0_\lambda}$, that is, we must have 
    \begin{align}
        \Pi_\lambda \sigma_\lambda \Pi_\lambda = \ketbra{0_\lambda}{0_\lambda} \otimes \sigma_\lambda^W
    \end{align}
    for some state $\sigma_\lambda^W \in \mathcal{L}(W_\lambda)$. 
    \Cref{alg:sample-efficient-antisym} only ever acts on the state $\rho$ with representations of the symmetric group algebra, which act as the identity on the $W$ modules, so we also have
    \begin{align}
        \sigma_\lambda^W &\propto \sum_r\text{Tr}_{V_\lambda}[(K_r\otimes I)
        \Uschur \Pi_\lambda \rho^{\otimes n} \Pi_\lambda \Uschur^\dagger (K_r\otimes I)^\dagger] \\
        &= \text{Tr}_{V_\lambda}[
        \Pi_\lambda \rho^{\otimes n}  \Pi_\lambda].
    \end{align}
    where $\{K_r\}$ are Kraus operators describing the action of the algorithm on the $V_\lambda$ module.
    We conclude 
    \begin{align}
        \sigma &= \left(1-\frac{\delta}{2}\right) \left(\sum_{\lambda \,\vdash \, n} \ketbra{0_\lambda}{0_\lambda} \otimes \text{Tr}_{V_\lambda}[
        \Pi_\lambda \rho^{\otimes n} \Pi_\lambda] \otimes \ketbra{\lambda}{\lambda} \right) + \frac{\delta}{2} (\sigma') \\
        &= \left(1 - \frac{\delta}{2}\right)\left(\sum_{\lambda \,\vdash \, n} \rho^{[\lambda]}\otimes \ketbra{\lambda}{\lambda}\right) + \frac{\delta}{2} (\sigma')
    \end{align}
    for some (normalized) state $\sigma'$ and correctness of the algorithm follows.

    The required number of controlled-SWAPs and T-count of the algorithm follows immediately from the analysis of~\Cref{alg:weak-Schur-Sampling}.
\end{proof}

We end this section with a discussion of \textit{quantum purity amplification}. Given an input $\rho^{\otimes n} \in \left(\mathcal{L}(\mathbb{C}^d)^{\otimes n}\right)$, the goal of purity amplification (in its simplest form) is to output a state $\rho' \in \mathcal{L}(\mathbb{C}^d)$ that is concentrated on the top eigenvector of $\rho$, that is, to return a state maximizing
\begin{align}
    \bra{\varphi_1} \rho'\ket{\varphi_1} \, ,
\end{align}
where $\ket{\varphi_1}$ is the largest eigenvector of $\rho$. This task was formalized in~\cite{li2024optimal}, which also gave an optimal protocol for it based on measuring the information in the $V_\lambda$ modules of the state $\rho^{\otimes n}$ and then applying a measurement-dependent correction to the resulting state. The follow-up work~\cite{brahmachari2025optimal} gave an alternate protocol for optimal quantum purity amplification of qubit input states based on random SWAP tests. We show that a simple modification of~\Cref{alg:unitary-Schur-Sampling} extends the ideas of~\cite{brahmachari2025optimal} to give an optimal protocol for quantum purity amplification of arbitrary qudit input states. Before presenting the formal algorithm, we give an intuitive explanation of why preparing antisymmetric states can be a helpful subroutine for purity amplification. 

Consider~\Cref{alg:unitary-Schur-Sampling} acting on an input state $\rho^{\otimes n}$. Label the eigenvectors of $\rho$ in the order of non-increasing eigenvalue by $\ket{\varphi_1}, \ket{\varphi_2},...,\ket{\varphi_r}$. Any $k$-qudit antisymmetric state produced by the algorithm must overlap equally with $k$ different eigenvectors of $\rho$. Preparing such a state and removing it from the system then has the effect of ``pumping out'' asymmetry from the system, leaving the remaining system more concentrated on the most common eigenvector $\ket{\varphi_1}$. Following this intuition suggests that the final antisymmetric state prepared by~\Cref{alg:unitary-Schur-Sampling} will be the one most concentrated on the top eigenvector of $\rho$, and returning a random qudit from this state (or the state itself, in the case that it is a one-qudit state already) will provide a solution to the task of purity amplification. We now formalize this process, and show (by reducing to the analysis of~\cite{li2024optimal}) that this intuition is correct. 

\begin{longfbox}[breakable=true, padding=1em, margin-top=1em, margin-bottom=1em]
\begin{algorithm}
Quantum Purity Amplification
\label{alg:quantum_purity_amplification}
\end{algorithm}
\noindent \textbf{Input:} An state $\rho^{\otimes n}$  for unknown $\rho \in \mathcal{L}\left(\mathbb{C}^d\right)$ and an accuracy parameter $\delta$.  \\
\noindent \textbf{Output:} A state $\rho'$ satisfying 
\begin{align}
    \lVert \rho' - \rho_{\text{opt}} \rVert_1\leq \delta \, ,
\end{align}
where $\rho_{\text{opt}}$ is the state returned by the optimal quantum purity amplification protocol of~\cite{li2024optimal}.
\begin{enumerate}
    \item Label registers \texttt{unsampled} and \texttt{purified}. Add the state $\rho^{\otimes n}$ to the \texttt{unsampled} register.
    \item While \texttt{unsampled} is not empty:
    \begin{enumerate}
        \item Remove the state in the \texttt{unsampled} register and apply~\Cref{alg:sample-efficient-antisym} to it with parameter $\epsilon = \delta/2n$. \textbf{Replace} the state in the \texttt{purified} register with the state in the \texttt{antisymmetric} register of this algorithm, and add the state in the \texttt{candidates} register of this algorithm to the \texttt{unsampled} register. 
    \end{enumerate}
    \item Return the last qudit from the \texttt{purified} register. 
\end{enumerate}
\end{longfbox}

\begin{theorem}
    \label{thm:quantum_purity_amplification}
    \Cref{alg:quantum_purity_amplification} is correct, that is, it returns a state $\rho'$ with the desired properties. It requires at most $2n\lceil nd\log(2nd/\delta)\rceil$ controlled-SWAP operations and has $T$-count $O\left(n^2d\log(d)\log^2(nd/\delta)\right)$.
\end{theorem}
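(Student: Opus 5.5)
The plan is to reduce to the analysis of Theorem~\ref{thm:unitary-Schur-Sampling} together with the characterization of the optimal protocol in~\cite{li2024optimal}. \Cref{alg:quantum_purity_amplification} differs from \Cref{alg:unitary-Schur-Sampling} only in which registers it keeps. Instead of the whole \texttt{sampled} register, it keeps the last anti-symmetric block, of length $\lambda'_{\lambda_1}$, and then outputs one qudit of it. So the resource counts (controlled-SWAPs and $T$-count) carry over verbatim from \Cref{thm:correctness_weak_schur}, and only correctness needs an argument.

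For correctness, I would first condition on the good event $G_\lambda$ from the proof of \Cref{thm:unitary-Schur-Sampling}, which has probability at least $1-\delta/2$ for every $\lambda$. On this event, the full state produced (the concatenation of all anti-symmetric blocks) is
\[
\sum_\lambda \rho^{[\lambda]}\otimes\ketbra{\lambda}{\lambda},
\]
that is, the $V_\lambda$ part is projected onto the column-ordered Young--Yamanouchi vector $\ket{0_\lambda}$ and the $W_\lambda$ part is untouched. The failure branch contributes at most $\delta/2$ in trace distance, and the final partial trace is contractive, so the error bound reduces to an identity between outputs on the good event.

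Next I would recall the structure of the optimal protocol of~\cite{li2024optimal}. It performs weak Schur sampling to get $\lambda$ and projects the $V_\lambda$ register onto a fixed vector; I expect this to be exactly the column-ordered Young--Yamanouchi vector, up to a unitary correction that does not affect the final output. It then outputs a designated qudit (or applies a fixed permutation-invariant reduction) of the resulting state. I would identify that designated qudit with a qudit in the last column of the column-ordered tableau, since removing the last column corresponds to the ``deepest'' position in the Young--Yamanouchi chain. Because $\rho^{[\lambda]}$ is the same state in both protocols, it then suffices to show that the marginal on a qudit of the last block is the same as the reduced state prescribed in~\cite{li2024optimal}.

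Two facts support this. First, the last block is anti-symmetric, so all of its qudits have identical marginals and the choice ``last qudit'' is immaterial. Second, discarding the earlier blocks, and in the \emph{Replace} step discarding earlier \texttt{purified} contents, is just a partial trace over registers that~\cite{li2024optimal} also traces out.

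The main obstacle is the matching step: verifying that the vector in $V_\lambda$ fixed by $A(\lambda'_1,\dots,\lambda'_{\lambda_1})$ (unique by \Cref{lem:one_dimensional}) coincides with the reference vector used in~\cite{li2024optimal}, and that their output qudit sits in the last column. My first attempt would be to characterize both vectors by a common property: each is a simultaneous eigenvector of the Jucys--Murphy elements with eigenvalues given by the column-reading contents, and each is invariant up to sign under column anti-symmetrizers. Uniqueness from \Cref{lem:one_dimensional} then forces equality. If the qudit positions differ, I would conjugate by a fixed permutation, which is harmless because it acts trivially on $W_\lambda$.
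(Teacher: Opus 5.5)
Your proposal is correct and follows the paper's proof. You reduce to the output of \Cref{alg:unitary-Schur-Sampling}, identify the column-ordered Young--Yamanouchi vector $\ket{\lambda^\diamond}$ of \cite{li2024optimal} with $\ket{0_\lambda}$ (it is fixed by the column anti-symmetrizer by Young's orthogonal form, then uniqueness from \Cref{lem:one_dimensional}), note that their output register, the bottom box of the rightmost column of $\lambda^\diamond$, is qudit $n$, the last qudit of the last anti-symmetric block, and conclude by contractivity of the partial trace with resource counts inherited from \Cref{thm:correctness_weak_schur}; the Jucys--Murphy characterization and the permutation-conjugation fallback are unnecessary, and that fallback would actually change which qudit is returned rather than being justified by acting trivially on $W_\lambda$.
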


\begin{proof} 
We briefly adopt the notation of~\cite{li2024optimal}  in order to compare. The operational description of their protocol shows that it is equivalent to the following three steps: (i) measure the Young Diagram ($\lambda$) of the input along with the Young-Yamanouchi label ($s$) of the permutation register; (ii) apply a transition operator $\Pi^{[\lambda]}_{\lambda^\diamond,\, s}$ to map this state to the column ordered subspace ($\lambda^\diamond$);  (iii) return the register corresponding to the bottom box of the rightmost column of $\lambda^\diamond$, tracing out the rest.
In our notation, the state prepared after steps (i) and (ii) of this protocol is exactly 
\begin{align}
    \sum_{\lambda \, \vdash \,n} \text{Tr}[\Pi_\lambda\rho^{\otimes n}] \frac{\left(\Pi^{\lambda_1'}_{\mathrm{anti}} \otimes \Pi^{\lambda_2'}_{\mathrm{anti}} \otimes ... \otimes \Pi^{\lambda_{\lambda_1}'}_{\mathrm{anti}}\right)  \
    \Pi_{\lambda} \rho^{\otimes n} \Pi_{\lambda}\left(\Pi^{\lambda_1'}_{\mathrm{anti}} \otimes \Pi^{\lambda_2'}_{\mathrm{anti}} \otimes ... \otimes \Pi^{\lambda_{\lambda_1}'}_{\mathrm{anti}}\right)}{\text{Tr}\left[\left(\Pi^{\lambda_1'}_{\mathrm{anti}} \otimes \Pi^{\lambda_2'}_{\mathrm{anti}} \otimes ... \otimes \Pi^{\lambda_{\lambda_1}'}_{\mathrm{anti}}\right)  \Pi_{\lambda} \rho^{\otimes n} \Pi_{\lambda}\left(\Pi^{\lambda_1'}_{\mathrm{anti}} \otimes \Pi^{\lambda_2'}_{\mathrm{anti}} \otimes ... \otimes \Pi^{\lambda_{\lambda_1}'}_{\mathrm{anti}}\right)\right]} \otimes \ketbra{\lambda}{\lambda} \, .
\end{align}
(See~\Cref{app:Additional_rep_theory_proofs}, and in particular~\Cref{lem:YY_column_anti} for a justification of this fact.)
But the analysis of~\Cref{alg:unitary-Schur-Sampling} shows that it returns a state at most $\delta$ far in trace distance from the state above. \Cref{alg:quantum_purity_amplification} returns the last qudit of the state prepared by~\Cref{alg:unitary-Schur-Sampling}, and since trace distance is non-increasing under partial traces, the result follows. 

The number of controlled-SWAPs and $T$-count is immediate from the analysis of~\Cref{alg:weak-Schur-Sampling}. 
\end{proof}
 \section{Mixed versus pure one-way state generators via anti-symmetrization}

This section presents an illustrative example application of anti-symmetrization to quantum cryptography. 
To preface our application, we begin by describing one-way state generators. 
One-way state generators were created to be a quantum analog to one-way functions, and roughly are states that can be efficiently prepared from a key $k$, can be efficiently verified, given copies of the state and key, and for whom the key $k$ cannot be efficiently learned. 

In their original definition~\cite{morimae2024one}, one-way state generators were assumed to have a \emph{computationally} efficient verification algorithm, and indeed if the output of a one-way state generator is pure, it always has a canonical, computationally efficient verification (namely, prepare the state $\ket{\psi_k}$ and perform a SWAP test). 
However, when the rank of the one-way state is allowed to be greater than one, the resulting states might no longer admit computationally efficient verification (although they will always admit an inefficient verification).
This led to the definition of \emph{inefficiently verifiable} one-way state generators~\cite{batra2024commitments}, which were shown to be equivalent to EFI pairs~\cite{brakerski2022computational}.
In a series of works~\cite{bostanci2025oracle, behera2024new, goldin2025translating}, it was shown that these two notions (efficiently and inefficiently verifiable) of one-way state generators can be separated relative to a state and unitary oracle.  However, it is an open question whether efficiently verifiable mixed-state one-way state generators are equivalent to pure-state one-way state generators.  

Using the anti-symmetrization algorithm, we can show: when the rank of a one-way state generator is promised to be polynomial in the security parameter $\lambda$, there is a reduction from mixed one-way state generators to pure state output one-way state generators.  
Our result requires a slight modification to the definition of one-way state generators, namely adding the assumption that there is a non-negligible lower bound on the smallest non-zero eigenvalue . 
We formally define gapped one-way state generators below.  

\begin{definition}[Gapped rank-$r$ one-way state generators]
    A one-way state generator, described by algorithms $(\mathsf{KeyGen}, \mathsf{StateGen}, \mathsf{Ver})$, is an \emph{$\epsilon$-gapped rank-$r$ one-way state generator} if for all $k$, $\rho_{k} \gets \mathsf{StateGen}(k)$ is rank-$r$ and the smallest non-zero eigenvalue of $\rho_{k}$ is at least $\epsilon$.  
\end{definition}

We prove the following corollary of our anti-symmetrization result. 
\begin{corollary}
    Pure-state output one-way state generators exist if and only if there exist polynomials $r$, $t$ such that $1/t$-gapped rank-$r$ one-way state generators exist.   
\end{corollary}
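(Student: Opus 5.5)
The forward direction is immediate: a pure-state OWSG is a $1$-gapped rank-$1$ OWSG. Its only non-zero eigenvalue is $1$, so $r=t=1$ works. All the content is in the converse. There, given a $1/t$-gapped rank-$r$ OWSG $(\mathsf{KeyGen},\mathsf{StateGen},\mathsf{Ver})$, we build a pure-state OWSG $(\mathsf{KeyGen},\mathsf{StateGen}',\mathsf{Ver}')$ with the same key generation.

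\textbf{Construction.} $\mathsf{StateGen}'(k)$ prepares $n = C r t$ copies of $\rho_k$ (or $n = O(rt\log(r/\epsilon))$ copies if using \Cref{alg:sample-inefficient-antisym}) and runs \Cref{alg:sample-efficient-antisym} with the rank promise (so $k<r$ in the loop). It outputs the anti-symmetric register. By \Cref{thm:sample-efficient-analysis}, with probability at least $1-\epsilon-\exp(-\Omega(n/t))$ this equals the unique pure state $\ket{\psi_k}\propto\Pi^r_{\mathrm{anti}}(\ket{\varphi_1}\otimes\cdots\otimes\ket{\varphi_r})$. The output is therefore a mixed state $\tilde\psi_k$ that is $\negl$-close in trace distance to $\ketbra{\psi_k}{\psi_k}$ once we pick $n=\mathrm{poly}(\lambda)$ and $\epsilon=2^{-\lambda}$. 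To make the output exactly pure, we coherently run the algorithm and postselect. A cleaner option is to define the pure-state generator as the purified circuit and note that OWSG definitions tolerate negligible error. I would handle this by invoking the standard fact that statistically negligible deviations preserve both correctness and security.

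\textbf{Verification.} Given $k$ and a candidate state, $\mathsf{Ver}'$ prepares a fresh approximate copy of $\ket{\psi_k}$ via $\mathsf{StateGen}'(k)$ and performs a SWAP test, or repeats this to amplify. Correctness follows because $\mathsf{StateGen}'(k)$ is negligibly close to the pure state $\ket{\psi_k}$, so the SWAP test accepts with probability $1-\negl$.

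\textbf{Security.} Suppose a QPT adversary $\mathcal{A}'$, given $\mathrm{poly}$ copies of $\tilde\psi_k$, outputs $k'$ such that $\mathsf{Ver}'(k',\tilde\psi_k)$ accepts with non-negligible probability. We build $\mathcal{A}$ against the original scheme. It takes $\mathrm{poly}$ copies of $\rho_k$, runs $\mathsf{StateGen}'$'s anti-symmetrization on them (this needs only the copies, not $k$), feeds the results to $\mathcal{A}'$, and outputs $k'$.

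The main obstacle is translating success. We must relate acceptance of $k'$ by $\mathsf{Ver}'$, which roughly means $|\langle\psi_{k'}|\psi_k\rangle|^2$ is large, to acceptance of $k'$ by the original $\mathsf{Ver}(k',\rho_k)$. The key structural point is that $\ket{\psi_k}$ determines the support of $\rho_k$: $\ket{\psi_k}$ is the Slater determinant, i.e. the Plücker embedding of $\mathrm{supp}(\rho_k)$. High overlap with $\ket{\psi_{k'}}$ therefore means the supports are close, but the spectra may still differ, so $\mathsf{Ver}$ need not accept.

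I would try to sidestep this by proving one-wayness of the new scheme against its own verifier, as in the OWSG definition. The adversary's goal is then to output $k'$ with $\ket{\psi_{k'}}\approx\ket{\psi_k}$. To reduce to the original scheme, I would have the reduction also check, or alternatively modify $\mathsf{Ver}'$ to additionally run the original $\mathsf{Ver}(k',\cdot)$ on fresh copies. This second approach requires $\mathsf{Ver}'$ to have copies of $\rho_k$ rather than $\ket{\psi_k}$. If that fails, the fallback is to argue that in the OWSG security game (inverter receives copies, challenger verifies with $\mathsf{Ver}$ on a fresh copy of the generated state), $\mathsf{Ver}'$ can be taken as the canonical SWAP-based verifier. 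In that case one-wayness of the new scheme reduces to the hardness of finding $k'$ whose anti-symmetrized state matches. I would try to establish this via the original one-wayness combined with the observation that $\mathsf{Ver}$ applied to $\rho_{k'}$ versus $\rho_k$ is controlled by their distance. Bridging from support-closeness to state-closeness, which uses the $1/t$ gap, is the step I expect to need the most care.
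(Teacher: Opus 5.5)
Your construction and reduction are the paper's. The paper prepares $\ket{\psi_k}$ by anti-symmetrizing copies of $\rho_k$ (via \Cref{alg:sample-inefficient-antisym}; your \Cref{alg:sample-efficient-antisym} works equally well). For security it asserts that a pure-state inverter, run on anti-symmetrized copies of $\rho_k$, "clearly" inverts the original scheme, and it never specifies $\mathsf{Ver}'$. You are right that this translation is where the content lies. But your proposal leaves it unresolved, and the bridge you plan cannot exist: support-closeness never upgrades to state-closeness, gap or no gap. For example, $\tfrac12(\ket{0}\bra{0}+\ket{1}\bra{1})$ and $\tfrac34\ket{0}\bra{0}+\tfrac14\ket{1}\bra{1}$ are both $\tfrac14$-gapped and have the same anti-symmetrized state, yet they are at trace distance $\tfrac14$. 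As written, the security half of your converse is not proved.

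The missing idea is that acceptance probabilities, not states, need to transfer. The gap gives exactly this through the operator inequality $\rho_k \succeq \tfrac1t \Pi_k$, where $\Pi_k$ is the projector onto $\mathrm{supp}(\rho_k)$. Your first option (run the original $\mathsf{Ver}$) also needs no copies of $\rho_k$: the one-register marginal of the Slater determinant $\ket{\psi_k}$ is exactly $\Pi_k/r$. So define $\mathsf{Ver}'(k',\sigma)$ to run $\mathsf{Ver}(k',\cdot)$ on the first qudit of $\sigma$, and let $M_{k'}$ be the accepting POVM element of $\mathsf{Ver}(k',\cdot)$. Then for all $k,k'$,
\[
\Pr[\mathsf{Ver}'(k,\psi_k)=\bot] = \tfrac1r \Tr[(I-M_k)\Pi_k] \le \tfrac{t}{r}\Tr[(I-M_k)\rho_k], \qquad \Tr[M_{k'}\rho_k] \ge \tfrac1t \Tr[M_{k'}\Pi_k] = \tfrac{r}{t}\Pr[\mathsf{Ver}'(k',\psi_k)=\top].
\]
The first inequality gives correctness: the rejection probability is negligible on average over $k\leftarrow\mathsf{KeyGen}$, plus the negligible anti-symmetrization error. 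The second shows that the paper's reduction turns an inverter with advantage $\epsilon$ into one against the original scheme with advantage at least $\tfrac{r}{t}(\epsilon-\negl)$. This is non-negligible since $\tfrac rt \ge \tfrac1t$.

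If you prefer the canonical projective verifier, Cauchy--Schwarz gives $\bra{\psi_k}M_{k'}\otimes I\ket{\psi_k} \ge (|\langle\psi_{k'}|\psi_k\rangle|-\sqrt{\eta})^2$ with $\eta = 1-\bra{\psi_{k'}}M_{k'}\otimes I\ket{\psi_{k'}}$. But you must then separately handle adversarial $k'$ for which $\mathsf{Ver}(k',\rho_{k'})$ itself fails.

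Finally, drop the SWAP-test verifier. $\mathsf{Ver}'$ receives a single copy of the challenge state, so one SWAP test accepts every $k'$ with probability at least $\tfrac12$, and there is nothing to repeat for amplification.
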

\begin{proof}
    In one direction, pure-state output one-way state generators are already $1$-gapped rank-$1$ one-way state generators, so their existence immediately implies the existence of mixed state output one-way state generators.

In the other direction, say that $1/t$-gapped rank-$r$ one-way state generators exist, and for key $k$, say that the one-way state generator outputs a state $\rho_k$ that has spectral decomposition
\begin{equation*}
    \rho_k = \sum_{i} \lambda_i \ketbra{\psi_{k, i}}\,.
\end{equation*}
Then we claim that the pure-state output one-way state generator that maps the key $k$ to the state
\begin{equation*}
    \ket{\psi_k} \propto \sum_{\sigma \in S_{r}} \mathrm{sign}(\sigma) P_{\sigma} \left(\ket{\psi_{k, 1}} \otimes \ldots \otimes \ket{\psi_{k, r}}\right)
\end{equation*}
exists.
First, we note that our anti-symmetrization procedure in (\Cref{alg:sample-inefficient-antisym}) prepares, except with negligible failure probability, the pure state $\ket{\psi_k}$ from polynomially many copies of $\rho_k$ in $\mathrm{poly}(r, t, \log(d))$ time, since $\rho_k$ is assumed to have rank $r$ and its minimum eigenvalue is assumed to be at least $1/t$.  It is also clear that an algorithm that could break the one-wayness of this one-way state generator can break the one-wayness of the original one-way state generator by first applying the anti-symmetrization algorithm (on polynomially many copies of $\rho_k$) and running the algorithm for breaking the one-wayness of the pure-state one-way state generator.  
\end{proof}

Finally, we note that this reduction is ``almost-tight'' among black-box reductions between mixed-state and pure-state one-way state generators, in the sense that when $r$ becomes sufficiently super-polynomial (growing faster than $\lambda^{\log(\lambda)}$), there is a quantum state separation between rank $r$ one-way state generators and efficiently verifiable one-way state generators, using the work of \cite{bostanci2025oracle, behera2024new} and \cite{batra2024commitments}, which can be lifted to a unitary oracle separation by using \cite{goldin2025translating}.  
\begin{claim}
    For every function $r(\lambda) \leq 2^{\lambda}$ that satisfies $\log(r) = \omega(\log^2(\lambda))$, there exists a quantum state oracle (or unitary oracle) relative to which $(1/r)$-gapped rank-$r$ one-way state generators exist, but efficiently verifiable one-way state generators (and thus in particular pure-state output one-way state generators) do not exist.
\end{claim}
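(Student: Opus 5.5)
The plan is to lift the oracle separation of \cite{bostanci2025oracle, behera2024new} from low-rank mixed states to rank-$r$ states, using the freedom given by the growth condition $\log(r) = \omega(\log^2(\lambda))$. The separation has two halves. First, mixed-state one-way state generators whose outputs are maximally mixed on a Haar-random subspace should exist relative to a state oracle. Second, relative to the same oracle, every efficiently verifiable one-way state generator (hence every pure-state one) can be broken. I will then lift from a state oracle to a unitary oracle via \cite{goldin2025translating}.

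Concretely, for each key $k \in \{0,1\}^\lambda$ the oracle provides copies of (or a unitary preparing a purification of) $\rho_k = \Pi_{V_k}/r$. Here $V_k$ is a Haar-random $r$-dimensional subspace of $(\mathbb{C}^2)^{\otimes m}$ with $2^m \gg r$, for instance $m = 2\lambda$; this is possible since $r \le 2^\lambda$. Each $\rho_k$ has rank exactly $r$ and all nonzero eigenvalues equal to $1/r$, so it is $(1/r)$-gapped. The scheme is $\mathsf{StateGen}(k)$ queries the oracle on $k$, with the oracle itself giving verification access.

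For one-wayness, the key point is that $\rho_k^{\otimes t}$ for $t = \mathrm{poly}(\lambda)$ is close to the output of a Haar-random state of dimension $r$ (or to $\rho_{k'}$ for an independent key) when $t^2/r$ is small. Since $r$ is superpolynomial, a polynomial-query adversary sees essentially a maximally-mixed-on-random-subspace distribution independent of $k$. A hybrid argument over keys then shows it cannot find any $k'$ whose state passes verification.

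For the impossibility half I will follow \cite{bostanci2025oracle}. An efficient verifier $\mathsf{Ver}$ plus state generator can be simulated given a $\mathsf{PSPACE}$-style (or $\mathsf{QPSPACE}$) oracle together with a polynomial number of copies of the oracle states. The attacker tomographically learns only the polynomially many oracle states that the generator actually touches, replaces the rest with maximally mixed surrogates, and inverts by brute-force search with the classical oracle. This substitution goes unnoticed exactly because $\rho_k$ for high rank $r$ is indistinguishable from $I/2^m$-like behavior within polynomially many copies.

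I expect the main obstacle to be the quantitative threshold. The earlier separations handle ranks where polylog-many copies fail to distinguish the states. Here the adversary or the verifier may use $\mathrm{poly}(\lambda)$ copies, and a reduction may compose up to quasi-polynomially many steps, which is where $\log r = \omega(\log^2 \lambda)$ enters. I plan to require $t^2/r \le \lambda^{-\omega(1)}$ for every $t \le \lambda^{O(\log\lambda)}$, which exactly matches that hypothesis. I would first check that the distinguishing bound in \cite{bostanci2025oracle} degrades as $t^2/r$, and then plug it in.
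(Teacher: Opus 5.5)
Your route differs from the paper's, and as written it has gaps. The paper does not design a new oracle. It takes exactly the oracle of \cite{bostanci2025oracle}, a Haar-random state of dimension $d = 2^{\sqrt{\log r}}$ plus a $\mathsf{UnitaryPSPACE}$ oracle. The non-existence of efficiently verifiable OWSGs is therefore inherited, and the unitary version comes from \cite{goldin2025translating}.

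The rank-$r$ scheme in the paper is the EFI-to-OWSG construction of \cite{batra2024commitments}, applied to the 1PRS/EFI of \cite{ananth2024cryptography, chen2025power} with $\eta = 1/d$. A key of $\log d$ bits selects a tensor product of $\log d$ states of dimension $d$, so the rank is at most $d^{\log d} = r$. The only new work is showing that each factor has minimum eigenvalue at least $1/d$, using the block-diagonal form $\sum_x \ketbra{x}{x}\otimes\ketbra{\psi_x}{\psi_x}$ and Levy's lemma.

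This is also where $\log r = \omega(\log^2\lambda)$ comes from: $d$ must be superpolynomial so that $t^2/d$ and $\eta$ are negligible, and $\log r = \log^2 d$. Your explanation via ``quasi-polynomially many steps'' does not correspond to anything in the argument. Your own construction, if repaired, would only use that $r$ is superpolynomial.

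By switching to a bespoke oracle (independent random $r$-dimensional subspaces $V_k$), you lose both citations. \cite{bostanci2025oracle} and \cite{goldin2025translating} are proven for the Haar-random-state oracle, so you would have to reprove both halves. Your sketch of the impossibility half does not do this, because a state of superpolynomial rank cannot be learned tomographically from polynomially many copies.

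What could work is the opposite observation. For an efficient algorithm making $T$ copy-queries in total, your oracle is on average $O(T^2/r)$-close to one returning maximally mixed states. After that you still need the $\mathsf{UnitaryPSPACE}$ component, which is absent from your construction paragraph, and an argument that efficiently verifiable OWSGs relative to it alone can be broken.

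More seriously, some of the variants you allow destroy the separation. Suppose the oracle is a unitary $U_k$ preparing a purification of $\rho_k$, or it hands out copies of that purification. Then $k \mapsto U_k\ket{0}$, verified by a SWAP test against $U_{k'}\ket{0}$, is an efficiently verifiable pure-state OWSG relative to your oracle. By Uhlmann, distinct keys give purifications with overlap at most $F(\rho_k,\rho_{k'}) = O(\sqrt{r/2^m})$, and locating $k$ among $2^\lambda$ indices needs exponentially many queries.

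Similarly, ``the oracle itself giving verification access'' cannot mean efficient access to, say, the projector onto $V_{k'}$; otherwise your own scheme is an efficiently verifiable OWSG, contradicting the second half. The rank-$r$ scheme must be only inefficiently verifiable in the sense of \cite{batra2024commitments}, and the oracle must release $\rho_k$ only as a genuinely mixed state. The unitary version then needs an actual translation theorem re-established for your oracle, not a purification-preparing unitary.
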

\begin{proof}
    Fix a local dimension $d(\lambda) = 2^{\sqrt{\log(r(\lambda))}}$.  For $r$ as in the claim, this is super-polynomial.  The oracle is (for every $\lambda$) to take a Haar random state in $d$ dimensions, as well as a $\mathsf{UnitaryPSPACE}$ oracle.  From \cite{bostanci2025oracle}, efficiently verifiable one-way state generators do not exist relative to this oracle.  Additionally, from the work of \cite{ananth2024cryptography, chen2025power}, there exists a construction of 1PRS relative to this oracle that have dimension $d$, and such that no adversary receiving $t(\lambda)$ copies of the state can distinguish between the construction and a maximally mixed state on $d$-dimensions with probability greater than $O(t^2 / d) = \negl(\lambda)$.  Recall that the construction of one-way state generators from \cite{batra2024commitments} is to fix a negligible $\eta = \eta(\lambda)$ (which determines the soundness of the construction) and map a key of $\log(1/\eta)$ bits to the same number of copies of either the 1PRS or the maximally mixed state on $d$ dimensions.  Taking $\eta = 1/d$ (which is negligible), we have that the total dimension of the one-way state generator is $d^{\log(d)} = r$, and an adversary can only invert the one-way state generator with negligible probability in $\lambda$.  

    To prove a bound on the minimum non-zero eigenvalue of this one-way state generator, we have to inspect the construction of \cite{ananth2024cryptography}.  The construction of EFI pairs implied by \cite{ananth2024cryptography} involves either outputting a maximally mixed state, or outputting $\mathbb{E}_{k}[Z^{k} \ketbra{\psi} Z^{k}]$, where the $Z$ operators act on the first $m$ qubits of the system (where $m$ can be chosen to be anything $\omega(\log(\lambda))$).  Writing the Haar random state as $\ket{\psi} = \sum_{x} \ket{x}\ket{\psi_{x}}$, this state is exactly $\sum_{x} \ketbra{x} \otimes \ketbra{\psi_{x}}$.  This matrix, then, is block-diagonal, and the minimum eigenvalue is exactly the minimum over strings $x \in \{0, 1\}^{m}$ of $\Tr[(\ketbra{x} \otimes \mathrm{id}) \ketbra{\psi}]$.  For every fixed $x$, Levy's lemma~\cite{meckes2019random} tells us that the probability that this is less than $2^{-m-1}$ is at most $\exp(-2^{n-m} / 24)$.  Taking a union bound over all $x \in \{0, 1\}^{m}$, when $m$ is, say $\log(d)/2$ (which satisfies the required condition on $m$), tells us that with overwhelming probability the minimum eigenvalue of this state is greater than $1/d$.  On the other hand, the maximally mixed state over $d$-dimensions has a minimum non-zero eigenvalue of $1/d$.  
 
    Thus the minimum eigenvalue of both states in the EFI pair is greater than $1/d$ with overwhelming probability.  Taking the tensor product of $\log(d)$ of them yields a state whose rank is at most $d^{\log(d)} = r$, and whose minimum eigenvalue is at least $1/r$, as desired.
\end{proof}

We note that it remains an open question whether one can get pure-state one-way state generators from very high rank mixed-state one-way state generators, using their efficient verification algorithm, or if the two notions might actually be separate.  We point out that finding an oracle relative to which one-way puzzles do not exist, but efficiently verifiable one-way state generators exist (extending the result of \cite{mantri2026efi}) would separate pure-state one-way state generators from efficiently verifiable mixed-state one-way state generators.   
\newpage
\sloppy
\printbibliography

\appendix
\crefalias{section}{appendix}
\section{Additional representation-theoretic details and an alternate perspective on~\Cref{alg:unitary-Schur-Sampling,alg:quantum_purity_amplification}}
\label{app:Additional_rep_theory_proofs}

This appendix contains some supplementary proofs and technical discussion which may support a reader's understanding of our paper, but which is not required to follow its main narrative.

We remind the reader of the following notation: 
\begin{itemize}
    \item $V_\lambda: $ The irreducible $S_n$ module (Specht module) indexed by $\lambda$.  
    \item $W_\lambda:$ The $\operatorname{GL}
    _d$ module indexed by $\lambda$. 
    \item $D_\lambda:$ The irreducible representation $D_\lambda: \mathbb{C}[S_n] \rightarrow \mathcal{L}(V_\lambda)$ mapping elements of the symmetric group algebra to their action on the $V_\lambda$ module. 
    \item $\Pi^\lambda :$ The projector onto the $\lambda$-isotypic component of $\left(\mathbb{C}^d\right)^{\otimes n}$.
    \item $A(S):$ The element of $\mathbb{C}[S_n]$ which projects onto the anti-symmetric subspace on the indices indicated by $S \subseteq [n]$.
    \item $A(x_1,x_2,...,x_t):$ The element of $\mathbb{C}[S_n]$ which simultaneously projects onto the anti-symmetric subspaces on the first $x_1$ elements of $[n]$, the subsequent $x_2$ elements, and so on. 
\end{itemize}

We will also develop some additional notation to discuss restrictions of Specht modules. For any $n$ and subset $R \subseteq [n]$, let $S_R \times S_{R^c} \subseteq S_n$ be the subgroup of the symmetric group $S_n$ containing permutations which do not mix the elements in $R$ and those in $R^c$. 
Additionally, let $\text{Res}^{S_n}_{S_R, S_{R^c}}$ be the functor which maps the Specht module $V_\lambda$ to the module with the same vector space and representation, but where the group is restricted to $S_R \times S_{R^c}$. This notation lets us restate Pieri's rule, now without any reference to states in $\left(\mathbb{C}^d\right)^{\otimes n}$.

\begin{lemma}[Pieri's Rule\footnote{Technically only the second half of this statement (concerning $c^\lambda_{\mu,\nu}$ when $\mu$ is a single column partition) is Pieri's rule, while the first half sets up the more general Littlewood-Richardson rule. }]
    For any $n$, partition $\lambda \vdash n$, and  $R\subseteq [n]$ there is an isomorphism between modules: 
    \begin{align}
        \text{Res}^{S_n}_{S_R, S_{R^c}} V_\lambda \cong \bigoplus_{\mu \, \vdash |R|, \,\nu \, \vdash|R^c|} c^\lambda_{\mu,\nu} \left(V_\mu \otimes V_\nu\right).
    \end{align}
    Equivalently, there is a $(S_R \times S_{R^c})$-equivariant homomorphism mapping vector spaces
    \begin{align}
        \phi^\lambda_{R,R^c}:V_\lambda \rightarrow \bigoplus_{\mu \, \vdash |R|, \,\nu \, \vdash|R^c|}  \left(V_\mu \otimes V_\nu\right)^{\oplus c^\lambda_{\mu,\nu}}.
    \end{align}
If $\mu = (1,1,...,1)$ is a one column representation then we have 
    $
        c^\lambda_{\mu,\nu} =1
    $ whenever $\nu$ can be obtained from $\lambda$ by deleting exactly one box from $|R|$ distinct rows of $\lambda$ and 
    $
        c^\lambda_{\mu,\nu} =0
    $
    otherwise. 
\end{lemma}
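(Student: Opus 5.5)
The statement has two parts. The first is the restriction decomposition with Littlewood--Richardson multiplicities. The second is the Pieri specialization for a one-column $\mu$. I would treat them in that order, reducing everything to standard character theory of $S_n$ so that no explicit tableau manipulation is needed.

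For the first half, I relabel $R$ as $\{1,\dots,|R|\}$. Conjugating by any permutation mapping $R$ to the first $|R|$ letters identifies $S_R\times S_{R^c}$ with the Young subgroup $S_m\times S_{n-m}$, where $m=|R|$. The conjugation acts as an automorphism of $V_\lambda$, so the restriction is unchanged up to isomorphism. By Maschke's theorem, $\text{Res}\, V_\lambda$ is completely reducible. Its irreducibles are exactly $V_\mu\otimes V_\nu$, since irreps of a direct product are tensor products of irreps. This gives the direct-sum form with some multiplicities $c^\lambda_{\mu,\nu}\ge 0$, and the equivariant isomorphism $\phi^\lambda_{R,R^c}$ is just this decomposition written as a map. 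Frobenius reciprocity then identifies $c^\lambda_{\mu,\nu}$ with the multiplicity of $V_\lambda$ in $\text{Ind}_{S_m\times S_{n-m}}^{S_n}(V_\mu\otimes V_\nu)$. Under the Frobenius characteristic map this induction corresponds to the product of Schur functions $s_\mu s_\nu$. Hence $c^\lambda_{\mu,\nu}$ is the coefficient of $s_\lambda$ in $s_\mu s_\nu$, which is the Littlewood--Richardson coefficient.

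For the second half, I take $\mu=(1^m)$, so $s_\mu=e_m$ is the elementary symmetric function. The dual Pieri rule for $s_\nu e_m$ states that the product is the sum of $s_\lambda$ over all $\lambda$ obtained from $\nu$ by adding a vertical strip of $m$ boxes, that is, $m$ boxes with no two in the same row, each with coefficient $1$ (see \cite{grinberg2020hopf}). Read in reverse, $\lambda/\nu$ is a vertical $m$-strip exactly when $\nu$ arises from $\lambda$ by deleting one box from each of $m$ distinct rows. This is because the deleted boxes must be the last boxes of their rows, and $\nu$ must remain a partition. Therefore $c^\lambda_{(1^m),\nu}=1$ in that case and $0$ otherwise.

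The main obstacle is the dual Pieri rule itself, if a self-contained proof is wanted. I would derive it from the ordinary Pieri rule for $h_m$ by applying the involution $\omega$. This involution sends $h_m\mapsto e_m$ and $s_\lambda\mapsto s_{\lambda'}$, and it turns horizontal strips into vertical strips. The ordinary Pieri rule in turn follows from the Jacobi--Trudi or bialternant formula, or from a direct RSK/tableau bijection. Given the paper's style, I would cite it rather than reprove it.
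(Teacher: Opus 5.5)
Your argument is correct and is essentially what the paper does. The paper states this lemma without proof as the standard Pieri rule (citing Theorem 2.7.1 of \cite{grinberg2020hopf} in the main-text version), and you invoke the same dual Pieri rule $s_\nu e_m=\sum s_\lambda$ over vertical $m$-strips $\lambda/\nu$, with your Frobenius-reciprocity and characteristic-map step (identifying $c^\lambda_{\mu,\nu}$ with the coefficient of $s_\lambda$ in $s_\mu s_\nu$) supplying the bridge from the symmetric-function identity to the module statement that the paper leaves implicit.
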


We are now in a position to prove a (standard) technical statement which is used without proof in the main document. 

\begin{lemma}
    \label{lem:one_dimensional}
    For every partition $\lambda$, there is a unique element $\ket{0_\lambda} \in V_\lambda$ with 
    \begin{align}
        D_\lambda(A(\lambda_1', \lambda_2',...,\lambda_{\lambda_1}'))\ket{0_\lambda} = \ket{0_\lambda}.
    \end{align}
    Equivalently, for each partition $\lambda$, the matrix $D_\lambda(A(\lambda_1', \lambda_2',...,\lambda_{\lambda_1}'))$ is a rank one projector. 
\end{lemma}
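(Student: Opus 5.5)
Since $A(\lambda_1',\dots,\lambda_{\lambda_1}')$ is a product of commuting idempotents (anti-symmetrizers on disjoint blocks), $D_\lambda(A(\cdot))$ is a projector. It is self-adjoint in any $S_n$-invariant inner product, because $\text{sgn}(\sigma)=\text{sgn}(\sigma^{-1})$. The claim therefore reduces to showing that its image has dimension exactly one. We will compute the dimension of the image, i.e. the multiplicity of the sign-type vectors, by restricting $V_\lambda$ to the Young subgroup $S_{\lambda_1'}\times S_{\lambda_2'}\times\cdots\times S_{\lambda'_{\lambda_1}}$. The image of $D_\lambda(A(\cdot))$ is exactly the isotypic component of $\text{Res}\,V_\lambda$ on which this subgroup acts by the product of sign characters. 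Its dimension is the number of copies of $V_{(1^{\lambda_1'})}\otimes\cdots\otimes V_{(1^{\lambda'_{\lambda_1}})}$ in the restriction.

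We compute this multiplicity by iterating the version of Pieri's rule stated just above. First restrict along $R=\{1,\dots,\lambda_1'\}$. The component with $\mu=(1^{\lambda_1'})$ is $\bigoplus_\nu c^\lambda_{\mu,\nu}\,V_\mu\otimes V_\nu$, where $c^\lambda_{\mu,\nu}=1$ exactly when $\nu$ is obtained from $\lambda$ by deleting one box from $\lambda_1'$ distinct rows. Since $\lambda$ has exactly $\lambda_1'$ rows, there is exactly one such $\nu$: the diagram $\lambda$ with its first column removed, with multiplicity one. The projector $D_\lambda(A(\{1,\dots,\lambda_1'\}))$ therefore has image isomorphic, as an $S_{R^c}$-module, to $V_\nu$, where $\nu'_j=\lambda'_{j+1}$. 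The remaining blocks of $A(\cdot)$ act only on $R^c$ and commute with the first factor. So the image of the full projector is isomorphic to the image of $D_\nu(A(\lambda_2',\dots,\lambda'_{\lambda_1}))$ acting on $V_\nu$.

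Induction on the number of columns $\lambda_1$ then finishes the argument. In the base case $\lambda$ is a single column, $V_\lambda$ is the one-dimensional sign representation, and $A$ acts as the identity on it, giving rank one. The inductive step is exactly the reduction in the previous paragraph, so the rank is $1$ for every $\lambda$. A unit vector $\ket{0_\lambda}$ spanning the image is then unique up to phase, which is the sense of uniqueness in the statement.

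The main obstacle is making the reduction step rigorous rather than heuristic. We must check that the $\mu=(1^{\lambda_1'})$ isotypic piece of $\text{Res}\,V_\lambda$ is precisely the image of $D_\lambda(A(R))$, and that this identification intertwines the action of the remaining anti-symmetrizers, which live in $\mathbb{C}[S_{R^c}]$. Both follow from $S_R\times S_{R^c}$-equivariance of $\phi^\lambda_{R,R^c}$: $A(R)$ acts as the projector onto the sign-isotypic part of the $S_R$ factor, and $\mathbb{C}[S_{R^c}]$ acts only on the $V_\nu$ factor. We will spell these two points out carefully. If the induction proves awkward, the fallback is to cite the standard fact that $\text{Res}^{S_n}_{S_{\lambda'}}V_\lambda$ contains the sign character with Kostka multiplicity $K_{\lambda',\lambda'}=1$.
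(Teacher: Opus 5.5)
Your proposal is correct and follows essentially the same route as the paper. Both argue by induction on the number of columns: restrict $V_\lambda$ along $R=[\lambda_1']$, use Pieri's rule to see that only $\mu=(1,\dots,1)$ survives $A(\lambda_1')$ and pairs, with multiplicity one, with the unique $\nu$ obtained by deleting the first column, then apply the inductive hypothesis to $D_\nu(A(\lambda_2',\dots,\lambda_{\lambda_1}'))$ via the $S_R\times S_{R^c}$-equivariance of $\phi^\lambda_{R,R^c}$. Your added remarks, that $D_\lambda(A(\cdot))$ is an orthogonal projector and that $\ket{0_\lambda}$ is unique only up to a scalar (a phase once normalized), are correct refinements of the statement as written.
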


\begin{proof}
    We prove the result by induction on the size of $\lambda_1$, i.e. the number of columns of $\lambda$. The base case is immediate, since any one column irrep corresponds to the sign representation, and in particular is one dimensional. 

    Next we consider an arbitrary partition $\lambda$.
    We take $R= [\lambda_1'] \subseteq [n]$ and consider the restriction $\text{Res}^{S_n}_{S_R, S_{R^c}}$ applied to $V_\lambda$, with $\phi^\lambda_{R, R^c}$ the corresponding homomorphism. By equivariance, we have 
    \begin{align}
        \phi^\lambda_{R, R^c} \left(D_\lambda(A(\lambda_1', \lambda_2',...,\lambda_{\lambda_1}'))\right) = \bigoplus_{\mu \, \vdash |R|, \,\nu \, \vdash|R^c|} \left(D_\mu(A(\lambda_1'))  \otimes D_\nu (A(\lambda_2',\lambda_3',...,\lambda_{\lambda_1}'))\right)^{c^\lambda_{\mu, \nu}}\phi^\lambda_{R, R^c}.
    \end{align}
    Because $A(\lambda_1')$ is the projector onto the anti-symmetric subspace,
    $
        D_\mu(A(\lambda_1'))
    $
    is nonzero only when $\mu$ is the single partition $(1,1,...,1)$. Note this column has length $|R| = \lambda_1'$. Fix this value for $\mu$. But then Pieri gives that $c^{\lambda}_{\mu,\nu}$ is nonzero only when $\nu$ is the unique partition which can be obtained by deleting one box from each row of $\lambda$. Fix this value for $\nu$. Then we have 
        \begin{align}
        \phi^\lambda_{R, R^c} \left(D_\lambda(A(\lambda_1', \lambda_2',...,\lambda_{\lambda_1}'))\right) = \left(D_\mu(A(\lambda_1'))  \otimes D_\nu (A(\lambda_2',\lambda_3',...,\lambda_{\lambda_1}'))\right)\phi^\lambda_{R, R^c}.
    \end{align}
    But $D_\mu(A(\lambda_1'))$ is the identity on a one-dimensional space, and $ D_\nu (A(\lambda_2',\lambda_3',...,\lambda_{\lambda_1}'))$ is a projector onto a one dimensional space by our induction hypothesis. Since $\phi^\lambda_{R, R^c}$ is a homomorphism, we are done. 
\end{proof}

We now comment briefly on how~\Cref{lem:one_dimensional} can be used to understand the output of~\Cref{alg:unitary-Schur-Sampling,alg:quantum_purity_amplification}, providing an alternate perspective on those algorithms. First observe that, for any partition $\lambda$, 
\begin{align}
    &\Uschur \left(\Pi^{\lambda_1'}_{\mathrm{anti}} \otimes \Pi^{\lambda_2'}_{\mathrm{anti}} \otimes ... \otimes \Pi^{\lambda_{\lambda_1}'}_{\mathrm{anti}}\right) \Pi^\lambda \Uschur^\dagger 
= D_\lambda(A(\lambda_1',\lambda_2',\lambda_3',...,\lambda_{\lambda_1}')) \otimes I_{W_\lambda}
\end{align}
and we have just shown that $D_\lambda(A(\lambda_1',\lambda_2',\lambda_3',...,\lambda_{\lambda_1}'))$ is a projector onto a one dimensional space. Then we have, for any $\sigma \in S_n$
\begin{align}
    \text{Im} \left(\left(\Pi^{\lambda_1'}_{\mathrm{anti}} \otimes \Pi^{\lambda_2'}_{\mathrm{anti}} \otimes ... \otimes \Pi^{\lambda_{\lambda_1}'}_{\mathrm{anti}}\right) \Pi^\lambda  \right) \supseteq \text{Im}\left(\left(\Pi^{\lambda_1'}_{\mathrm{anti}} \otimes \Pi^{\lambda_2'}_{\mathrm{anti}} \otimes ... \otimes \Pi^{\lambda_{\lambda_1}'}_{\mathrm{anti}}\right) P_\sigma \, \Pi^\lambda  \right) 
\end{align}
with equality whenever 
\begin{align}
    \left(\Pi^{\lambda_1'}_{\mathrm{anti}} \otimes \Pi^{\lambda_2'}_{\mathrm{anti}} \otimes ... \otimes \Pi^{\lambda_{\lambda_1}'}_{\mathrm{anti}}\right) P_\sigma \, \Pi^\lambda \neq 0.
\end{align}
A similar equation holds when $P_\sigma$ is replaced by the natural representation of any polynomial in $\mathbb{C}[S_n]$. 

Now consider the output of~\Cref{alg:unitary-Schur-Sampling} applied to some arbitrary input state $\rho^{\otimes n} \in \mathcal{L}\left((\mathbb{C}^d )^{\otimes n}\right)$. Conditioned on its returning partition label $\lambda$, and ignoring the $\delta/2$ probability of error, we can write the output of the algorithm (up to normalization) as 
\begin{align}
    \sum_r\Pi^\lambda \left(\Pi^{\lambda_1'}_{\mathrm{anti}} \otimes \Pi^{\lambda_2'}_{\mathrm{anti}} \otimes ... \otimes \Pi^{\lambda_{\lambda_1}'}_{\mathrm{anti}}\right)   K_r \rho^{\otimes n} K_r^\dagger \left(\Pi^{\lambda_1'}_{\mathrm{anti}} \otimes \Pi^{\lambda_2'}_{\mathrm{anti}} \otimes ... \otimes \Pi^{\lambda_{\lambda_1}'}_{\mathrm{anti}}\right) \Pi^\lambda 
\end{align}
where the $\Pi^\lambda$ and $\left(\Pi^{\lambda_1'}_{\mathrm{anti}} \otimes \Pi^{\lambda_2'}_{\mathrm{anti}} \otimes ... \otimes \Pi^{\lambda_{\lambda_1}'}_{\mathrm{anti}}\right)$ projectors express the fact the output state is a column-anti-symmetric state in the $\lambda$ isotypic component, and the $K_r$ are Kraus operators describing the full sequences of anti-symmetrization measurements (i.e. applications of \Cref{alg:anti-sym-growing}) and subsequent rearrangements of qudits which occurred during the algorithm. The sum over $r$ expresses the fact there are many possible records of intermediate measurement outcomes which can lead to~\Cref{alg:unitary-Schur-Sampling} outputting partition label $\lambda$, and the algorithm deliberately forgets those. But critically, each $K_r$ is just a representation of some polynomial in $\mathbb{C}(S_n)$, so 
\begin{align}
    \text{Im}\left(\Pi^\lambda \left(\Pi^{\lambda_1'}_{\mathrm{anti}} \otimes \Pi^{\lambda_2'}_{\mathrm{anti}} \otimes ... \otimes \Pi^{\lambda_{\lambda_1}'}_{\mathrm{anti}}\right)   K_r \right)
\end{align}
is the same, independent of $K_r$. This gives an alternate explanation of why no record of the sequence of anti-symmetrization measurements other than the lengths of the final anti-symmetric states produced appears in the permutation information (i.e. the $V_\lambda$ module of the Schur-transformed state) of the state produced by~\Cref{alg:unitary-Schur-Sampling}.

We can use the same idea to compare~\Cref{alg:quantum_purity_amplification} to the original purity amplification algorithm given in~\cite{li2024optimal}. The first two steps of that algorithm measure the isotypic component and YY basis information of an input $\rho^{\otimes n} \in \left(\mathbb{C}^d\right)^{\otimes n}$, then apply a measurement-dependent correction to rotate to the column-ordered YY basis state, which (following their notation) we denote $\ket{\lambda^{\diamond}}$. Conditioned on the measuring the $\lambda$-isotypic component we can write the post-measurement state produced by this protocol as
\begin{align}
    \sum_{\ket{T} \in YY(V_\lambda)} \Uschur^\dagger (\ketbra{\lambda^\diamond}{T} \otimes I_{W_\lambda}) \Uschur  \Pi^\lambda \rho^{\otimes n} \Pi^\lambda \Uschur^\dagger (\ketbra{T}{\lambda^\diamond} \otimes I_{W_\lambda}) \Uschur
\end{align}
where the sum runs over the set of all YY basis vectors for $V_\lambda$, which we denote $YY(V_\lambda)$ and which are described by Young Tableau. We now make a basic, but critical, observation about the column ordered YY basis element $\ket{\lambda^\diamond}$
\begin{lemma}
    \label{lem:YY_column_anti}
    For any partition $\lambda \vdash n$ let $\ket{\lambda^\diamond}$ be the column-ordered YY basis element. Then we have 
    \begin{align}
        D_\lambda(A(\lambda_1', \lambda_2',...,\lambda_{\lambda_1}') ) \ket{\lambda^\diamond} = \ket{\lambda^\diamond} 
    \end{align}
\end{lemma}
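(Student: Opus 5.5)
The plan is to use the defining property of Young--Yamanouchi (YY) vectors: each YY basis vector $\ket{T}$ of $V_\lambda$ is determined, up to phase, by its eigenvalues under the Jucys--Murphy elements $X_m = \sum_{i<m} P_{(i,m)}$. Concretely, $X_m\ket{T} = c_T(m)\ket{T}$, where $c_T(m)$ is the content of the box containing $m$ in the standard tableau $T$. The YY basis is also compatible with the chain $S_1\subset S_2\subset\cdots\subset S_n$: for every $m$, the vector $\ket{T}$ lies in a copy of $V_{\mathrm{shape}(T|_{\le m})}$ under restriction to $S_m$. For the column-ordered tableau $\lambda^\diamond$, the entries $1,\dots,\lambda_1'$ fill the first column top to bottom, then the next $\lambda_2'$ entries fill the second column, and so on.

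\textbf{Step 1: the first block.} Restricting to $S_{\lambda_1'}$, the vector $\ket{\lambda^\diamond}$ lies in the irrep indexed by the shape of $\lambda^\diamond$ restricted to $\{1,\dots,\lambda_1'\}$. That shape is the single column $(1,\dots,1)$, i.e.\ the sign representation. Hence $D_\lambda(A([\lambda_1']))\ket{\lambda^\diamond} = \ket{\lambda^\diamond}$.

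\textbf{Step 2: the later blocks.} For the block $B_i = \{s+1,\dots,s+\lambda_i'\}$ with $s = \sum_{j<i}\lambda_j'$, I would use the Jucys--Murphy characterization of the antisymmetrizer, mirroring \Cref{lem:anti_decomp}. By that lemma, $\Pi^{B_i}_{\mathrm{anti}}$ is proportional to a product of operators $(1 - \sum_{a\in B_i,\,a<m}P_{(a,m)})$. Rather than chase these products directly, I would use a cleaner route: the sum of transpositions inside $B_i$ satisfies $T(B_i) \le \binom{\lambda_i'}{2}\cdot(-1)\cdot(-1)$ in spectrum. More precisely, $-T(B_i)$ has maximum eigenvalue $\binom{\lambda_i'}{2}$, attained exactly on the antisymmetric subspace of $B_i$. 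It therefore suffices to show
\[
\bra{\lambda^\diamond} T(B_i) \ket{\lambda^\diamond} = -\tbinom{\lambda_i'}{2}.
\]
To compute this, write $T(B_i) = \sum_{m\in B_i}(X_m - Y_m)$, where $Y_m = \sum_{a\le s} P_{(a,m)}$. The diagonal part $\sum_{m\in B_i} X_m$ acts by the sum of contents of column $i$, namely $\sum_{r=1}^{\lambda_i'}(i-r)$.

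\textbf{Main obstacle.} The hard part is the cross term $\sum_{m\in B_i} Y_m$. I would handle it the same way as in \Cref{lem:b_growth}. Write $\sum_{m\in B_i}Y_m = T([s+\lambda_i']) - T([s]) - T(B_i)$. Here $T([s+\lambda_i'])$ and $T([s])$ act on $\ket{\lambda^\diamond}$ by the content sums of the column-ordered subshapes, which are known because $\ket{\lambda^\diamond}$ is a YY vector. This gives a linear equation in $\bra{\lambda^\diamond}T(B_i)\ket{\lambda^\diamond}$, but it is circular, so it does not determine the value.

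So instead I would argue by induction using restriction, as in \Cref{lem:one_dimensional}:
\begin{itemize}
\item Restrict to $S_{[s]}\times S_{B_i}$ inside $S_{s+\lambda_i'}$. The component of $\ket{\lambda^\diamond}$ in the $S_{[s]}$ factor is the YY vector of the first $i-1$ columns.
\item Pieri's rule, together with the fact that appending a full column to a shape that already has at least $\lambda_i'$ rows occurs in the product with the sign representation, shows that the column-added shape arises from $V_{\text{first } i-1 \text{ cols}}\otimes V_{(1^{\lambda_i'})}$ with multiplicity one.
\item Compare the $X_m$ eigenvalues: the vector projected to this $\mathrm{sgn}$-on-$B_i$ summand is a simultaneous eigenvector of $X_1,\dots,X_{s+\lambda_i'}$ with the same eigenvalues as $\ket{\lambda^\diamond}$. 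The check that it is one is the delicate point.
\end{itemize}

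Since YY vectors are characterized by their $X_m$ eigenvalues (spectrum is simple), the projection equals $\ket{\lambda^\diamond}$. Applying this for every $i$ yields the claim.
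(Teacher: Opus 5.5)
There is a genuine gap. Your Step 1 is fine, and so is the spectral reduction to $\bra{\lambda^\diamond}T(B_i)\ket{\lambda^\diamond}=-\binom{\lambda_i'}{2}$. For the later blocks, though, the argument never closes: the step you call ``the delicate point'' is the entire content of the lemma, and the mechanism you propose does not establish it.

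Write $\kappa$ for the shape formed by the first $i$ columns of $\lambda$ and $\kappa^-$ for the first $i-1$. The projector onto the $V_{\kappa^-}\otimes\mathrm{sgn}$ summand of $S_{[s]}\times S_{B_i}$ has no a priori reason to commute with the individual Jucys--Murphy elements $X_m$, $m\in B_i$. These contain the transpositions $(a,m)$ with $a\le s$, which couple $[s]$ to $B_i$. So nothing you have said shows that the projected vector is a simultaneous eigenvector with the eigenvalues of $\lambda^\diamond$, and you have also not excluded that the projection is zero. Likewise, asserting that the $S_{[s]}$-component of $\ket{\lambda^\diamond}$ is the YY vector of $\kappa^-$ presupposes that $\ket{\lambda^\diamond}$ already lies in that summand, which is exactly what must be shown. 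The multiplicity $c^{\kappa}_{\kappa^-,(1^{\lambda_i'})}=1$ from Pieri only says the summand occurs once; it does not locate $\ket{\lambda^\diamond}$ inside it.

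What is missing is a statement about how transpositions act on this particular YY vector. The paper gets it in one line from Young's orthogonal form. In the tableau $\lambda^\diamond$, any two consecutive labels $j,j+1$ in the same block $B_i$ occupy vertically adjacent boxes of one column, so $D_\lambda((j,j+1))\ket{\lambda^\diamond}=-\ket{\lambda^\diamond}$. These adjacent transpositions generate $S_{B_i}$, so every $\sigma\in S_{B_i}$ acts by $\mathrm{sgn}(\sigma)$. Hence each block antisymmetrizer, and therefore their product, fixes $\ket{\lambda^\diamond}$.

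In your Jucys--Murphy language this is the Okounkov--Vershik computation. Swapping $j$ and $j+1$ in $\lambda^\diamond$ yields a non-standard tableau, which forces $s_j\ket{\lambda^\diamond}\propto\ket{\lambda^\diamond}$ with $s_j=(j,j+1)$. The relation $X_{j+1}s_j-s_jX_j=1$ then pins the scalar to $1/(c_{\lambda^\diamond}(j+1)-c_{\lambda^\diamond}(j))=-1$.

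Alternatively, your restriction argument can be repaired with the right multiplicity-one statement. Inside the copy of $V_\kappa$ containing $\ket{\lambda^\diamond}$, the $S_s$-irrep $V_{\kappa^-}$ occurs with multiplicity $f^{\kappa/\kappa^-}=1$, because the single-column skew shape $\kappa/\kappa^-$ has only one standard filling. By the branching property, $\ket{\lambda^\diamond}$ lies in this $V_{\kappa^-}$-isotypic component. The Pieri copy $V_{\kappa^-}\otimes\mathrm{sgn}$ is a subspace of it of the same dimension, so the two coincide and $S_{B_i}$ acts on $\ket{\lambda^\diamond}$ by the sign.
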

\begin{proof}
    By Young's orthogonal form (see, for example ~\cite[Section 3.4]{james2006representation}) for any YY basis vector indexed by Young Tableau T we have
    \begin{align}
        D_\lambda((i, i+1))\ket{T} = -\ket{T}
    \end{align}
    whenever symbols $i$ and $i+1$ appear in the same column of T. The result follows.
\end{proof}
Combining \Cref{lem:YY_column_anti,lem:one_dimensional} gives that $\ket{\lambda^\diamond} = \ket{0_\lambda}$, which is the technical fact used in the proof of~\Cref{thm:quantum_purity_amplification}. It also gives, for any $\ket{T} \in YY(V_\lambda)$ 
\begin{align}
    \Uschur^\dagger (\ketbra{\lambda^\diamond}{T} \otimes I_{W_\lambda}) \Uschur  \Pi^\lambda = \left(\Pi^{\lambda_1'}_{\mathrm{anti}} \otimes \Pi^{\lambda_2'}_{\mathrm{anti}} \otimes ... \otimes \Pi^{\lambda_{\lambda_1}'}_{\mathrm{anti}}\right)\Uschur^\dagger (\ketbra{\lambda^\diamond}{T} \otimes I_{W_\lambda}) \Uschur  \Pi^\lambda
\end{align}
and  
\begin{align}
    \text{Im}(\Uschur^\dagger (\ketbra{\lambda^\diamond}{T} \otimes I_{W_\lambda}) \Uschur  \Pi^\lambda) = \text{Im}\left(\left(\Pi^{\lambda_1'}_{\mathrm{anti}} \otimes \Pi^{\lambda_2'}_{\mathrm{anti}} \otimes ... \otimes \Pi^{\lambda_{\lambda_1}'}_{\mathrm{anti}}\right) \Pi^\lambda\right).
\end{align}
So we see the image of the first two steps of~\cite{li2024optimal} is the same as the image of~\Cref{alg:unitary-Schur-Sampling}, giving an intuitive justification for~\Cref{alg:quantum_purity_amplification}. 
\end{document}